\newcommand{\pr}[1]{\left({#1}\right)}
\newcommand{\supp}[1]{{\textnormal{support}\pr{#1}}}
\newcommand{\wy}{w_{Y|X}}
\newcommand{\wz}{w_{Z|X}}
\newcommand{\myr}[1]{{#1}}
\newcommand{\modfirst}[1]{{#1}}
\acrodef{OOK}[OOK]{On-Off Keying}
\acrodef{AEP}{Asymptotic Equipartition Property}
\acrodef{AoA}{Angle of Arrival}
\acrodef{AWGN}{Additive White Gaussian Noise}
\acrodef{BER}{Bit-Error-Rate}
\acrodef{BEC}{Binary Erasure Channel}
\acrodef{BPSK}{Binary Phase-Shift Keying}
\acrodef{BSC}{Binary Symmetric Channel}
\acrodef{CDF}[CDF]{Cumulative Distribution Function}
\acrodef{CLT}[CLT]{Central Limit Theorem}
\acrodef{CSI}[CSI]{Channel State Information}
\acrodef{DMC}[DMC]{Discrete Memoryless Channel}
\acrodef{DMS}[DMS]{Discrete Memoryless Source}
\acrodef{iid}[i.i.d.]{independent and identically distributed}
\acrodef{lhs}[l.h.s.]{left-hand-side}
\acrodef{rhs}[r.h.s.]{right-hand-side}
\acrodef{LPD}[LPD]{Low Probability of Detection}
\acrodef{LDPC}[LDPC]{Low-Density Parity-Check}
\acrodef{MAC}[MAC]{multiple-access channel}
\acrodef{MIMO}[MIMO]{Multiple-Input Multiple-Output}
\acrodef{MISO}{Multiple-Input Single-Output}
\acrodef{PDF}[PDF]{Probability Density Function}
\acrodef{PMF}[PMF]{Probability Mass Function}
\acrodef{PPM}[PPM]{Pulse Position Modulation}
\acrodef{PSD}{Power Spectral Density}
\acrodef{QPSK}{Quadrature Phase-Shift Keying}
\acrodef{SIMO}{Single-Input Multiple-Output}
\acrodef{SNR}{Signal-to-Noise Ratio}
\acrodef{wrt}[w.r.t.]{with respect to}
\acrodef{WSS}{Wide Sense Stationary}
\DeclareMathAlphabet{\eurm}{U}{eur}{m}{n}
\DeclareMathAlphabet{\mathbsf}{OT1}{cmss}{bx}{n}
\DeclareMathAlphabet{\mathssf}{OT1}{cmss}{m}{sl}
\DeclareMathAlphabet{\mathcsf}{OT1}{cmss}{sbc}{n}
\DeclareSymbolFont{bsfletters}{OT1}{cmss}{bx}{n}  
\DeclareSymbolFont{ssfletters}{OT1}{cmss}{m}{n}
\DeclareMathSymbol{\bsfGamma}{0}{bsfletters}{'000}
\DeclareMathSymbol{\ssfGamma}{0}{ssfletters}{'000}
\DeclareMathSymbol{\bsfDelta}{0}{bsfletters}{'001}
\DeclareMathSymbol{\ssfDelta}{0}{ssfletters}{'001}
\DeclareMathSymbol{\bsfTheta}{0}{bsfletters}{'002}
\DeclareMathSymbol{\ssfTheta}{0}{ssfletters}{'002}
\DeclareMathSymbol{\bsfLambda}{0}{bsfletters}{'003}
\DeclareMathSymbol{\ssfLambda}{0}{ssfletters}{'003}
\DeclareMathSymbol{\bsfXi}{0}{bsfletters}{'004}
\DeclareMathSymbol{\ssfXi}{0}{ssfletters}{'004}
\DeclareMathSymbol{\bsfPi}{0}{bsfletters}{'005}
\DeclareMathSymbol{\ssfPi}{0}{ssfletters}{'005}
\DeclareMathSymbol{\bsfSigma}{0}{bsfletters}{'006}
\DeclareMathSymbol{\ssfSigma}{0}{ssfletters}{'006}
\DeclareMathSymbol{\bsfUpsilon}{0}{bsfletters}{'007}
\DeclareMathSymbol{\ssfUpsilon}{0}{ssfletters}{'007}
\DeclareMathSymbol{\bsfPhi}{0}{bsfletters}{'010}
\DeclareMathSymbol{\ssfPhi}{0}{ssfletters}{'010}
\DeclareMathSymbol{\bsfPsi}{0}{bsfletters}{'011}
\DeclareMathSymbol{\ssfPsi}{0}{ssfletters}{'011}
\DeclareMathSymbol{\bsfOmega}{0}{bsfletters}{'012}
\DeclareMathSymbol{\ssfOmega}{0}{ssfletters}{'012}
\newcommand{\calC}{{\mathcal{C}}}
\newcommand{\calD}{{\mathcal{D}}}
\newcommand{\calE}{{\mathcal{E}}}
\newcommand{\calF}{{\mathcal{F}}}
\newcommand{\calN}{{\mathcal{N}}}
\newcommand{\calO}{{\mathcal{O}}}
\newcommand{\calP}{{\mathcal{P}}}
\newcommand{\calR}{{\mathcal{R}}}
\newcommand{\calS}{{\mathcal{S}}}
\newcommand{\calU}{{\mathcal{U}}}
\newcommand{\calV}{{\mathcal{V}}}
\newcommand{\calX}{{\mathcal{X}}}
\newcommand{\calY}{{\mathcal{Y}}}
\newcommand{\calZ}{{\mathcal{Z}}}
\newcommand{\E}[2][]{{\mathbb{E}_{#1}}{\left(#2\right)}}       
\renewcommand{\P}[2][]{{\mathbb{P}_{#1}}{\left(#2\right)}}
\newcommand{\D}[2]{{{\mathbb{D}}\!\left({#1\Vert#2}\right)}}
\newcommand{\avgD}[2]{{{\mathbb{D}}\!\left({#1\Vert#2}\right)}}
\newcommand{\avgI}[1]{{{\mathbb{I}}\!\left(#1\right)}}
\newcommand{\Hb}[1]{{\mathbb{H}_b}\left(#1\right)}
\newcommand{\abs}[1]{\ensuremath{\left|#1\right|}}              
\newcommand{\eqdef}{\ensuremath{\triangleq}}                    
\newcommand{\intseq}[2]{\ensuremath{\llbracket{#1},{#2}\rrbracket}}  
\newcommand{\indic}[1]{\ensuremath{\mathds{1}\!\left\{#1\right\}}}
\renewcommand{\leq}{\leqslant}
\renewcommand{\geq}{\geqslant}
\newcommand{\proddist}{%
  \mathchoice{\raisebox{1pt}{$\displaystyle\otimes$}}
             {\raisebox{1pt}{$\otimes$}}
             {\raisebox{0.5pt}{\scalebox{0.7}{$\scriptstyle\otimes$}}}
             {\raisebox{0.4pt}{\scalebox{0.6}{$\scriptscriptstyle\otimes$}}}}
\newcommand{\pn}{{\proddist n}}
\newtheorem{theorem}{Theorem}[section]
\newtheorem{lemma}{Lemma}[section]
\newtheorem{proposition}{Proposition}[section]
\acrodef{ROC}[ROC]{Receiver Operation Characteristic}
\acrodef{PPM}[PPM]{Pulse-Position Modulation}
\acrodef{AVC}{arbitrary varying channel}
\newcommand{\td}[1]{\widetilde{#1}}
\begin{document}
\title{Covert Capacity of Non-Coherent Rayleigh-Fading Channels}
\author{Mehrdad Tahmasbi, Anne Savard, and Matthieu R. Bloch}
\maketitle


\begin{abstract}
  The covert capacity is characterized for a non-coherent fast Rayleigh-fading wireless channel, in which a legitimate user wishes to communicate reliably with a legitimate receiver while escaping detection from a warden. It is shown that the covert capacity is achieved with an amplitude-constrained input distribution that consists of a finite number of mass points including one at zero and numerically tractable bounds are provided. It is also conjectured that distributions with two mass points in fixed locations are optimal.
\end{abstract}

\section{Introduction}
\label{sec:introduction}
In cognitive radio networks or adversarial communication settings, situations arise in which legitimate users may attempt to communicate covertly, in the sense of achieving a low probability of detection. Motivated by such applications,~\cite{Bash2013} proposed an information-theoretic model to study the throughput at which two users could reliably and covertly communicate over an \ac{AWGN} channel in the presence of an adversary who observes the transmission through another noisy channel. The optimal covert communication throughput has been shown to satisfy a \emph{square root law}, by which the maximum number of bits is on the order of $\sqrt{n}$ bits over $n$ uses of the channel. The square root law was subsequently established for some quantum channels~\cite{Bash2015a} and proved to hold without requiring secret keys for binary symmetric channels under some channel conditions~\cite{Che2013}. The exact pre-constant associated to the square root law, which plays the role of a \emph{covert capacity}, has since been nearly completely characterized for point-to-point discrete and \ac{AWGN} classical channels~\cite{Wang2016b,Bloch2015b,Tahmasbi2017}, as well as some classical-quantum channels~\cite{Wang2016c,Sheikholeslami2016}. With the notable exception of~\cite{Tahmasbi2017}, the covert capacity is typically derived when using the relative entropy as a proxy metric for covertness. Recent results~\cite{Yan2018} offer a more nuanced perspective and show that the optimal signaling scheme for covert communication over \ac{AWGN} channels at finite length is metric-dependent; nevertheless, the present work still uses relative entropy to characterize covert capacity because of its convenient mathematical properties.

For~\acp{DMC}, the covert-capacity achieving input distribution takes the form of  \modfirst{sparse signalling} corresponding to those symbols that might arouse suspicion if transmitted, are used a fraction $1/\sqrt{n}$ of the time if $n$ is the block length. Perhaps surprisingly, \modfirst{sparse signalling} does \emph{not} achieve the covert-capacity of \ac{AWGN} channels \cite{Kadampot2019}, as the optimal coding scheme exploits instead Gaussian or \ac{BPSK}~\cite{Wang2016b} signaling with an average power vanishing as $O\left(1/n\right)$. In other words, encoding information in the \emph{phase} of modulation symbols together with a diffuse power is crucial for optimality. Gaussian signaling has therefore been used to further study covertness over Gaussian and wireless channels, as in~\cite{Sobers2017,Sobers2017a} to show the benefits of uninformed jammers, in~\cite{Bash2016} to analyze the role of randomized timing, in~\cite{Yan2019} to study the effect of randomized power allocation, and in~\cite{Hu2018a} to analyze covert relaying strategies. We note that all aforementioned works exploit random Gaussian codebooks, which simplifies the covertness analysis by reducing the optimal attack to a radiometer. In contrast, we analyze covertness with non-random codebooks using the conceptual approach laid out in~\cite{Bloch2015b}.

While Gaussian codebooks provides valuable insight into the properties of coding schemes for covert communications over \ac{AWGN} channels, operating in the vanishing-power regime as suggested by the results might prove challenging. In particular, not only may phase-lock loops fail to properly track the phase of the transmitted signals but symbols with low amplitude may also be severely affected by phase noise, resulting in a significant degradation of the transmission reliability. These effects are also likely to be amplified by the presence of fading in wireless links. The objective of the present paper is to develop insight into this problem by characterizing the covert capacity of non-coherent fast Rayleigh-fading channels (Theorem~\ref{th:general-capacity-non-coherent} in Section~\ref{sec:system-model}), in which the phase is uniformly distributed over $[0;2\pi[$; although no channel state information is available to the transmitter and receivers, some symbol-level synchronization is assumed.

Our analysis of the covert capacity for non-coherent channels builds upon the ideas initially developed in~\cite{Smith1969,Smith1971} for amplitude constrained channels and extended to~\cite{Abou-Faycal2001} for memoryless non-coherent Rayleigh fading channels under an average power constraint. In particular we show that an optimal covert capacity achieving input distribution is discrete, with one mass point located at zero and subject to an amplitude constraint. While the discrete nature of the distribution may not be a surprise, the fact that the location of the mass \modfirst{points} is bounded results from the specific nature of the covertness constraint. We also conjecture that two mass points in \emph{fixed} locations is actually optimal, which is supported by numerical results although we do not have a formal proof. Overall, our results suggest that, in the presence of phase uncertainty, \modfirst{sparse signaling} might be an efficient modulation scheme for covert communication. 

Our proof technique follows for the most part the high-level approach outlined in~\cite{Smith1969,Smith1971,Abou-Faycal2001}; however, the covert communication constraint makes the analysis more intricate as the optimal capacity-achieving input distribution turns out depend on the block length. In particular, the  converse arguments for single-letterization lead to a parameter-dependent constrained optimization problem, in which the parameter should be taken to zero as the blocklength goes to infinity (see the statement of Theorem~\ref{th:general-capacity-non-coherent} and~\eqref{eq:optimation-a-nu} in Section~\ref{sec:converse-proof}). This requires us to analyze the fine dependence of the objective function and the Lagrange multipliers as a function of a parameter using ideas from sensitivity analysis~\cite{boyd2004convex}.

The rest of the paper is organized as follows. In Section~\ref{sec:system-model}, we introduce the precise model for covert communication over non-coherent Rayleigh-fading channels and discuss our characterization of the covert capacity. In Section~\ref{sec:proof-theor-non-coherent}, we develop the proof of our main result, with the achievability proof in Section~\ref{sec:achievability-proof} and the converse proof in Section~\ref{sec:converse-proof}.
\modfirst{
\section{Notation and Conventions}
Let $(\calS, \calF)$ be a measurable space. When $\calS$ is a subset of $\mathbb{R}$, we always consider the $\sigma$-algebra induced by Borel sets, which converts $\calS$ to a measurable space. Let $f:\calS\to \mathbb{R}$ be measurable and $\mu$ be a measure over $\calS\subset \mathbb{R}$. We call $f$ integrable if $\int_\calS |f| d\mu < \infty$. We then denote the Lebesgue's integral by $\int_{\calS} f(x) d\mu$.  If $\calS = ]a, b[$ and $\mu$ is the Lebesgue's measure over $\calS$, we  denote $\int_{\calS} f(x) d\mu = \int_a^b f(x) dx = \int_a^b f $. If $\mu$ is a probability measure, $X:\calS \to \mathbb{R}$ is a random variable, and $A$ is an event, we use $\P[\mu]{A}$ and $\E[\mu]{X}$ to denote $\mu(A)$ and $\int_{\calS} X(s) d\mu$, respectively. When the probability measure $\mu$ is discrete,  it can be characterized with a \ac{PMF} $P:\calS \to [0, 1]$ satisfying $\mu(A) = \sum_{s\in A} P(s)$. When the probability measure $\mu$ is continuous,  it can be characterized with a \ac{PDF} $f:\calS \to [0, \infty[$ satisfying $\mu(A) = \int_{A} f(s) ds$. We do not  distinguish between a probability measure and its \ac{PMF} or \ac{PDF} (if they exist).  The product of two measures $\mu$ and $\mu'$ is defined in the standard way and is denoted by $\mu \otimes \mu'$. We define the relative entropy between  two probability measures $\mu$ and $\mu'$ as $\D{\mu}{\mu'} \eqdef \E[\mu]{\log \frac{d\mu}{d\mu'}}$, where $\frac{d\mu}{d\mu'}$ is the Radon-Nikodym derivative. We also define the $\chi_2$ divergence as $\chi_2(\mu\|\mu') \eqdef \E[\mu']{\pr{ \frac{d\mu}{d\mu'}}^2} -1$. We define $\avgI{X;Y} \eqdef \D{\mu_{XY}}{\mu_X\otimes \mu_Y}$  where $\mu_{XY}$, $\mu_X$ and $\mu_Y$ denote the probability measures associated to $(X, Y)$, $X$, and $Y$, respectively. 

Let $\calX$ and $\calY$ be two subsets of $\mathbb{R}$. A channel $\wy$ from $\calX$ to $\calY$ is a mapping $x\mapsto \mu_x$ where $\mu_x$ is a probability measure on $\calY$. If $\mu_x$ is always continuous, we write $\wy(y|x)$ to denote the \ac{PDF} of $\mu_x$. If $\mu$ is a probability measure on $\calX$ and $\wy:x\mapsto \mu_x'$ is a channel from $\calX$ to $\calY$, we define a joint probability measure $w_{Y|X} \times \mu$  on $\calX\times \calY$ as
\begin{align}
(\mu \times w_{Y|X})(\calE) \eqdef \int \mu_x'(\calE_x) d\mu,
\end{align}
where $\calE_x \eqdef \{(\widetilde{x}, \widetilde{y})\in\calE: \widetilde{x} = x\}$. We also define the marginal probability measure induced on $\calY$ by $w_{Y|X} \circ \mu$. If $X$ and $Y$ denote the joint random variables associated to the measure $\mu\times w_{W|X}$, we allow ourselves to denote their mutual information by $I(\mu, \wy) \eqdef \avgI{X;Y}$. 

We shall use the standard asymptotic notations such as $O(\cdot)$, $o(\cdot)$, $\Omega(\cdot)$, $\omega(\cdot)$ and $\Theta(\cdot)$.
}

\section{System model and notations}
\label{sec:system-model}

We consider the fast Rayleigh-fading wireless channel illustrated in Fig.~\ref{fig:rayleigh_covert_channel}, in which at every time instant, the input-output relationships are given by
\begin{align}
  \label{eq:input-output}
  Y=H_{m}X+N_{m} \quad \textnormal{and} \quad Z=H_{w}X+N_{w},	 
\end{align}
where $X$ is the channel input, $Y$ is the received signal at the legitimate receiver, and $Z$ is the received signal at the warden attempting to detect the transmission.  The fading coefficients $H_{m}$ and $H_{w}$ are independent complex circular Gaussian random variables with zero-mean and variances $\theta_{m}^2$ and $\theta_{w}^2$, respectively. The noises $N_{m}$ and $N_{w}$ are also  independent zero-mean complex circular random variables with variance $\sigma_{m}^2$ and $\sigma_{w}^2$, respectively. Furthermore, we assume that the channels are stationary and memoryless. The fading coefficients are unknown to all parties, who only have access to their statistical distributions. Since the phase of the fading parameters is uniform, information can only be encoded into the magnitude of $X$; additionally, $\abs{Y}^2$ and $\abs{Z}^2$ become sufficient statistics for detection.
\modfirst{Hence, as shown in~\cite{Abou-Faycal2001}, upon re-labeling $\abs{X}^2$ by $X$ and the outputs $\abs{Y}^2$ and $\abs{Z}^2$ by $Y$ and $Z$, the non-coherent channel is effectively a new memoryless channel with input and output symbols in $[0, \infty[$ and transition probabilities }
\begin{align}
  w_{Y|X}(y|x)&=\frac{1}{\theta_m^2x+\sigma_m^2}\exp\left(-\frac{y}{\theta_m^2x+\sigma_m^2}\right) \quad\text{ and }\quad  w_{Z|X}(z|x)=\frac{1}{\theta_w^2x+\sigma_w^2}\exp\left(-\frac{z}{\theta_w^2x+\sigma_w^2}\right) .\label{eq:input-output-1}
\end{align}
By properly normalizing $Y$ and $Z$, we can assume that $\sigma_w = \sigma_m = 1$, and by normalizing $X$, we can further assume that $\theta_w = 1$. Thus, we can parameterize the channel by a single parameter\footnote{Note that $\theta_m$ in~\eqref{eq:input-output-1} is different from $\theta_m$ in~\eqref{eq:input-output-2}.} $\theta_m$, for which the transition probabilities are
\begin{align}
p_x(y)\eqdef  w_{Y|X}(y|x)&=\frac{1}{\theta_m^2x+1}\exp\left(-\frac{y}{\theta_m^2x+1}\right) \quad\text{ and }\quad  q_x(z)\eqdef w_{Z|X}(z|x)=\frac{1}{x+1}\exp\left(-\frac{z}{x+1}\right) .\label{eq:input-output-2}
\end{align}
Although the input and output sets of the channels are all equal to $[0, \infty[$, we distinguish them with the labels $\calX$, $\calY$, and $\calZ$ for the input set, the output of main channel, and the output of the warden's channel, respectively. 
\begin{figure}[h]
\centering
  \includegraphics[width=0.7\textwidth]{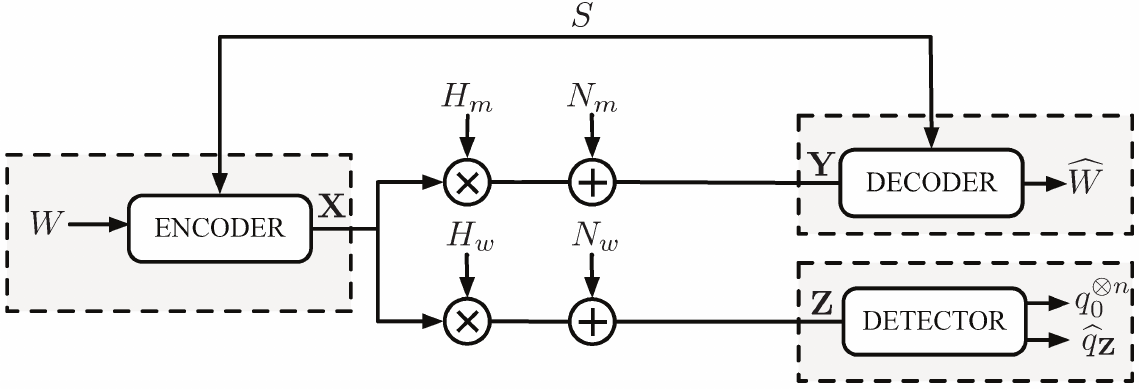}
  \caption{Covert Wireless Channel}
  \label{fig:rayleigh_covert_channel}
\end{figure}

We next formally describe  the covert communication problem in the wireless setting; as depicted in Fig.~\ref{fig:rayleigh_covert_channel},  the transmitter aims to communicate a message $W \in \intseq{1}{M_n}$ by encoding it into a sequence $\mathbf{X} = (X_1, \cdots, X_n)$ of $n$ symbols using a publicly known coding scheme. Upon observing the corresponding noisy sequence $\mathbf{Y} = (Y_1, \cdots, Y_n)$, the receiver forms an estimate $\widehat{W}$ of $W$. The encoding and decoding may also use a pre-shared secret key $S$ with an arbitrary distribution over a measurable space.\footnote{We show in our achievability proof that a key uniformly distributed over a discrete set with size $O(M_n)$ is sufficient to achieve the covert capacity.} The objective of the warden is to detect the presence of a transmission based on its noisy observation $\mathbf{Z} = (Z_1, \cdots, Z_n)$. The requirements for reliable and covert communication may be formalized as follows. We let $\widehat{q}_{\mathbf{Z}}$ denote the output distribution induced by the coding scheme and $q_0^{\otimes n}$ the product output distribution expected in the absence of communication when the channel input is set to $x=0$. The performance of an $(M_n,n)$ code transmitting one of $M_n$ message over $n$ channel uses is then measured in terms of the average probability of error $\mathbb{P}(\widehat{W} \neq W)$ and in terms of the relative entropy $\mathbb{D}(\widehat{q}_{\mathbf{Z}}\|q_0^{\otimes n})$.\footnote{\modfirst{The constraint ${\mathbb{D}(\widehat{q}_{\mathbf{Z}}\|q_0^{\otimes n})} \leq \delta$ ensures that, regardless of the test performed by the adversary, the sum of the probability of missed detection and false alarm is lower-bounded by $1-\sqrt{\delta}$. Please refer to \cite[Appendix~A]{Bloch2015b} for a detailed discussion of the operational meaning of an upper-bound on the relative entropy.}}\footnote{\modfirst{    The choice of this specific relative entropy to measure covertness is driven in part by the ease of analysis using channel resolvability techniques. One could of course consider alternative metrics, such as variational distance or a relative entropy with a reversed order of arguments, as discussed in~\cite{Bloch2015b, Tahmasbi2017}. While the operational meaning of these other metrics remains the same, the analysis and the exact dependence on the constraint $\delta$ is metric-specific.}} \modfirst{Let $\delta>0$.} We say that a covert throughput $R$  is \modfirst{$\delta$-}achievable if there exist $(M_n,n)$ codes of increasing block length $n$ such that 

\begin{align}
  \label{eq:throughput}
\log M_n = \omega(\log n), \quad   \lim_{n\rightarrow\infty}\modfirst{\mathbb{P}(\widehat{W} \neq W)}=0,\quad   \limsup_{n\rightarrow\infty} \modfirst{\mathbb{D}(\widehat{q}_{\mathbf{Z}}\|q_0^{\otimes n})} \leq \delta, \quad \liminf_{n \rightarrow \infty}\frac{\log M_n}{\sqrt{n \modfirst{\mathbb{D}(\widehat{q}_{\mathbf{Z}}\|q_0^{\otimes n})}}}\geq R.
\end{align}
The covert capacity, $C_{\text{no-CSI}}\modfirst{(\delta)}$, is defined as the supremum of all  \modfirst{$\delta$-}achievable covert throughputs. Note that we do not specify $\delta$ in our terminology of achievable throughput, since it turns out that the normalization of $\log M_n$ in~(\ref{eq:throughput}) removes the dependence on $\delta$. 


\modfirst{
\begin{theorem}
\label{th:general-capacity-non-coherent}

Let $\widetilde{\Omega}^{>0}$ be the set of discrete probability measures over $]0, 1[$ with a finite number of mass points. $C_{\mathrm{no-CSI}}(\delta)$ is independently of $\delta$ equal to

\begin{align}
\label{eq:cov-cap}
\sup_{\mu \in \widetilde{\Omega}^{>0}}  \sqrt{2}\frac{\E[\mu]{\theta_m^2 X - \log\pr{1+\theta_m^2 X}}}{\sqrt{\E[\mu\otimes \mu]{\frac{X_1X_2}{1-X_1X_2}}}}.
\end{align}
In addition, the following simple bounds hold:
\begin{align}
\max_{\widetilde{x}\in]0, 1]}\td{x}^{-1}\sqrt{2(1-\widetilde{x}^2)} \pr{\theta_m^2\widetilde{x} - \log(1+\theta_m^2 \widetilde{x})}\leq C_{\textnormal{no-CSI}}(\delta)  \leq  \sqrt{2} \theta_m^2.\label{eq:bounds}
\end{align}
\end{theorem}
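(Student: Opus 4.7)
The plan is to establish~\eqref{eq:cov-cap} by matching achievability and converse arguments and to derive the explicit bounds~\eqref{eq:bounds} from elementary estimates on the variational formula.

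For achievability, I would adapt the low-power sparse-signaling and channel resolvability construction of~\cite{Bloch2015b} to the continuous-output Rayleigh setting. Fix $\nu\in\widetilde{\Omega}^{>0}$ and use per-symbol input distribution $\pi_n \eqdef (1-\alpha_n)\delta_0+\alpha_n\nu$ with $\alpha_n=\Theta(n^{-1/2})$ tuned so that $\D{\widehat{q}_{\mathbf{Z}}}{q_0^{\otimes n}}\to\delta$. Taylor expansions of~\eqref{eq:input-output-2} around $\alpha_n=0$ give
\begin{align*}
\D{\wz\circ\pi_n}{q_0}=\tfrac{\alpha_n^2}{2}\E[\nu\otimes\nu]{\tfrac{X_1X_2}{1-X_1X_2}}+o(\alpha_n^2), \quad \E[\pi_n]{\D{p_X}{p_0}}=\alpha_n\E[\nu]{\theta_m^2 X-\log(1+\theta_m^2 X)}.
\end{align*}
Combining a channel resolvability bound (to control $\D{\widehat{q}_{\mathbf{Z}}}{q_0^{\otimes n}}\leq\delta$) with a random channel-coding argument (to relate the rate to the mutual information, which equals $\E[\pi_n]{\D{p_X}{p_0}}(1+o(1))$ in this regime) gives $\log M_n\sim n\alpha_n\E[\nu]{\theta_m^2 X-\log(1+\theta_m^2 X)}$; normalizing by $\sqrt{n\delta}$ and taking the supremum over $\nu$ yields the achievability of~\eqref{eq:cov-cap}.

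For the converse, I would begin with Fano's inequality, single-letterize via the convexity of mutual information to get $\log M_n\leq n\E[\overline{P}_X]{\D{p_X}{p_0}}(1+o(1))$ for the time-averaged input distribution $\overline{P}_X$, and invoke the second-order expansion $\D{\mu'}{q_0}=\tfrac{1}{2}\chi_2(\mu'\|q_0)(1+o(1))$ together with the convexity of $\chi_2$ to translate the covertness constraint into $\chi_2(\wz\circ\overline{P}_X\|q_0)\leq 2\delta/n\,(1+o(1))$. Dividing these bounds yields
\begin{align*}
\frac{\log M_n}{\sqrt{n\delta}}\leq\sqrt{2}\frac{\E[\overline{P}_X]{\D{p_X}{p_0}}}{\sqrt{\chi_2(\wz\circ\overline{P}_X\|q_0)}}(1+o(1)).
\end{align*}
The main challenge is to argue that the supremum of this single-letter ratio is attained by some $\nu\in\widetilde{\Omega}^{>0}$. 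Following~\cite{Smith1971,Abou-Faycal2001}, I would write the KKT conditions of the resulting parameter-dependent constrained maximization, use the analyticity of the objective in the mass-point locations together with the identity theorem to rule out continuous support, and invoke the divergence of $\chi_2(q_x\|q_0)=x^2/(1-x^2)$ at $x=1$ to confine the support to $]0,1[$. Since the underlying optimization parameter vanishes with $n$, tracking how the optimal distribution and the associated Lagrange multipliers depend on that parameter requires the sensitivity-analysis framework of~\cite{boyd2004convex}.

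The bounds~\eqref{eq:bounds} follow from elementary manipulations of~\eqref{eq:cov-cap}. The lower bound is obtained by evaluating the objective at $\nu=\delta_{\widetilde{x}}$, which yields exactly $\widetilde{x}^{-1}\sqrt{2(1-\widetilde{x}^2)}(\theta_m^2\widetilde{x}-\log(1+\theta_m^2\widetilde{x}))$, optimized over $\widetilde{x}\in]0,1]$. For the upper bound, combine $\theta_m^2x-\log(1+\theta_m^2x)\leq\theta_m^2x$ (since $\log(1+u)\geq 0$ for $u\geq 0$) with $\E[\nu\otimes\nu]{X_1X_2/(1-X_1X_2)}\geq\E[\nu\otimes\nu]{X_1X_2}=(\E[\nu]{X})^2$ (valid because $X_1X_2\in]0,1[$), which gives a ratio at most $\sqrt{2}\theta_m^2$. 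The main obstacle will be the converse discretization step: classical Smith-style arguments apply to a single optimization with fixed parameters, so adapting them to the present setting, in which the power-scaling parameter tends to zero with $n$, requires a delicate sensitivity analysis of how the KKT conditions evolve as the feasible set degenerates.
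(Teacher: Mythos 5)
Your overall strategy — channel resolvability plus random coding for achievability, Fano plus single-letterization plus a Smith/Abou-Faycal-style KKT and analyticity argument for the converse, and then direct evaluation for the two bounds in~\eqref{eq:bounds} — matches the paper's proof architecture. The achievability sketch with $\pi_n = (1-\alpha_n)\delta_0 + \alpha_n\nu$ and the quadratic expansion of $\D{\wz\circ\pi_n}{q_0}$ in $\alpha_n$ is essentially the paper's Step 2, and your derivation of the upper bound $\sqrt{2}\theta_m^2$ via $\theta_m^2 x - \log(1+\theta_m^2 x) \leq \theta_m^2 x$ and $\E[\nu\otimes\nu]{X_1X_2/(1-X_1X_2)} \geq (\E[\nu]{X})^2$ is in fact more direct than the paper's route through a sensitivity bound on $A(\nu)/\sqrt{\nu}$; both are correct.

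However, there is a genuine gap in the converse at precisely the step you flag as the ``main challenge.'' You write that one can ``invoke the divergence of $\chi_2(q_x\|q_0)=x^2/(1-x^2)$ at $x=1$ to confine the support to $]0,1[$,'' but this is not an argument. The covertness constraint is $\D{\wz\circ\mu}{q_0}\leq\nu$, and this relative entropy is perfectly finite for distributions with mass at or beyond $x=1$; nothing in the problem a priori forbids the time-averaged input law from placing probability outside $[0,1[$. Moreover, your earlier step — replacing $\D{\cdot}{q_0}$ by $\tfrac{1}{2}\chi_2(\cdot\|q_0)(1+o(1))$ — is only valid once the support has already been shown to lie in a compact subset of $[0,1[$, since otherwise $\chi_2(\wz\circ\mu\|q_0)$ is infinite and the expansion is vacuous; so the $\chi_2$ approximation cannot be what forces the amplitude bound. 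The paper has to work much harder here: it first shows (Lemma~\ref{lm:finite-mass-points}, via the KKT conditions) that the optimal $\mu^*_\nu$ has finitely many mass points with a finite supremum; then (Theorem~\ref{th:bounded-supp} with Lemma~\ref{lm:a-prob-bound-exp}) that the tail $\mu^*_\nu(]a,\infty[)$ decays exponentially so that clipping at any $a>1$ loses only $o(\sqrt{\nu})$ in the objective; and only then (via Lemma~\ref{lm:chi-2-bound} and a further truncation to $]0,1-\zeta]$) is the $\chi_2$ substitution justified and the restriction to $\widetilde{\Omega}^{>0}$ obtained. Your proposal collapses this entire chain into a one-line appeal to the shape of $\chi_2(q_x\|q_0)$; to close the gap you would need to reproduce the tail estimate and the clipping argument, both of which depend on the KKT equality $\gamma(\nu)\,x^*/(1+x^*) = \theta_m^2/(1+\theta_m^2 x^*)$ and on the limit $\gamma(\nu)\to\infty$ as $\nu\to 0^+$ from the sensitivity analysis.
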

}

\modfirst{Theorem~\ref{th:general-capacity-non-coherent} provides useful insight into the problem of covert communication over non-coherent channels in several regards. First, a straightforward calculation shows that $\D{p_x}{p_0} = \theta_m^2 x - \log(1+\theta_m^2x)$ and $\chi_2(\wz \circ \mu\|q_0) = \E[\mu\otimes \mu]{\frac{X_1X_2}{1-X_1X_2}}$. The expression in \eqref{eq:cov-cap} is therefore a counterpart of \cite[Corollary 3]{Bloch2015b} and \cite[Eq. (28)]{Wang2016b}. Second,  Theorem~\ref{th:general-capacity-non-coherent} shows that we may restrict the signaling schemes for covert communications to finite and amplitude bounded constellations; while the finite nature of the constellation was somewhat expected from the non-coherent nature of the channel, the bound on the amplitude of the points is perhaps more surprising as it was not imposed a priori. 
We numerically evaluate and plot  in Fig.~\ref{fig:evaluation} \eqref{eq:cov-cap} when \emph{the number of mass points in $\mu$ is fixed} using a brute-force search. 
     Based on our numerical results, we conjecture that two mass points and \modfirst{\ac{OOK}} signaling is optimal for covert communication.}


\begin{figure}[h]
  \centering
  \includegraphics[width=.7\linewidth]{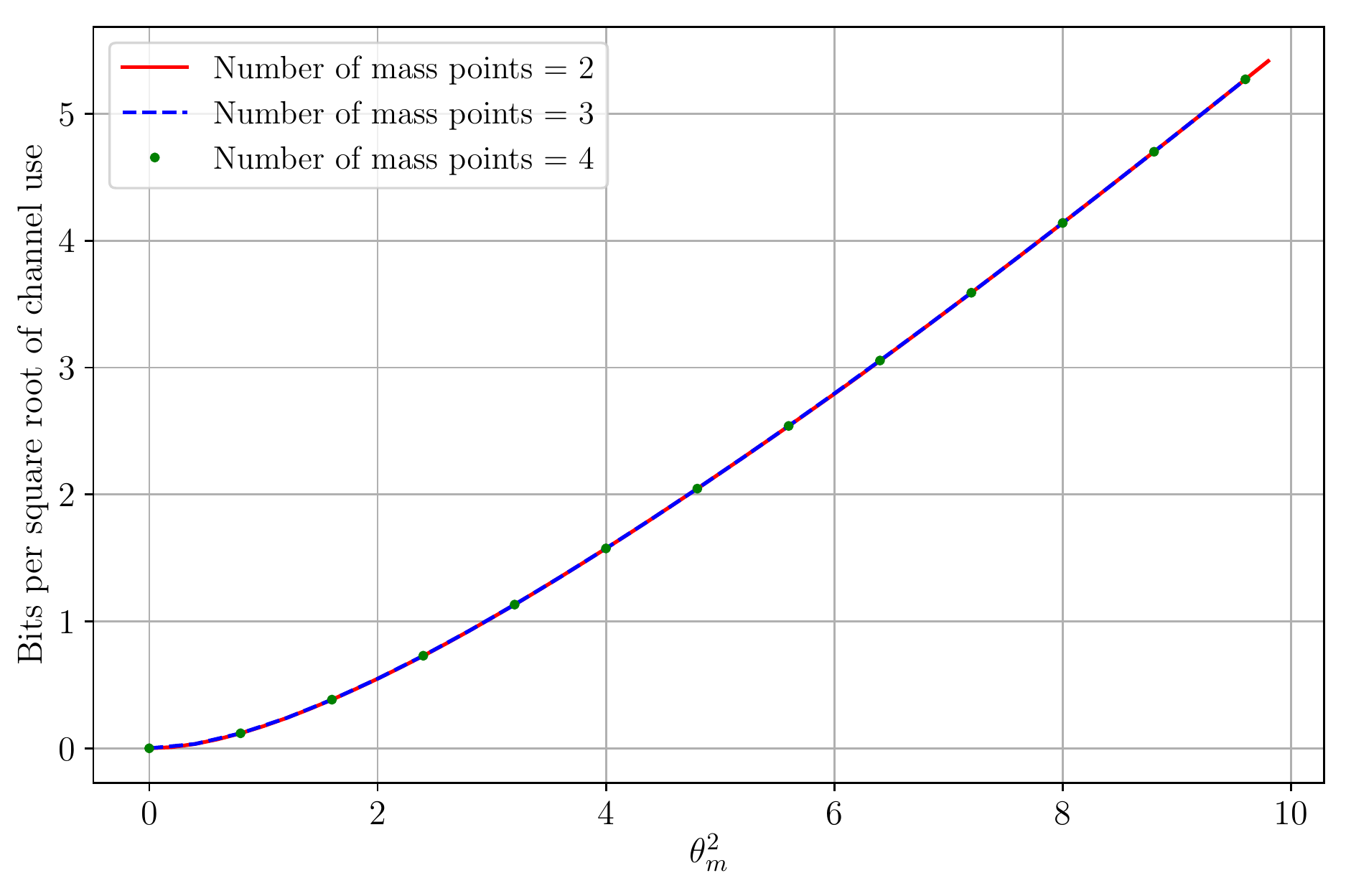}
  \caption{Numerical evaluation of bounds on covert capacity. }
  \label{fig:evaluation}
\end{figure}

\section{Proof of Theorem~\ref{th:general-capacity-non-coherent}}
\label{sec:proof-theor-non-coherent}

\subsection{Achievability proof}
\label{sec:achievability-proof}
\modfirst{
We prove the achievability result in two steps.
\begin{enumerate}
\item Let $\{\mu_n\}_{n\geq 1}$ be a sequence of probability measures over $\calX$  such that for all $n$, 
 \begin{inparaenum}[(i)]
  \item for some $\widetilde{x} > 0$, $\modfirst{\sup}(\supp{\mu_n})\leq \widetilde{x}$; and,
  \item  $\limsup_{n\to \infty} n\D{\wz\circ \mu_n}{q_0} = \delta$.
  \item $nI(\mu_n, \wy) = \omega(\log n)$.
\end{inparaenum}
We then show for all $\zeta>0$ that the cover throughput
\begin{align}
(1-\zeta)\liminf_{n\to \infty} \frac{I(\mu_n, w_{Y|X})}{\sqrt{\D{w_{Z|X}\circ \mu_n}{q_0}}}
\end{align}
is $\delta$-achievable.
\item Let $\mu\in \td{\Omega}^{>0}$. We construct for an arbitrary $\delta>0$, a sequence $\{\mu_n\}_{n\geq 1}$ satisfying 
\begin{align}
\liminf_{n\to \infty} \frac{I(\mu_n, w_{Y|X})}{\sqrt{\D{w_{Z|X}\circ \mu_n}{q_0}}} =  \sqrt{2}\frac{\E[\mu]{\theta_m^2 X - \log\pr{1+\theta_m^2 X}}}{\sqrt{\E[\mu\otimes \mu]{\frac{X_1X_2}{1-X_1X_2}}}},
\end{align}
in addition to the conditions of step 1.
\end{enumerate}}
\subsubsection{Step one: a random coding argument}
\modfirst{Although we pursue the same approach as in \cite{Bloch2015b, WWZ15} in this step, the result  requires a proof of its own because of the continuous nature of the channels. Let $\{\mu_n\}_{n\geq 1}$ be a sequence of probability measures as described earlier, i.e., for all $n$, 
 \begin{inparaenum}[(i)]
  \item for some $\widetilde{x} > 0$, $\modfirst{\sup}(\supp{\mu_n})\leq \widetilde{x}$; and,
  \item  $\limsup_{n\to \infty} n\D{\wz\circ \mu_n}{q_0} = \delta$.
  \item $I(\mu_n, \wy)n = \omega(\log n)$
\end{inparaenum} For any $\zeta > 0$, we shall prove the existence of a sequence of codes $\{\calC_n\}_{n\geq1}$ achieving the covert throughput $(1-\zeta)\liminf_{n\to \infty} \frac{I(\mu_n, w_{Y|X})}{\sqrt{\D{w_{Z|X}\circ \mu_n}{q_0}}}$ with the relative entropy constraint $\delta$.}  We use a random coding argument and in particular, fix some $n$, and consider a random encoder $F:\intseq{1}{K_n}\times\intseq{1}{M_n} \to \calX^n$ whose codewords are \ac{iid} according to $\mu_n^{\pn}$. The transmitter uses the message $W$ and the shared key $S$ together with the encoder $F$ to obtain the codeword $F(S, W)$ that is transmitted through the channel. By~\cite{Polyanskiy2010}, for any $\gamma>0$, we upper-bound the expected value with respect to random coding of the probability of error of an optimal decoder by
\begin{align}
\label{eq:rel-oneshot}
\E[F]{\P{W\neq \widehat{W}}} \leq \P[w_{Y|X}^{\pn} \times \mu_n^{\pn}]{\log \frac{w_{Y|X}^\pn(\mathbf{Y}|\mathbf{X})}{(w_{Y|X}^\pn\circ \mu_n^\pn)(\mathbf{Y})}\geq \gamma} + M_ne^{-\gamma}.
\end{align}
Applying a Chernoff bound to the first term of the right hand side of the above inequality, for all $s>0$, we obtain 
\begin{align}
 \P[w_{Y|X}^{\pn} \times \mu_n^{\pn}]{\log \frac{w_{Y|X}^{\pn}(\mathbf{Y}|\mathbf{X})}{(w_{Y|X}^\pn\circ \mu_n^\pn)(\mathbf{Y})}\geq \gamma} 
 &\modfirst{\leq}  \pr{ \E[w_{Y|X} \times \mu_n]{\pr{\frac{p_X(Y)}{(w_{Y|X}\circ \mu_n)(Y)}}^s}}  ^n\exp\pr{-s\gamma}. \label{eq:rel-exponent}
\end{align}
For any probability measure $\mu$ on $\calX$, upon defining
\begin{align}
\phi_{\text{rel}}(s, \mu) \eqdef -\log\pr{\E[w_{Y|X} \times \mu]{\pr{\frac{p_X(Y)}{(w_{Y|X}\circ \mu)(Y)}}^s}},
\end{align}
we can re-write the right-hand side of \eqref{eq:rel-exponent} as
\begin{align}
\label{eq:pe-phi}
\exp\pr{-n\phi_{\text{rel}}(s, \mu_n) - s\gamma}.
\end{align} 
To upper-bound the above expression, we need the following technical lemma describing the behavior of $\phi_{\text{rel}}(s, \mu)$ for small $s$.
\begin{lemma}
\label{lm:exponent-rel}
For all $\widetilde{x} > 0$, there exist  constants $B>0$, $\widetilde{s} > 0$, and $\widetilde{A}>0$, such that for all probability measures $\mu$ \modfirst{and $\nu>0$} with $\sup(\supp{\modfirst{\mu}}) \leq \widetilde{x}$ and $\D{\wz \circ \mu}{q_0} \leq \nu$, all $s \in ]0, \widetilde{s}]$, and all $A\in [\widetilde{A}, \infty)$, we have
\begin{align}
\phi_{\textnormal{rel}}(s, \mu) \geq -s I(\mu, w_{Y|X}) -  B\pr{\pr{(2\sqrt{\nu} + \nu)e^{2A\widetilde{x} + 2A}A^2 + A^2 e^{-A}}s^2 + s^3}.
\end{align}
\end{lemma}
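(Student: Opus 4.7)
The lemma is a quantitative second-order Taylor expansion of the log-MGF of the information density $L(X,Y) \eqdef \log \frac{p_X(Y)}{(w_{Y|X}\circ\mu)(Y)}$ under the joint law $\mu \times w_{Y|X}$, about $s = 0$. Since $\mathbb{E}[L(X,Y)] = I(\mu, w_{Y|X})$, the linear term of the expansion automatically produces $-sI(\mu, w_{Y|X})$; the rest of the argument consists in controlling the quadratic-and-higher remainder by the bracket announced in the claim. I write $e^{sL} = 1 + sL + r(sL)$ with $r(v) \eqdef e^v - 1 - v$, which satisfies $0 \leq r(v) \leq \tfrac{v^2}{2}e^{|v|}$, and combine this with the elementary bound $-\log(1+u) \geq -u$ (valid for $u > -1$) to obtain
\begin{align*}
\phi_{\mathrm{rel}}(s,\mu) = -\log\pr{1 + s\,I(\mu, w_{Y|X}) + \mathbb{E}[r(sL)]} \geq -s\,I(\mu, w_{Y|X}) - \mathbb{E}[r(sL)].
\end{align*}
It then suffices to upper-bound $\mathbb{E}[r(sL)]$. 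Using the explicit Rayleigh form of $p_x$, a direct calculation gives the pointwise sandwich $\tfrac{e^{-y}}{\theta_m^2\widetilde{x}+1} \leq p_x(y),\,(w_{Y|X}\circ\mu)(y) \leq 1$ for $x \in [0, \widetilde{x}]$ and $y \geq 0$, hence the crucial pointwise control $|L(x,y)| \leq y + \log(\theta_m^2 \widetilde{x}+1)$.

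I then split $\mathbb{E}[r(sL)] = \mathbb{E}[r(sL)\mathbb{1}\{Y \leq A\}] + \mathbb{E}[r(sL)\mathbb{1}\{Y > A\}]$ and treat each piece separately. For the tail piece, the combination of $r(sL) \leq \tfrac{s^2 L^2}{2}e^{s|L|}$, the linear pointwise bound on $|L|$, and the uniform exponential tail estimate $\mathbb{P}\{Y > A\} = O(e^{-A})$ (valid because $Y|X=x$ is exponential with mean at most $\theta_m^2\widetilde{x}+1$, uniformly over all admissible $\mu$) yields the $A^2 e^{-A} s^2$ contribution. For the bounded piece, choosing $\widetilde{s}$ so that $s\pr{A + \log(\theta_m^2\widetilde{x}+1)} \leq 2A(\widetilde{x}+1)$ for all $s \in ]0, \widetilde{s}]$ makes $e^{s|L|}$ uniformly dominated by a constant multiple of $e^{2A(\widetilde{x}+1)}$, which matches the $e^{2A\widetilde{x}+2A}$ prefactor in the claim; everything then reduces to bounding $\mathbb{E}[L^2\mathbb{1}\{Y \leq A\}]$.

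The heart of the argument is to show that $\mathbb{E}[L^2\mathbb{1}\{Y \leq A\}] = O\pr{(2\sqrt{\nu}+\nu)\,A^2}$ from the warden-side constraint $\D{w_{Z|X}\circ\mu}{q_0} \leq \nu$. The natural decomposition is
\begin{align*}
L(X,Y) = \log \frac{p_X(Y)}{p_0(Y)} - \log \frac{(w_{Y|X}\circ\mu)(Y)}{p_0(Y)},
\end{align*}
where both logarithms vanish at $\mu = \delta_0$. Pinsker's inequality on the warden's side supplies $\|w_{Z|X}\circ\mu - q_0\|_1 \leq \sqrt{2\nu}$, and since $w_{Y|X}$ and $w_{Z|X}$ share the same exponential structure (differing only through the scale parameter $\theta_m^2$), this $L^1$ closeness transfers to a $\sqrt{\nu}$-type control of $w_{Y|X}\circ\mu$ relative to $p_0$ and of $\mathbb{E}_\mu[X]$. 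Combined with the pointwise bound on $|L|$ already obtained, this is what converts into the advertised $L^2$ estimate on $\{Y \leq A\}$, in which the $\nu$ contribution from a direct second moment and the $\sqrt{\nu}$ contribution from the cross term assemble into the $(2\sqrt{\nu}+\nu)$ prefactor.

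Collecting the two pieces produces the $s^2$ bracket, and the residual $B s^3$ term absorbs the next order of the Taylor expansion of $-\log(1+\cdot)$, whose coefficient is uniform in $A$ and $\nu$ once $\widetilde{s}$ is fixed. The main obstacle I anticipate is the penultimate step: distilling a useful second-moment bound on the information density of the legitimate channel from a scalar, warden-side relative-entropy budget. Because the output alphabet is unbounded, a direct total-variation-to-supremum argument is unavailable, and the analysis must carefully interlace the exponential-tail decay of $Y$, Pinsker's inequality, and the shared exponential structure of the two fading channels.
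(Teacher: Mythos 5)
Your expansion via the algebraic identity $e^{sL} = 1 + sL + r(sL)$ with $r(v) = e^v - 1 - v$, followed by $-\log(1+u) \geq -u$, is a genuinely different and arguably lighter route to the linear term $-sI(\mu,\wy)$ than the paper's: the paper instead differentiates $\phi_{\textnormal{rel}}$ three times in $s$, justifies each interchange of derivative and integral by an inductive application of the Leibniz rule, and then invokes Taylor's theorem with Lagrange remainder. Your route sidesteps that entire interchange-of-limits argument, which is a real simplification.

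The gap sits exactly where you flag the ``main obstacle,'' and it is two-fold. First, your plan spends the whole $e^{2A\widetilde{x}+2A}$ budget on the weight $e^{s|L|}$ from $r(v) \leq \tfrac{v^2}{2}e^{|v|}$, so it then requires $\E[\wy\times\mu]{L^2\,\indic{Y\leq A}} = O\big((2\sqrt{\nu}+\nu)A^2\big)$ with \emph{no} exponential factor. That is stronger than what the paper actually proves in Lemma~\ref{lm:density-2m}, namely $\E[\wy\times\mu]{L^2\,\indic{Y\leq A}} = O\big((2\sqrt{\nu}+\nu)(e^{A\widetilde{x}}A+\theta_m^2)^2 e^{2A}\big)$, and the exponential there is not slack: it is forced by the two pointwise ingredients the paper uses, the derivative estimate $\big|\tfrac{(\wy\circ\mu)(y)}{p_x(y)}-1\big|\leq (x+\E[\mu]{X})\big((1+\theta_m^2\widetilde{x})e^{y\widetilde{x}}y+\theta_m^2\big)$ and the inequality $\log^2 u\leq(1-u)^2(1+u^{-2})$, whose $u^{-2}$ term produces $\big(\tfrac{p_x}{\wy\circ\mu}\big)^2 = O(e^{2A})$ on $[0,A]$. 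Second, your proposed mechanism for injecting the $(2\sqrt{\nu}+\nu)$ factor—Pinsker on the warden's output followed by an $L^1$-to-moment transfer—is blocked for exactly the reason you anticipate: $\calZ$ is unbounded, so $\E[\mu]{X} = \int z\,(\wz\circ\mu - q_0)(z)\,dz$ cannot be controlled by $\|\wz\circ\mu - q_0\|_1$ alone. The paper never uses Pinsker here; Lemma~\ref{lm:expected-x-bound} obtains $\E[\mu]{X}\leq 2\sqrt{\nu}+\nu$ by adding the two inequalities $-\D{\wz\circ\mu}{q_x}\leq 0$ and $\D{\wz\circ\mu}{q_0}\leq\nu$ so that the output-entropy term cancels, leaving a scalar inequality in $\E[\mu]{X}$ and $x$ that is optimized at $x = \sqrt{\nu}$. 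Without analogues of both of these ingredients your proposal does not close; and a minor additional point is that the tail estimate requires more than $\P{Y>A}=O(e^{-A})$—one must integrate $Y^2 e^{sY}$ against the output density on $[A,\infty)$ and use that $\wy\circ\mu$ decays at rate at least $1/(1+\theta_m^2\widetilde{x})$.
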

\begin{proof}
See Appendix~\ref{sec:error-expon-analys}.
\end{proof}
Applying Lemma~\ref{lm:exponent-rel} to \eqref{eq:pe-phi}, we upper-bound~\eqref{eq:rel-exponent} by
\begin{align}
\exp\pr{-n\pr{ -s I(\mu_n, w_{Y|X}) -  B\pr{\pr{\pr{2\sqrt{\frac{\delta}{n}} + \frac{\delta}{n}}e^{2A\widetilde{x} + 2A}A^2 + A^2 e^{-A}}s^2 + s^3}} -s\gamma }.
\end{align}
For $n$ large enough, we then set $A = \log n/(2(4\widetilde{x} +1))$ to ensure
\modfirst{
\begin{align}
B\pr{\pr{2\sqrt{\frac{\delta}{n}} + \frac{\delta}{n}}e^{2A\widetilde{x} + 2A}A^2 + A^2 e^{-A}} 
&= O\pr{A^2\pr{n^{-\frac{1}{2}} e^{2A\td{x} + 2A} e^{-A} }}\\
&= O\pr{\log^2n \pr{n^{-\frac{1}{2} + \frac{2\td{x} + 2}{2(4\td{x} + 1)}} + n^{-\frac{1}{2(4\td{x} + 1)}}}}\\
&= O\pr{n^{-\frac{1}{2(4\widetilde{x} + 1)}} \log^2n},\label{eq:exp-second}
\end{align}

where the constant hidden in $O\pr{\cdot}$ depends on $\widetilde{x}$, $\delta$, and the channel. Therefore, we have for $s=n^{-\beta}$, 
\begin{align}
 B\pr{\pr{\pr{2\sqrt{\frac{\delta}{n}} + \frac{\delta}{n}}e^{2A\widetilde{x} + 2A}A^2 + A^2 e^{-A}}s^2 + s^3}  
&\stackrel{(a)}{=} O\pr{ n^{-\frac{1}{2(4\widetilde{x} + 1)}} \log^2ns^2 + s^3}\\
&=  O\pr{ n^{-\frac{1}{2(4\widetilde{x} + 1)} - 2\beta} \log^2n + n^{-3\beta}}\\
&= O\pr{n^{-\min(3\beta, 2\beta + 1/(2(4\widetilde{x} +1)))  }\log^2n},\label{eq:exp-second-2}
\end{align}
where $(a)$ follows from \eqref{eq:exp-second}. The expression in \eqref{eq:exp-second-2} will be $o(sI(\mu_n, \wy))$ when  $I(\mu_n, \wy) = \Omega\pr{n^{-\frac{1}{2}}}$ and $\max(1/4, 1/2-1/(2(4\widetilde{x} +1))) < \beta$.} Moreover, if we choose $\beta < 1/2$, which is feasible with the previous constraint, we guarantee that $snI(\mu_n, \wy) \modfirst{\geq n^{c}}$ for some $c>0$ and $n$ large enough. Finally, for $\gamma = (1-\zeta/2) I(\mu_n, \wy) n$ and $\log M_n = (1-\zeta) I(\mu_n, \wy) n$, we have by \eqref{eq:rel-oneshot}
\begin{align}
\E[F]{\P{W \neq \widehat{W}}}  
&\leq 
\exp\pr{-(1+o(1))\frac{\zeta}{2} sI(\mu_n, \wy)n} + \exp\pr{-\frac{\zeta}{2} I(\mu_n, \wy) n}\\
&\leq 2\exp\pr{-\modfirst{\zeta}n^c}.
\end{align}
This completes the reliability part of the proof.

We now proceed to the resolvability part. \modfirst{Recall that we denote the induced distribution at the output of the warden's channel by $\widehat{p}_{\mathbf{Z}} \eqdef \frac{1}{M_nK_n} \sum_{s=1}^{K_n}\sum_{w=1}^{M_n} \wz^\pn(\mathbf{z}|F(s, w))$, where $M_n$ and $K_n$ are the message size and the key size, respectively.} \myr{By a modification of \cite[Equation (194)]{hayashi2015quantum}, we know that for all $s\in[0, 1]$,
\begin{align}
\E[F]{\D{\widehat{p}_{\mathbf{Z}}}{\pr{\wz\circ \mu_n}^\pn}} 
&\leq \frac{1}{s}\exp\pr{-s\log (M_nK_n) - n \phi_{\text{res}}(s, \mu_n)},
\end{align}
where
\begin{align}
\phi_{\text{res}}(s, \mu)\eqdef - \log \pr{ \E[\wz\times \mu]{\pr{\frac{q_X(Z)}{(\wz\circ \mu)(Z)}}^s}}.
\end{align}
Since the above function is the same as $\phi_{\text{rel}}$ except that $\wy$ is replaced by $\wz$, $\wz$ is a special case of $\wy$ for $\theta_m=1$, and we choose $s$ in the reliability part so that $\log \frac{1}{s} = O(\log n)$, we can follow the same approach to obtain for some $\widetilde{c}>0$,
\begin{align}
\E[F]{\D{\widehat{p}_{\mathbf{Z}}}{\pr{\wz\circ \mu_n}^\pn}} \leq 2\exp\pr{-n^{\widetilde{c}}},
\end{align}
if $\log M_n + \log K_n \geq (1+\zeta) I(\mu_n, \wz)n$. }Therefore, the expected value of the covertness of the random code is
\begin{align}
\E[F]{\D{\widehat{p}_{\mathbf{Z}}}{q_0^\pn}} 
&= \E[F]{\int_{\mathbb{R}^n} \widehat{p}_{\mathbf{Z}}(\mathbf{z})\log \frac{\widehat{p}_{\mathbf{Z}}(\mathbf{z})}{q_0^\pn(\mathbf{z})} d\mathbf{z}}\\
&= \E[F]{\int_{\mathbb{R}^n} \widehat{p}_{\mathbf{Z}}(\mathbf{z})\log \frac{\widehat{p}_{\mathbf{Z}}(\mathbf{z})}{(\wz\circ \mu_n)^\pn(\mathbf{z})} d\mathbf{z}+\int_{\mathbb{R}^n} \widehat{p}_{\mathbf{Z}}(\mathbf{z})\log \frac{(\wz\circ \mu_n)^\pn(\mathbf{z})}{q_0^\pn(\mathbf{z})} d\mathbf{z}}\\
&= \E[F]{\D{\widehat{p}_{\mathbf{Z}}}{\pr{\wz\circ \mu_n}^\pn}}+ \E[F]{\int_{\mathbb{R}^n} \widehat{p}_{\mathbf{Z}}(\mathbf{z})\log \frac{(\wz\circ \mu_n)^\pn(\mathbf{z})}{q_0^\pn(\mathbf{z})} d\mathbf{z}}\\
&\leq 2\exp\pr{-n^{\widetilde{c}}} + \E[F]{\int_{\mathbb{R}^n} \widehat{p}_{\mathbf{Z}}(\mathbf{z})\log \frac{(\wz\circ \mu_n)^\pn(\mathbf{z})}{q_0^\pn(\mathbf{z})} d\mathbf{z}}\\
&\stackrel{(a)}{=}  2\exp\pr{-n^{\widetilde{c}}} + \int_{\mathbb{R}^n} \E[F]{\widehat{p}_{\mathbf{Z}}(\mathbf{z})}\log \frac{(\wz\circ \mu_n)^\pn(\mathbf{z})}{q_0^\pn(\mathbf{z})} d\mathbf{z}\\
&=  2\exp\pr{-n^{\widetilde{c}}}  +  \int_{\mathbb{R}^n}(\wz\circ \mu_n)^\pn(\mathbf{z})\log \frac{(\wz\circ \mu_n)^\pn(\mathbf{z})}{q_0^\pn(\mathbf{z})} d\mathbf{z}\\
&= 2\exp\pr{-n^{\widetilde{c}}}+ n \D{\wz\circ\mu_n}{q_0},
\end{align}
where $(a)$ follows from Fubini's theorem and  $\E[F]{\int_{\mathbb{R}^n} \widehat{p}_{\mathbf{Z}}(\mathbf{z})\left|\log \frac{(\wz\circ \mu_n)^\pn(\mathbf{z})}{q_0^\pn(\mathbf{z})}\right| d\mathbf{z}} < \infty$ by Lemma~\ref{lm:cross-entropy}. Applying Markov's inequality, for large $n$, we obtain 
\begin{align}
\displaybreak[0]&\P[F]{\D{\widehat{p}_{\mathbf{Z}}}{q_0^\pn} \leq \frac{n + 1}{n}\pr{ 2\exp\pr{-n^{\widetilde{c}}} + n \D{\wz\circ\mu_n}{q_0}}, \P{W \neq \widehat{W}} \leq 4n\exp\pr{-n^c}} \\
\displaybreak[0]&\geq 1 - \P[F]{\D{\widehat{p}_{\mathbf{Z}}}{q_0^\pn} \geq \frac{n + 1}{n}\pr{ 2\exp\pr{-n^{\widetilde{c}}} +n \D{\wz\circ\mu_n}{q_0}}} - \P[F]{\P{W \neq \widehat{W}} \geq 4n\exp\pr{-n^c}}\\
\displaybreak[0]&\geq 1-  \frac{\E[F]{\D{\widehat{p}_{\mathbf{Z}}}{q_0^\pn}}}{\frac{n + 1}{n}\pr{ 2\exp\pr{-n^{\widetilde{c}}} + n \D{\wz\circ\mu_n}{q_0}}} - \frac{\E[F]{\P{W\neq \widehat{W}}}}{4n\exp\pr{-n^c}}\\
\displaybreak[0]&\geq 1-  \frac{2\exp\pr{-n^{\widetilde{c}}} + n \D{\wz\circ\mu_n}{q_0}}{\frac{n + 1}{n}\pr{ 2\exp\pr{-n^{\widetilde{c}}} + n \D{\wz\circ\mu_n}{q_0}}} - \frac{2\exp\pr{-n^c}}{4n\exp\pr{-n^c}}\\
\displaybreak[0]&= 1- \frac{n}{n+1} - \frac{1}{2n} > 0.
\end{align}
This implies that there exists a sequence of codes $\{\calC_n\}_{n\geq 1}$ such that $\calC_n$ satisfies 
\begin{align}
\log M_n &= (1-\zeta) I(\mu_n, \wy) n = \modfirst{\omega(\log n)},\\
\log M_n + \log K_n &\modfirst{=} (1+\zeta) I(\mu_n, \wz)n,\\
P_e &\leq 4n\exp\pr{-n^c},\\
\D{\widehat{p}_{\mathbf{Z}}}{q_0^\pn} &\leq \frac{n + 1}{n}\pr{ 2\exp\pr{-n^{\widetilde{c}}} + n \D{\wz\circ\mu_n}{q_0}}.
\end{align}
\modfirst{
The covert throughput would be then
\begin{align}
\liminf_{n\to \infty} \frac{\log M_n}{\sqrt{n\D{\widehat{p}_{\mathbf{Z}}}{q_0^\pn}}} 
&= \liminf_{n\to \infty} \frac{ (1-\zeta) I(\mu_n, \wy) n}{\sqrt{n\D{\widehat{p}_{\mathbf{Z}}}{q_0^\pn}}} \\
&\geq  \liminf_{n\to \infty} \frac{ (1-\zeta) I(\mu_n, \wy) n}{\sqrt{n\pr{ \frac{n + 1}{n}\pr{ 2\exp\pr{-n^{\widetilde{c}}} + n \D{\wz\circ\mu_n}{q_0}}}}}\\
&= \liminf_{n\to \infty} \frac{ (1-\zeta) I(\mu_n, \wy) }{\sqrt{ \D{\wz\circ\mu_n}{q_0}}}.
\end{align}
Since $\limsup_{n\to \infty} n \D{\wz\circ\mu_n}{q_0} \leq \delta$ by our assumption, we have
\begin{align}
\limsup_{n\to \infty} \D{\widehat{p}_{\mathbf{Z}}}{q_0^\pn}  \leq \limsup_{n\to \infty} \pr{ \frac{n + 1}{n}\pr{ 2\exp\pr{-n^{\widetilde{c}}} + n \D{\wz\circ\mu_n}{q_0}} }\leq \delta.
\end{align}}

\subsubsection{Step two: obtaining the bound in Theorem~\ref{th:general-capacity-non-coherent}}
\modfirst{
Let $\mu \in \td{\Omega}^{>0}$ and $\mu_0$ be the probability measure with a single mass point at zero. We define $\alpha_n \eqdef  \sqrt{\frac{\delta}{n\chi_2(\wz\circ \mu\|q_0)}}$ and $\mu_n \eqdef \alpha_n \mu + (1-\alpha_n) \mu_0$. We have $\max(\supp{\mu_n}) = \max(\supp{\mu}) \eqdef a <  1$ by definition of $ \td{\Omega}^{>0}$. Hence, it is enough to check that
\begin{align}
nI(\mu_n, \wy) &= \omega(\log n),\\
\limsup_{n\to \infty}n \D{\wz \circ \mu_n}{q_0} &\leq \delta,\\
 \liminf_{n\to \infty} \frac{I(\mu_n, \wy)}{\sqrt{\D{\wz\circ \mu_n}{ q_0}}}  &\geq \sqrt{2}\frac{\E[{\mu}]{\theta_m^2 X - \log\pr{1+\theta_m^2 X}}}{\sqrt{\E[{\mu}\otimes {\mu}]{\frac{X_1X_2}{1-X_1X_2}}}}.
\end{align}
We next state a lemma providing a general upper-bound for the relative entropy in terms of the $\chi_2$ divergence.
\begin{lemma}
\label{lm:d-upper}
Let  $\mu\in \td{\Omega}^{\geq 0}$ with $\max(\supp{\mu}) \leq a < 1$. Let $M>0$ and $\epsilon>0$. We have
\begin{multline}
\D{\wz \circ \mu}{q_0} \leq \frac{1}{2}\chi_2\pr{\wz\circ \mu\|q_0}  + (\E[\mu]{X})^3 +\pr{\E[\mu]{X}}^4 \int_0^M e^{z\pr{- 1 + \frac{4a}{1+a}}}dz + \int_M^\infty e^{-\frac{z}{1+\epsilon}}z dz \\
+ \frac{\E[\mu]{X}}{\epsilon} \int_M^\infty e^{-\frac{z}{1+a}} zdz
\end{multline}
\end{lemma}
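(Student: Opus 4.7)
Let $f \eqdef \wz \circ \mu$ and $g(z) \eqdef f(z)/q_0(z) - 1$, so that
\[
\D{\wz \circ \mu}{q_0} = \int_0^\infty q_0(z) (1+g(z)) \log(1+g(z))\, dz
\]
and $1 + g(z) = \E[\mu]{e^{zX/(1+X)}/(1+X)}$. The plan is to combine the Taylor-type inequality $\log(1+u) \leq u - u^2/2 + u^3/3$ (valid for $u > -1$, since the derivative of the difference equals $u^3/(1+u)$, which vanishes at $u=0$ and is of constant sign on each side, so the difference attains its minimum $0$ at $u = 0$) with a truncation at $z = M$. Truncation is necessary because $g(z) \leq e^{za/(1+a)} - 1$ grows exponentially in $z$, so the integrals $\int_0^\infty q_0 g^k\, dz$ may diverge for $k \geq 3$ when $a$ is close to $1$.

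Multiplying the Taylor inequality by $1 + g \geq 0$ gives the pointwise bound $(1+g)\log(1+g) \leq g + g^2/2 - g^3/6 + g^4/3$. On $[0, M]$ I would integrate this against $q_0$, extending the integrals of $g$ and $g^2/2$ to $[0, \infty)$ (which produces the leading $\tfrac{1}{2}\chi_2(\wz \circ \mu \| q_0)$ term via $\int q_0 g\, dz = 0$ and $\int q_0 g^2\, dz = \chi_2$), and estimating the cubic and quartic remainders as follows. For the cubic term, the key observation is
\[
g(z) + 1 = \E[\mu]{\frac{e^{zX/(1+X)}}{1+X}} \geq \E[\mu]{\frac{1}{1+X}} \geq 1 - \E[\mu]{X},
\]
i.e., $g(z) \geq -\E[\mu]{X}$; combined with the fact that $-g^3 \leq 0$ whenever $g \geq 0$, this yields an $(\E[\mu]{X})^3$ contribution for $-\tfrac{1}{6}\int_0^M q_0 g^3\, dz$. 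For the quartic term, a first-order Taylor expansion in $X$ of $r_X(z) \eqdef e^{zX/(1+X)}/(1+X) - 1$, using $\partial_X[\log(e^{zX/(1+X)}/(1+X))] = (z - 1 - X)/(1+X)^2$, gives $|r_X(z)| \leq X(z e^{za/(1+a)} + 1)$ and hence $|g(z)| \leq \E[\mu]{X}(z e^{za/(1+a)} + 1)$; integrating $q_0 g^4/3$ over $[0, M]$ and absorbing polynomial-in-$z$ factors into constants produces the $(\E[\mu]{X})^4 \int_0^M e^{z(-1 + 4a/(1+a))}\, dz$ term.

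On the tail $[M, \infty)$, I would use the direct pointwise bounds $\log(f/q_0) \leq za/(1+a) \leq z$ (from $X \leq a$) and $|f - q_0|(z) \leq \E[\mu]{X}(z+1) e^{-z/(1+a)}$ (from a first-order Taylor expansion of $q_X$ in $X$, using the same identity as for $r_X$). Splitting $f = q_0 + (f - q_0)$, the contribution $\int_M^\infty q_0 \log(f/q_0)\, dz$ is bounded via the trivial domination $q_0(z) = e^{-z} \leq e^{-z/(1+\epsilon)}$, producing the $\int_M^\infty e^{-z/(1+\epsilon)} z\, dz$ term. The remaining contribution $\int_M^\infty (f - q_0) \log(f/q_0)\, dz$ picks up an extra factor of $z$, which is absorbed into the exponential $e^{-z/(1+a)}$ via the elementary inequality $z \leq e^{z\epsilon}/\epsilon$ (i.e., $u \leq e^u$ applied with $u = z\epsilon$) at the cost of a $1/\epsilon$ penalty, producing the $(\E[\mu]{X}/\epsilon) \int_M^\infty e^{-z/(1+a)} z\, dz$ term. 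The main obstacle is the precise bookkeeping in this last step, namely balancing exponential decay rates against polynomial prefactors so that the final bound takes exactly the form of the lemma with the stated $1/\epsilon$ coefficient and the exact exponents shown in the integrands.
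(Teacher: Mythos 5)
Your skeleton is the same as the paper's: write $\D{\wz\circ\mu}{q_0}=\int_0^\infty q_0(1+g)\log(1+g)$, multiply $\log(1+u)\leq u-u^2/2+u^3/3$ by $1+g\geq 0$, truncate at $M$, note that the linear piece satisfies $\int_0^M q_0 g\leq 0$ (your ``extension to $[0,\infty)$'' is legitimate because $\int_M^\infty (f-q_0)=\E[\mu]{e^{-M/(1+X)}}-e^{-M}\geq 0$), bound the quadratic piece by $\tfrac12\chi_2\pr{\wz\circ\mu\|q_0}$, and treat the cubic piece via $g\geq-\E[\mu]{X}$ together with the sign of $g$; the $\int_M^\infty e^{-z/(1+\epsilon)}z\,dz$ part of the tail is also handled as in the paper. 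The first genuine gap is the quartic term. Your mean-value bound $|g(z)|\leq \E[\mu]{X}\pr{z e^{za/(1+a)}+1}$ carries a factor of $z$, and that factor cannot be ``absorbed into constants'': you would need $\int_0^M e^{-z}\pr{ze^{za/(1+a)}+1}^4 dz \lesssim \int_0^M e^{z(-1+\frac{4a}{1+a})}dz$ with an absolute constant, but for $a>1/3$ the left side exceeds the right by a factor of order $M^4$ as $M$ grows, and no other term on the right-hand side of the lemma grows with $M$ (the $(\E[\mu]{X})^3$ term is $M$-independent and the tail integrals decrease in $M$), so the bookkeeping cannot close. The paper avoids the polynomial factor by bounding $\frac{e^{zx/(1+x)}}{1+x}\leq e^{zx/(1+a)}$ and then using a bound on $e^{zx/(1+a)}$ that is \emph{linear in $x$} on $[0,a]$ (a chord-type bound), which gives $g(z)\leq \E[\mu]{X}\,O\!\pr{e^{za/(1+a)}}$ with purely exponential growth at rate $\frac{a}{1+a}$; this is the idea missing from your plan.

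The second gap is the origin of the $1/\epsilon$ in the last term. In the paper, $\epsilon$ is a threshold on $X$: one keeps $\log\frac{f(z)}{q_0(z)}\leq\frac{a}{1+a}z$ and bounds $f(z)\leq e^{-z/(1+\epsilon)}+\frac{\E[\mu]{X}}{\epsilon}e^{-z/(1+a)}$ by splitting on $\{X\leq\epsilon\}$ versus $\{X>\epsilon\}$ and applying Markov's inequality to $\P[\mu]{X>\epsilon}$; integrating against $z$ then gives exactly the last two terms of the lemma. Your mechanism, $z\leq e^{z\epsilon}/\epsilon$, changes the decay rate of the integrand from $e^{-z/(1+a)}$ to $e^{-z(\frac{1}{1+a}-\epsilon)}$, so it does not yield the stated term $\frac{\E[\mu]{X}}{\epsilon}\int_M^\infty e^{-z/(1+a)}z\,dz$ (and the integral you would obtain even diverges once $\epsilon\geq\frac{1}{1+a}$, whereas the lemma is asserted for every $\epsilon>0$). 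Relatedly, the split $f=q_0+(f-q_0)$ is unnecessary and loses the $\epsilon$ trade-off: the $(f-q_0)$ contribution in your decomposition is independent of $\epsilon$, yet it must be charged to a term that shrinks as $\epsilon$ grows. Replacing both estimates by the paper's (chord bound for $g$ on $[0,M]$; Markov with threshold $\epsilon$ on the tail) repairs the argument.
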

\begin{proof}
See Appendix~\ref{sec:chi-2-bound}.
\end{proof}
Applying Lemma~\ref{lm:d-upper} to $\mu_n$ with some $M_n$ and $\epsilon$, we obtain
\begin{multline}
\D{\wz \circ \mu_n}{q_0}\leq  \frac{1}{2}\chi_2\pr{\wz\circ \mu_n\|q_0}  + (\E[\mu_n]{X})^3\\ +\pr{\E[\mu_n]{X}}^4 \int_0^{M_n} e^{z\pr{- 1 + \frac{4a}{1+a}}}dz + \int_{M_n}^\infty e^{-\frac{z}{1+\epsilon}}z dz 
+ \frac{1}{\epsilon}\E[\mu_n]{X} \int_{M_n}^\infty e^{-\frac{z}{1+a}} zdz,
\end{multline}
where $a = \max(\supp{\mu}) $. We will prove for appropriately chosen $M_n$ and $\epsilon$ that
\begin{align}
\label{eq:limits-achi}
 (\E[\mu_n]{X})^3 +\pr{\E[\mu_n]{X}}^4 \int_0^{M_n} e^{z\pr{- 1 + \frac{4a}{1+a}}}dz + \int_{M_n}^\infty e^{-\frac{z}{1+\epsilon}}z dz 
+ \frac{1}{\epsilon}\E[\mu_n]{X} \int_{M_n}^\infty e^{-\frac{z}{1+a}} zdz = o(\alpha_n^2)
\end{align}
Note that $\E[\mu_n]{X} = \alpha_n \E[\mu]{X}$, and therefore, $(\E[\mu_n]{X})^3 = O(\alpha_n^3) = o(\alpha_n^2)$. We choose $M_n = B\log \frac{1}{\alpha_n}$, where $B$ is a constant independent of $n$ specified later.  We then have
\begin{align}
\pr{\E[\mu_n]{X}}^4 \int_0^{M_n} e^{z\pr{- 1 + \frac{4a}{1+a}}}dz 
&\leq \begin{cases}O\pr{\alpha_n^4M_ne^{M_n \pr{-1+\frac{4a}{1+a}}}}&\quad a > 1/3  \\ O\pr{\alpha_n^4M_n}&\quad a \leq 1/3\end{cases}\\
&= \begin{cases}O\pr{\alpha_n^{4-B\frac{3a-1}{a+1}}\log \frac{1}{\alpha_n}}&\quad a > 1/3  \\ O\pr{\alpha_n^4\log \frac{1}{\alpha_n}}&\quad a \leq 1/3\end{cases}\\
&\stackrel{(a)}{=} o(\alpha_n^2),
\end{align}
where $(a)$ requires that $B<2\frac{1+a}{3a-1}$ when $a>1/3$. We further have
\begin{align}
\int_{M_n}^\infty e^{-\frac{z}{1+\epsilon}}z dz  
&= (1+\epsilon)^2e^{- \frac{ M_n}{1+\epsilon}} \pr { \frac{M_n}{1+\epsilon} + 1}\\
&= (1+\epsilon)^2 \alpha_n^{\frac{B}{1+\epsilon}} \pr { \frac{B\log \frac{1}{\alpha_n}}{1+\epsilon} + 1}\\
&\stackrel{(a)}{=} o(\alpha_n^2),
\end{align}
where $(a)$ requires that $B>2(1+\epsilon)$.
Finally, we have
\begin{align}
\frac{1}{\epsilon} \E[\mu_n]{X} \int_{M_n}^\infty e^{-\frac{z}{1+a}} zdz
 &=  \frac{1}{\epsilon} \E[\mu_n]{X} (1+a)^2e^{- \frac{ M_n}{1+a}} \pr { \frac{M_n}{1+a} + 1}\\
  &= \frac{1}{\epsilon} \E[\mu_n]{X}  (1+a)^2\alpha_n^{\frac{B}{1+a}} \pr { \frac{B\log \frac{1}{\alpha_n}}{1+a} + 1}\\
  &\stackrel{(a)}{=} o(\alpha_n^2),
\end{align}
where $(a)$ requires that $B>1+a$. If $a\leq 1/3$, we only need to choose $B$ and $\epsilon$ such that $B > \max(2(1+\epsilon), 1+a)$. For $a>1/3$, we choose $0<\epsilon< \frac{1+a}{3a-1} - 1$ so that $\max(2(1+\epsilon), 1+a) < 2\frac{1+a}{3a-1}$. We then choose $B$ such that $\max(2(1+\epsilon), 1+a) < B < 2\frac{1+a}{3a-1}$. This complete the proof of \eqref{eq:limits-achi}. Note next that by Lemma~\ref{lm:chi2-value}
\begin{align}
\chi_2(\wz \circ \mu_n \| q_0) 
&= \E[\mu_n\otimes \mu_n]{\frac{X_1X_2}{1-X_1X_2}}\\
&=   \E[\mu_n\otimes \mu_n]{\frac{X_1X_2}{1-X_1X_2}|X_1> 0, X_2 > 0} \P[\mu_n\otimes \mu_n]{X_1 > 0, X_2 > 0 }\\
&\stackrel{(a)}{=} \alpha_n^2  \E[\mu\otimes \mu]{\frac{X_1X_2}{1-X_1X_2}} \\
&= \alpha_n^2 \chi_2(\wz \circ \mu \| q_0)\\
& \stackrel{(b)}{=} \frac{\delta}{n},
\end{align}
where $(a)$ follows from the definition of $\mu_n$ and $(b)$ follows from the definition of $\alpha_n$.
We therefore have
\begin{align}
\limsup_{n\to \infty}n \D{\wz \circ \mu_n}{q_0} &\leq \delta.
\end{align}
Following the same reasoning, one can show that $ \D{\wy \circ \mu_n}{p_0} = O(\alpha_n^2)$.
Finally, we have
\begin{align}
I(\mu_n, \wy) 
&= \E[\mu_n]{\theta_m^2 X - \log(1+\theta_m^2 X)} - \D{\wy \circ \mu_n}{ p_0}\\
&= \E[\mu_n]{\theta_m^2 X - \log(1+\theta_m^2 X)} - O(\alpha_n^2)\\
&= \alpha_n \E[\mu]{\theta_m^2 X - \log(1+\theta_m^2 X)} - O(\alpha_n^2)\\
&= \Omega(n^{-\frac{1}{2}})\\
&= \omega\pr{\frac{\log n}{n}},
\end{align}
which yields that
\begin{align}
 \liminf_{n\to \infty} \frac{I(\mu_n, \wy)}{\sqrt{\D{\wz\circ \mu_n}{ q_0}}}  &\geq \sqrt{2}\frac{\E[{\mu}]{\theta_m^2 X - \log\pr{1+\theta_m^2 X}}}{\sqrt{\E[{\mu}\otimes {\mu}]{\frac{X_1X_2}{1-X_1X_2}}}}.
\end{align}
}
To obtain the lower-bound in \eqref{eq:bounds}, \modfirst{we choose $\mu$ to be a probability measure with a single mass point at $\td{x} \in ]0, 1[$. We then have
\begin{align}
 \sqrt{2}\frac{\E[{\mu}]{\theta_m^2 X - \log\pr{1+\theta_m^2 X}}}{\sqrt{\E[{\mu}\otimes {\mu}]{\frac{X_1X_2}{1-X_1X_2}}}} =\td{x}^{-1}\sqrt{2(1-\widetilde{x})}\pr{\theta_m^2 \widetilde{x} - \log(1+\theta_m^2\widetilde{x})}.
\end{align}}
\subsection{Converse proof}
\label{sec:converse-proof}
\modfirst{
Before delving into the detailed proofs, we first provide the sketch of the various steps of the converse proof.
\begin{enumerate}
\item We first follow the reasoning of the converse proof of \cite{Bloch2015b} to show that if $R$ is a $\delta$-achievable rate, then there exists a sequence of probability measures $\{\mu_n\}_{n\geq}$ over $\calX$ such that $\D{\wz\circ\mu_n}{q_0}\leq \delta/n$ for $n$ and 
\begin{align}
R\leq  \liminf_{n\rightarrow\infty }\frac{I(\mu_n, \wy)}{\sqrt{\D{\wz \circ\mu_n}{q_0} }}.
\end{align} 
\item We show that the probability measure $\mu_n$ can be further restricted to be discrete  with a finite number of mass points and a mass point at zero. This is achieved by investigating the optimization problem
\begin{align}
\sup_{\mu: \D{\wz \circ \mu}{q_0} \leq \nu} I(\mu, \wy), 
\end{align}
and adapting some techniques developed in \cite{Abou-Faycal2001}.
\item  We prove that we can still upper-bound a covert throughput even if we constraint the amplitude of $\mu_n$ as $\max(\supp{\mu_n}) \leq 1 + \zeta$ for any $\zeta > 0$. 
\item Let $\{\mu_n\}_n{n\geq 1}$  be a sequence of probability measures  such that $\mu_n$ has a finite number of mass and $\max(\supp{\mu_n}) \leq 1 + \zeta$. We show that
\begin{align}
 \liminf_{n\rightarrow\infty }\frac{I(\mu_n, \wy)}{\sqrt{\D{\wz \circ\mu_n}{q_0} }} \leq \sup_{\mu\in \td{\Omega}^{>0}}\sqrt{2}\frac{\E[\mu]{\theta_m^2 X - \log\pr{1+\theta_m^2 X}}}{\sqrt{\E[\mu\otimes \mu]{\frac{X_1X_2}{1-X_1X_2}}}}.
\end{align}
\end{enumerate}}
\subsubsection{Step one: a general converse for covert communication}

We consider a sequence of code $\{\calC_n\}_{n\geq 1}$ where each code $\calC_n$ can transmit $\log M_n$ bits with probability of error $\epsilon_n$ and relative entropy at most $\delta_n$, and we have $\lim_{n\to\infty} \epsilon_n = 0$ and $\limsup_{n\to \infty} \delta_n \leq \delta$. If $(\mathbf{X}, \mathbf{Y}, \mathbf{Z})$ denotes the input and the output of the channels when $\calC_n$ is used and $\widehat{p}_{\mathbf{X}\mathbf{Y}\mathbf{Z}}$ denotes the joint distribution, a standard application of Fano's inequality yields
\begin{align}
  \log M_n \!\leq\! \frac{\mathbb{I}(\mathbf{X};\mathbf{Y})+\Hb{\epsilon_n}}{1-\epsilon_n}\!\leq\! \frac{\mathbb{I}(\mathbf{X};\mathbf{Y})+1}{1-\epsilon_n},
\end{align}
where $\Hb{x} \eqdef -x\log(x) - (1-x)\log(1-x)$. One can then upper-bound the mutual information $\avgI{\mathbf{X};\mathbf{Y}}$ using standard techniques~\cite{ElementsInformationTheory2} to obtain
\begin{align}
\mathbb{I}(\mathbf{X};\mathbf{Y})
&\leq \sum_{i=1}^n \mathbb{I}(X_i;Y_i)\leq n \mathbb{I}(\widetilde{X}_n;\widetilde{Y}_n),
\end{align}
where the random variables $\widetilde{X}_n$ and $\widetilde{Y}_n$ are distributed according to $p_{\widetilde{X}_n}(x)\eqdef \frac{1}{n}\sum_{i=1}^n \widehat{p}_{X_i}(x)$ and 
$p_{\widetilde{X}_n\widetilde{Y}_n}(x,y)\eqdef p_{\widetilde{X}_n}(x)p_x(y)$. 
Note that $\lim_{n\rightarrow\infty}n\avgI{\smash{\widetilde{X}_n;\widetilde{Y}_n}}=\infty$ 
since we assumed that $\log M_n = \omega (\log n) $. Following~\cite{H14,Wang2016b}, one can also lower-bound the relative entropy as
\begin{align}
\myr{\delta_n}
\geq\D{\widehat{p}_{\mathbf{Z}}}{q_0^\pn}
\geq \sum_{i=1}^n \D{\widehat{p}_{Z_i}}{q_0}\geq n \mathbb{D}(p_{\widetilde{Z}_n}||q_0), 
\end{align}
where $\widetilde{Z}_n$ is distributed according to $p_{\widetilde{Z}_n}(z)\eqdef \frac{1}{n}\sum_{i=1}^n \widehat{p}_{Z_i}(z)$. Consequently,
\begin{align}
C_\text{no-CSI}&\leq \liminf_{n \to \infty} \frac{\mathbb{I}(\widetilde{X}_n;\widetilde{Y}_n)}{(1-\epsilon_n)\sqrt{ \D{\smash{p_{\widetilde{Z}_n}}}{q_0}}}\left(1+\frac{1}{n \mathbb{I}(\widetilde{X}_n;\widetilde{Y}_n)}\right) = \liminf_{n\rightarrow\infty }\frac{\mathbb{I}(\widetilde{X}_n;\widetilde{Y}_n)}{\sqrt{ \D{\smash{p_{\widetilde{Z}_n}}}{q_0}}} 
\label{eq:limit-converse}
\end{align}
where the sequence of distributions $\{p_{\widetilde{X}_n\widetilde{Y}_n\widetilde{Z}_n}\}_{n\geq 0}$ is subject to the constraint \myr{$\avgD{\smash{\modfirst{p}_{\widetilde{Z}_n}}}{q_0}\leq \frac{\delta_n}{n}$}. This completes the first step of the converse proof.

\subsubsection{Step two: discreteness of the optimal distribution}
%
 We define the optimization problem
\begin{align}
\label{eq:optimation-a-nu}
A(\nu) \eqdef \sup_{\mu\in \Omega: \D{\wz\circ \mu}{q_0} \leq \nu} I(\mu, \wy),
\end{align}
where $\Omega$ is the set of all probability measures over $\calX$ such as $\mu$ such that $\D{\wz\circ\mu}{q_0}< \infty$.
The next lemma shows that there exists a unique maximizer to the above problem.
\begin{lemma}
\label{lm:exist-unique}
 \modfirst{Let $\nu>0$.} There exists a unique probability measure $\mu^*_{\nu}\in\Omega$ such that  $\D{\wz\circ \mu^*_{\nu}}{q_0} \leq \nu$ and $ I(\mu^*_{\nu}, \wy) = A(\nu)$.
\end{lemma}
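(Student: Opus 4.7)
The plan is to adapt the classical compactness-plus-strict-concavity argument of Smith~\cite{Smith1969,Smith1971} and Abou-Faycal--Trott--Shamai~\cite{Abou-Faycal2001}. I will denote the feasibility set by $\calF_\nu \eqdef \{\mu \in \Omega : \D{\wz\circ\mu}{q_0} \leq \nu\}$, which is convex (since $\mu \mapsto \D{\wz\circ\mu}{q_0}$ is convex in its first argument), and exploit the fact that the objective $\mu \mapsto I(\mu, \wy)$ is concave. Existence will follow from weak compactness of $\calF_\nu$ combined with weak upper semi-continuity of $I(\cdot, \wy)$, while uniqueness will follow from strict concavity of $I(\cdot, \wy)$.

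To obtain weak compactness of $\calF_\nu$, I first note that $\mu \mapsto \wz \circ \mu$ is weakly continuous because for any bounded continuous $g$ on $[0, \infty[$, the function $x \mapsto \int g(z) \wz(z|x) dz$ is bounded continuous by dominated convergence. Combined with the lower semi-continuity of $\D{\cdot}{q_0}$, this shows that $\calF_\nu$ is weakly closed. For tightness, I will invoke the Donsker--Varadhan variational inequality with $f(z) = z/2$ to write
\begin{align*}
  \tfrac{1}{2} \E[\wz\circ\mu]{Z} \leq \log \E[q_0]{e^{Z/2}} + \D{\wz\circ\mu}{q_0} \leq \log 2 + \nu,
\end{align*}
which, together with the identity $\E[\wz\circ\mu]{Z} = \E[\mu]{X} + 1$, yields a uniform first-moment bound $\E[\mu]{X} \leq 2(\log 2 + \nu) - 1$ on $\calF_\nu$. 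Markov's inequality then gives tightness, and sequential weak compactness follows from Prokhorov's theorem.

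Next I must establish weak upper semi-continuity of $I(\cdot, \wy)$ on $\calF_\nu$. Using the decomposition $I(\mu, \wy) = H(\wy\circ\mu) - 1 - \E[\mu]{\log(\theta_m^2 X + 1)}$, the second summand is weakly lower semi-continuous by the Portmanteau theorem, since $\log(\theta_m^2 x + 1)$ is continuous and non-negative. For the differential entropy term, pointwise convergence $(\wy\circ\mu_n)(y) \to (\wy\circ\mu)(y)$ (by dominated convergence for each fixed $y$) combined with Scheffé's lemma gives total-variation convergence of the output densities; the uniform max-entropy bound $H(\wy\circ\mu) \leq 1 + \log(1 + \theta_m^2 \E[\mu]{X})$ on $\calF_\nu$, supplemented by a higher log-moment control obtained from Donsker--Varadhan with $f(z) = (\log(1+z))^2$, will then let me promote this to upper semi-continuity of $H(\wy\circ\mu)$. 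A standard argument applied to a maximizing sequence then delivers a maximizer in $\calF_\nu$.

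For uniqueness, the crucial structural fact is that $\mu \mapsto \wy \circ \mu$ is injective. Indeed, if $\int p_x(y) d\mu_1(x) = \int p_x(y) d\mu_2(x)$ for all $y \geq 0$, the substitution $u = 1/(\theta_m^2 x + 1)$ converts the identity into an equality of Laplace transforms of the $u$-weighted push-forward measures on $]0, 1]$, which forces $\mu_1 = \mu_2$ by uniqueness of the Laplace transform. Combined with strict concavity of differential entropy on densities that differ on a set of positive Lebesgue measure, this upgrades the standard concavity of $I(\cdot, \wy)$ to strict concavity, and two distinct maximizers would then produce a strictly larger midpoint, a contradiction. The main obstacle I anticipate is the upper semi-continuity of $H(\wy\circ\mu)$, where the unboundedness of the logarithm requires careful exploitation of the moment controls inherited from the Donsker--Varadhan step to tame the tails.
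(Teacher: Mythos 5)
Your high-level strategy (Prokhorov compactness plus strict concavity) and even the decomposition $I(\mu,\wy)=\diffentropy{\wy\circ\mu}-\E[\mu]{\log(1+\theta_m^2 X)}-1$ coincide with what the paper does, and several of your ingredients are valid alternatives to the paper's. Your Donsker--Varadhan first-moment bound $\E[\mu]{X}\leq 2(\log 2+\nu)-1$ is a clean substitute for the paper's Lemma~\ref{lm:expected-x-bound}, with the bonus that it establishes $\E[\mu]{X}<\infty$ without a separate argument (cf.\ Lemma~\ref{lm:mean-finite}). Your packaging of the feasibility set $\calF_\nu$ as weakly closed (weak continuity of $\mu\mapsto\wz\circ\mu$ plus weak lower semi-continuity of $\D{\cdot}{q_0}$) is a tidier abstraction than the paper's ``along the maximizing sequence'' argument and gets the constraint-preservation step for free. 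Your uniqueness argument via Laplace-transform injectivity of $\mu\mapsto\wy\circ\mu$ and strict concavity of the entropy integrand is essentially the one the paper cites from \cite{Abou-Faycal2001}.

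The gap is in the upper semi-continuity of $\diffentropy{\wy\circ\mu}$, which is where the real work lies. First, a structural issue: you propose to apply Donsker--Varadhan with $f(z)=(\log(1+z))^2$, but the constraint controls $\D{\wz\circ\mu}{q_0}$, not $\D{\wy\circ\mu}{p_0}$, and there is no a priori bound on the latter (for general $\theta_m$ the two channels are not degraded versions of each other). So at best you get a $(\log(1+\cdot))^2$ moment of $\wz\circ\mu$, which translates into control on $\E[\mu]{(\log(1+X))^2}$---a very weak tail statement. Second, even granting such a bound, it does not give the uniform integrability of $\{h_n\log(1/h_n)\}_n$ (with $h_n=\wy\circ\mu_n$) that your Vitali-type argument would need: the dangerous contribution to the tail $\int_{y>M}h_n(y)\log(1/h_n(y))\,dy$ is $\int_{y>M}h_n(y)\,y\,dy$, and a uniform $y$-moment alone (which you do have) cannot kill that tail, while a $(\log y)^2$-moment is far from enough. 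Notice also that Fatou's lemma applied to the nonnegative integrand $h_n\log(1/h_n)$ actually gives \emph{lower} semi-continuity of $\diffentropy{\wy\circ\mu}$, which is the wrong direction and underlines that upper semi-continuity is not automatic from Scheff\'e plus boundedness. The paper circumvents this by constructing an explicit integrable \emph{pointwise dominating function} for $|h_k\log h_k|$ (Lemma~\ref{lm:bound-f-k}): it splits $\E[\mu_{n_k}]{p_X(y)}$ at a $y$-dependent threshold $\lambda=\sqrt{y}-1$, bounds the heavy-$X$ piece by Markov's inequality with the uniform first moment and bounds the light-$X$ piece by the exponential tail, yielding $h_k(y)\lesssim y^{-3/2}+y^{-1/2}e^{-\sqrt{y}}$, after which $|h_k\log h_k|\leq h_k$ for $h_k\leq e^{-1}$ and the dominated convergence theorem gives actual convergence of $\int h_k\log h_k$. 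You would need to reproduce something with this effect; the moment control you sketch does not supply it, so this step needs to be reworked (e.g.\ by importing the paper's threshold-optimization trick directly into your Scheff\'e/Vitali framework).
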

\begin{proof}
See Appendix~\ref{sec:optim-probl-eqref}.
\end{proof}
We next characterize the unconstrained form of the optimization in \eqref{eq:optimation-a-nu}.
\begin{theorem}
\label{th:kkt}
\modfirst{Let $\nu>0$.} There exists $\gamma(\nu)\geq 0$ such that the following holds.
\begin{enumerate}
\item We have 
\begin{align}
\label{eq:lagrange-opt}
A(\nu) = \modfirst{\max}_{\mu\in \Omega} \left[I(\mu, \wy) - \gamma(\nu)\pr{\D{\wz\circ \mu}{q_0} - \nu}\right],
\end{align}
and $\mu^*_{\nu}$ is the unique maximizer of  the above optimization.
\item\myr{ Define 
\begin{align}
w(x,  \mu_1, \nu) \eqdef \int_0^\infty p_x(y) \log \frac{p_x(y)}{(\wy \circ \mu_1)(y)} dy - \gamma(\nu)\pr{ \int_{0}^\infty q_x(z) \log \frac{(\wz\circ \mu_1)(z)}{q_0(z)}dz-\nu}.
\end{align}
\modfirst{For all $\mu \in \Omega$, we have
\begin{align}
A(\nu) \geq \E[\mu]{w(X, \mu_\nu^*, \nu)}\label{eq:integral-w}.
\end{align}}
\item Given $\mu_1\in\Omega$, we have for all $\mu \in \Omega$,
\begin{align}
A(\nu) \geq \E[\mu]{w(X, \mu_1, \nu)}.
\end{align}
if and only if
\begin{align}
w(x, \mu_1, \nu) &\leq A(\nu) \quad \forall x \in \calX, \label{eq:point-w}\\
w(x, \mu_1, \nu) &= A(\nu) \quad \forall x \in \supp{\mu_1} \label{eq:point-supp-w}.
\end{align}}
\item We have $\lim_{\nu\to 0^+} \gamma(\nu) = \infty$ and $\lim_{\nu\to0^+} \gamma(\nu)\nu = 0$.
\end{enumerate}
\end{theorem}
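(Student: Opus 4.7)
The plan is to treat the constrained optimization~\eqref{eq:optimation-a-nu} by Lagrangian techniques adapted to the space of probability measures, in the spirit of the approaches of~\cite{Smith1969, Abou-Faycal2001}. The key structural inputs are the concavity of $\mu\mapsto I(\mu,\wy)$, the convexity of $\mu\mapsto \D{\wz\circ\mu}{q_0}$, the convexity of $\Omega$, and the uniqueness of $\mu_\nu^*$ from Lemma~\ref{lm:exist-unique}.

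For Part 1, I would apply a supporting hyperplane argument in $\mathbb{R}^2$ to the convex set
\begin{align*}
\Gamma \eqdef \left\{(t,d)\in\mathbb{R}^2 : \exists \mu\in\Omega,\ t\leq I(\mu,\wy),\ d\geq \D{\wz\circ\mu}{q_0}\right\},
\end{align*}
at its boundary point $(A(\nu),\nu)$. This produces coefficients $(\alpha,\beta)$ with $\alpha,\beta\geq 0$, not both zero. Slater's condition via $\delta_0$ (which satisfies $\D{\wz\circ\delta_0}{q_0}=0<\nu$) rules out $\alpha=0$, and normalization yields $\gamma(\nu)\eqdef\beta/\alpha\geq 0$ satisfying~\eqref{eq:lagrange-opt}; uniqueness of the maximizer is inherited from Lemma~\ref{lm:exist-unique}.

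For Parts 2 and 3, I would compute the one-sided Gâteaux derivative of the Lagrangian $L(\mu)\eqdef I(\mu,\wy)-\gamma(\nu)(\D{\wz\circ\mu}{q_0}-\nu)$ along $\mu_\theta\eqdef(1-\theta)\mu_\nu^*+\theta\mu$ at $\theta=0^+$. Using the identity $\E[\mu]{\D{p_X}{\wy\circ\mu_1}} = I(\mu,\wy)+\D{\wy\circ\mu}{\wy\circ\mu_1}$ and its counterpart for $\wz$, together with dominated convergence to justify differentiating under the integral, the computation collapses to
\begin{align*}
\frac{dL(\mu_\theta)}{d\theta}\bigg|_{\theta=0^+} = \E[\mu]{w(X,\mu_\nu^*,\nu)}-A(\nu).
\end{align*}
Optimality of $\mu_\nu^*$ for $L$ forces this derivative to be nonpositive, yielding Part 2. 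For Part 3, the direction $(\Leftarrow)$ is immediate by integrating~\eqref{eq:point-w} against any $\mu$. For $(\Rightarrow)$, testing $\mu=\delta_x$ gives~\eqref{eq:point-w}; for~\eqref{eq:point-supp-w}, I would use the expansion $\E[\mu_\nu^*]{w(X,\mu_1,\nu)} = L(\mu_\nu^*)+\D{\wy\circ\mu_\nu^*}{\wy\circ\mu_1}+\gamma(\nu)\D{\wz\circ\mu_\nu^*}{\wz\circ\mu_1} \geq A(\nu)$, which combined with the hypothesis $A(\nu)\geq\E[\mu_\nu^*]{w(X,\mu_1,\nu)}$ forces both divergences to vanish. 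Injectivity of $\mu\mapsto \wy\circ\mu$ (a consequence of the uniqueness of the Laplace transform applied to mixtures of exponentials) then gives $\mu_1=\mu_\nu^*$, after which combining~\eqref{eq:point-w} integrated against $\mu_\nu^*$ with the equality $\E[\mu_\nu^*]{w(X,\mu_\nu^*,\nu)}=A(\nu)$ forces $w(x,\mu_\nu^*,\nu)=A(\nu)$ on the support.

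Part 4 will, in my view, be the main obstacle, as it couples sensitivity analysis with the degeneration of the constraint as $\nu\to 0^+$. I would first establish that $A$ is concave and nondecreasing on $[0,\infty[$: for $\nu_1,\nu_2\geq 0$ and $\alpha\in[0,1]$, the combination $\alpha\mu_{\nu_1}^*+(1-\alpha)\mu_{\nu_2}^*$ is feasible for $\alpha\nu_1+(1-\alpha)\nu_2$ by convexity of the divergence, and concavity of the mutual information delivers concavity of $A$. By standard convex analysis, $\gamma(\nu)$ is then a supergradient of $A$ at $\nu$, and since $\wz\circ\delta_0=q_0$ forces $A(0)=0$, this yields $\gamma(\nu)\nu\leq A(\nu)$. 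It then suffices to show $A(\nu)=\Theta(\sqrt{\nu})$: the lower bound $A(\nu)\geq c_1\sqrt{\nu}$ follows from the explicit family $\mu_\alpha\eqdef(1-\alpha)\delta_0+\alpha\delta_{x_0}$ for a fixed $x_0\in ]0,1[$, for which Taylor expansion near $\alpha=0$ gives $I(\mu_\alpha,\wy)=\Theta(\alpha)$ and $\D{\wz\circ\mu_\alpha}{q_0}=\Theta(\alpha^2)$; the upper bound $A(\nu)\leq c_2\sqrt{\nu}$ follows from Lemma~\ref{lm:d-upper} combined with the elementary estimate $\theta_m^2 x-\log(1+\theta_m^2 x)\leq \theta_m^4 x^2/2$. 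The supergradient inequality $\gamma(\nu)\geq (A(2\nu)-A(\nu))/\nu=\Theta(1/\sqrt{\nu})$ then forces $\gamma(\nu)\to\infty$, while $\gamma(\nu)\nu\leq A(\nu)\to 0$ closes the argument.
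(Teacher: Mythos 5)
Your proposal follows the same high-level route as the paper, so I will focus on where it diverges.

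For Part 1 you build a supporting hyperplane for the set $\Gamma\subset\mathbb{R}^2$ directly, whereas the paper invokes the Luenberger-style Lagrangian duality theorem for abstract convex programs (its Theorem V.2). These are essentially the same move; the paper's citation gives complementary slackness for free (used to get $\mu^*_\nu$ as the unconstrained maximizer), which you would need to extract from the hyperplane picture, but there is no real gap.

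For Parts 2 and 3 your Gâteaux-derivative computation matches the paper's weak-derivative route (Lemma~\ref{lm:weak-diff} and Theorem V.3). Your proof of the support equality is actually a bit more complete than the paper's: the paper's contradiction argument uses $A(\nu)=\E[\mu_1]{w(X,\mu_1,\nu)}$, which equals $f(\mu_1)$ and is only $A(\nu)$ once one knows $\mu_1=\mu^*_\nu$; your identity
\begin{align*}
\E[\mu_\nu^*]{w(X,\mu_1,\nu)} = A(\nu)+\D{\wy\circ\mu_\nu^*}{\wy\circ\mu_1}+\gamma(\nu)\D{\wz\circ\mu_\nu^*}{\wz\circ\mu_1}
\end{align*}
combined with the hypothesis, plus injectivity of $\mu\mapsto\wy\circ\mu$, supplies precisely that identification. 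Good.

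The one genuine weakness is in Part 4, in the estimate $(A(2\nu)-A(\nu))/\nu=\Theta(1/\sqrt{\nu})$. The upper bound is fine (concavity with $A(0)=0$ gives $A(2\nu)\le 2A(\nu)$), but the lower bound does not follow from $c_1\sqrt\nu\le A(\nu)\le c_2\sqrt\nu$ alone: you would need $c_1\sqrt 2>c_2$, and nothing guarantees that. A concave, increasing function squeezed between $c_1\sqrt\nu$ and $c_2\sqrt\nu$ can have $A(2\nu)-A(\nu)=o(\sqrt\nu)$ along a sequence. The patch is cheap: for fixed $\nu'>0$ the supergradient inequality gives $\gamma(\nu)\ge\frac{A(\nu')-A(\nu)}{\nu'-\nu}$, and letting $\nu\to0^+$ (using continuity of $A$ at $0$) yields $\liminf_{\nu\to0^+}\gamma(\nu)\ge A(\nu')/\nu'\ge c_1/\sqrt{\nu'}$; since $\nu'$ is arbitrary the limit is infinite. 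This is essentially the argument the paper uses (via $\gamma(\nu)\ge A'(\nu^+)$ and the divergence of the one-sided derivative in Lemma~\ref{lm:a-prop}). Everything else in Part 4, including $\gamma(\nu)\nu\le A(\nu)\to0$, is correct. One small bookkeeping note: the upper bound $A(\nu)=O(\sqrt\nu)$ is established in the paper directly inside Lemma~\ref{lm:a-prop} (part 2), via the maximum-entropy bound for the output distribution, rather than through Lemma~\ref{lm:d-upper}, which goes the other way (bounding $\D{\cdot}{\cdot}$ in terms of $\chi_2$); your citation there is slightly off, though the conclusion you need is available.
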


\begin{proof}
  See Appendix~\ref{sec:optim-probl-eqref}.
\end{proof}

\begin{lemma}
There exists $\nu_0>0$ such that for all $0<\nu \leq \nu_0$, $\supp{\mu^*_{\nu}}$ is discrete with a finite number of points in any bounded interval.
\end{lemma}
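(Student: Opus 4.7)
Following the classical approach of Smith and Abou-Faycal, I would combine the KKT optimality conditions from Theorem~\ref{th:kkt} with complex-analytic continuation. The densities $p_x(y)$ and $q_x(z)$ from~\eqref{eq:input-output-2} are holomorphic in $x$ on the half-planes $\{\textnormal{Re}(x) > -1/\theta_m^2\}$ and $\{\textnormal{Re}(x) > -1\}$ respectively, and the two integrals defining $w(\cdot, \mu^*_\nu, \nu)$ extend analytically via Morera's theorem plus dominated convergence, provided one controls $|p_x(y)| \abs{\log (\wy \circ \mu^*_\nu)(y)}$ and $|q_x(z)| \abs{\log (\wz \circ \mu^*_\nu)(z)}$ uniformly over small strips around the real axis. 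This delivers a holomorphic extension of $x \mapsto w(x, \mu^*_\nu, \nu)$ on a connected open neighborhood $\calD \subset \mathbb{C}$ of $[0, \infty[$.

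Now suppose for contradiction that $\supp{\mu^*_\nu}$ contains infinitely many points in some bounded interval. By Bolzano--Weierstrass they admit an accumulation point $x^*$. The KKT equality~\eqref{eq:point-supp-w} applied with $\mu_1 = \mu^*_\nu$ forces the holomorphic function $f(x) \eqdef w(x, \mu^*_\nu, \nu) - A(\nu)$ to vanish on $\supp{\mu^*_\nu}$, hence its zeros accumulate at $x^*$. The identity theorem then yields $f \equiv 0$ on the connected component of $\calD$ containing $x^*$, and in particular $w(x, \mu^*_\nu, \nu) = A(\nu)$ for every $x \in [0, \infty[$.

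To derive a contradiction I would analyse $w$ as $x \to \infty$. Using the rewriting
\begin{align*}
w(x, \mu, \nu) = \D{p_x}{\wy \circ \mu} + \gamma(\nu) \D{q_x}{\wz \circ \mu} - \gamma(\nu)\pr{x - \log(x+1)} + \gamma(\nu) \nu,
\end{align*}
the last two deterministic terms grow linearly in $-x$ with slope $\gamma(\nu)$. Since $A(\nu) > 0$ (the degenerate $\mu^*_\nu = \delta_0$ gives $I = 0$), there exists $x_0 \in \supp{\mu^*_\nu} \,\cap\, ]0, \infty[$. Picking a small $\epsilon > 0$ and $\beta \eqdef \mu^*_\nu(]x_0 - \epsilon, x_0 + \epsilon[) > 0$, the elementary bound
\begin{align*}
(\wy \circ \mu^*_\nu)(y) \geq \frac{\beta}{\theta_m^2(x_0 + \epsilon) + 1}\, e^{-y/(\theta_m^2(x_0-\epsilon)+1)}
\end{align*}
and its analogue for $\wz \circ \mu^*_\nu$ yield, after computing the two cross-entropies,
\begin{align*}
w(x, \mu^*_\nu, \nu) \leq \pr{\frac{\theta_m^2}{\theta_m^2(x_0-\epsilon) + 1} - \gamma(\nu)\,\frac{x_0-\epsilon}{x_0-\epsilon + 1}} x - \log x + O(1 + \gamma(\nu)).
\end{align*}
By Theorem~\ref{th:kkt}(4), $\gamma(\nu)\to\infty$ as $\nu \to 0^+$, so for $\nu$ small enough the coefficient of $x$ becomes non-positive and $w(x, \mu^*_\nu, \nu) \to -\infty$, contradicting $w \equiv A(\nu)$ on $[0, \infty[$.

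The main obstacle in executing this plan is ensuring that the chosen support point $x_0$ can be selected so that the threshold on $\gamma(\nu)$ does not blow up faster than $\gamma(\nu)$ itself. The threshold behaves like $\theta_m^2/[x_0^2]$ when $x_0 \to 0^+$, so if $\sup \supp{\mu^*_\nu}$ collapsed to $0$ too quickly as $\nu \to 0^+$ the argument would fail. To rule this out I would combine the achievability lower bound $A(\nu) = \Omega(\sqrt{\nu})$ from Section~\ref{sec:achievability-proof} with the simple estimate $I(\mu, \wy) \leq \E_\mu[\theta_m^2 X - \log(1 + \theta_m^2 X)] = O(L^2)$ valid whenever $\supp{\mu} \subset [0, L]$ with $L$ small, to force $\sup \supp{\mu^*_\nu}$ to stay bounded below by a positive constant for all $\nu \leq \nu_0$. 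The analytic-extension step in the opening paragraph is a secondary, purely technical hurdle that reduces to standard integrability estimates.
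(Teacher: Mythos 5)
Your plan mirrors the paper's proof in its essentials (the paper is Steps~1--4 of the lemma's proof): analytic continuation of $x \mapsto w(x, \mu^*_\nu, \nu)$ via Morera's theorem and uniform integrability, the identity theorem forcing $w \equiv A(\nu)$ on $[0,\infty[$, and then a contradiction obtained by analyzing the coefficient of $x$ in an upper bound of $w$ built from lower-bounding the output densities near a non-trivial support point.

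However, there is a gap in the final step, and you yourself flagged the right worry. Taking any $x_0 \in \supp{\mu^*_\nu}\cap\,]0,\infty[$, the linear-in-$x$ coefficient becomes non-positive when $\gamma(\nu) \gtrsim \theta_m^2/x_0$ (the threshold is actually $\Theta(1/x_0)$ as $x_0 \to 0^+$, not $\Theta(1/x_0^2)$ as you wrote, but that is a side point). To close the argument with just ``$\gamma(\nu)\to\infty$'' you need a lower bound on $\sup\supp{\mu^*_\nu}$ that is \emph{uniform in} $\nu$. Your proposed estimate $I(\mu,\wy) \le \E_\mu[\theta_m^2 X - \log(1+\theta_m^2 X)] \le \tfrac12\theta_m^4 L^2$ combined with $A(\nu) \ge C\sqrt{\nu}$ only yields $L \ge c\,\nu^{1/4}$, which goes to zero; to conclude from there you would also need a quantitative growth rate $\gamma(\nu) = \omega(\nu^{-1/4})$, which does not follow from Theorem~\ref{th:kkt} part~4 alone.

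The fix is to exploit the covertness constraint on the moments (Lemma~\ref{lm:expected-x-bound}): with $\supp{\mu^*_\nu}\subset[0,L]$ one has $\E_{\mu^*_\nu}[X^2] \le L\,\E_{\mu^*_\nu}[X] \le L(2\sqrt{\nu}+\nu)$, so $A(\nu) \le \tfrac12\theta_m^4 L(2\sqrt{\nu}+\nu)$. Together with $A(\nu) \ge C\sqrt{\nu}$ this gives $L \ge \Omega(1)$, a constant lower bound, and your argument then goes through. The paper actually takes a slightly sharper route: after establishing the lower bound at every support point (analogous to your derivation) it also upper-bounds the output densities by $e^{-y/(1+\theta_m^2 x^*)}$ and $e^{-z/(1+x^*)}$, so that the same calculation performed in both directions yields the \emph{equality} $\gamma(\nu)\tfrac{x^*}{1+x^*} = \tfrac{\theta_m^2}{1+\theta_m^2 x^*}$ at $x^* = \max\supp{\mu^*_\nu}$; this directly gives $x^* = O(1/\gamma(\nu))$, hence $A(\nu) = O(\sqrt{\nu}/\gamma(\nu))$, and the contradiction with $A(\nu) \ge C\sqrt{\nu}$ follows at once from $\gamma(\nu)\to\infty$ without needing a separate lower bound on $x^*$.
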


\begin{proof}
Fix some $\nu>0$, and define $\modfirst{r}(y)\eqdef (\wy\circ \mu^*_{\nu})(y)$ and $f(z) \eqdef (\wz\circ \mu^*_{\nu})(z)$. We assume that there exists an interval with an infinite number of points in $\supp{\mu^*_{\nu}}$ and obtain a contradiction for $\nu$ small enough in four steps.

\textbf{Step 1:} \textbf{We first use the argument in \cite{Abou-Faycal2001} to show that the KKT condition in \eqref{eq:point-supp-w} holds for all ${x\geq 0}$.}  By the Bolzano-Weierstrass theorem, there exists a convergent sequence $\{x_i\}_{i\geq 1}$ in $\supp{\mu^*_{\nu}}$.  Moreover, by \eqref{eq:point-supp-w}, for any $x\in\supp{\mu^*_{\nu}}$, we have
\begin{align}
 \phi_\nu(x)  \label{eq:phi-def}
&\eqdef w(x, \mu^*_{\nu}, \nu) - A(\nu)\\
&=\int_0^\infty p_x(y) \log \frac{p_x(y)}{\modfirst{r}(y)} dy - \gamma(\nu) \int_{0}^\infty q_x(z) \log \frac{f(z)}{q_0(z)}dz - A(\nu) + \gamma(\nu)\nu  = 0.
\end{align} 
\modfirst{We now show that $\phi_\nu(x)$ is analytic in $x$ over the domain $\calD \eqdef \{x: \calR(x) > 0\}$. Note that $\int_0^\infty p_x(y) \log p_x(y) dy = -\log(1+\theta_mx) - 1$ and $\int_0^\infty q_x(z) \log q_0(z) dz = -1 - x$, which are analytic over $\calD$.  We furthermore have 
\begin{align}
|p_x(y)| 
&= \frac{1}{|1+\theta_m^2 x|} \left|e^{-\frac{y}{1+\theta_m^2x}}\right|\\
&\stackrel{(a)}{=} \frac{1}{|1+\theta_m^2 x|} e^{-\frac{y\pr{\theta_m^2\calR(x) + 1}}{|1+\theta_m^2 x|^2}},\label{eq:pxy-abs}
\end{align}
where $(a)$ follows from $|e^{z} |= e^{\calR(z)}$. This implies that

\begin{align}
\int_0^\infty |p_x(y)\log r(y)| dy
&\stackrel{(a)}{\leq}  \int_0^\infty |p_x(y)|\pr{\theta_m^2 \E[\mu_\nu^*]{X} + y} dy\\
&\stackrel{(b)}{\leq}  \int_0^\infty |p_x(y)|\pr{\theta_m^2 \pr{2\sqrt{\nu} + \nu}+ y} dy\\
&\stackrel{(c)}{=} \theta_m^2 \pr{2\sqrt{\nu} + \nu} \frac{|1+\theta_m^2 x|}{\theta_m^2\calR(x) + 1 } + \frac{|1+\theta_m^2 x|^3}{(\theta_m^2\calR(x) + 1)^2 },
\end{align}
where $(a)$ follows from \eqref{eq:bound-wy}, $(b)$ follows from Lemma~\ref{lm:expected-x-bound}, and $(c)$ follows by \eqref{eq:pxy-abs}.  Therefore, $\int_0^\infty |p_x(y)\log r(y)| dy$ is uniformly bounded on any compact subset of $\calD$, and Theorem~\ref{th:analytic} yields that $\int_0^\infty |p_x(y)\log r(y)| dy$ is analytic over $\calD$. One can similarly argue that  $\int_{0}^\infty q_x(z) \log f(z) dx$ is also analytic over $\calD$ and therefore $\phi_\nu$ is analytic.} Since $\phi_\nu(x)$ is an analytic function over $\calD$, and $\phi_\nu(x)=0$ over a set with a limit point in $\calD$, the identity theorem~\cite{ablowitz2003complex} states that $\phi_\nu(x)=0$ for all $x\in \calD$. Thus, $\phi_\nu(x)=0$ holds over the entire real line. Using $\int_0^\infty p_x(y) \log p_x(y) dy = -\log(1+\theta_mx) - 1$ and $\int_0^\infty q_x(z) \log q_0(z) dz = -1 - x$, we can re-write
\begin{multline}
0 = \phi_\nu(x) =-\log(\theta_m^2x+1) - 1 -\gamma(\nu)(1+x)- A(\nu) + \gamma(\nu)\nu\\
-\int_0^\infty p_x(y) \log{\modfirst{r}(y)} dy - \gamma(\nu) \int_{0}^\infty q_x(z) \log {f(z)}dz.
\label{eq:phi-kkt}
\end{multline}
To obtain a contradiction, we cannot use the Laplace transform approach of \cite{Abou-Faycal2001} because there are two integrals in \eqref{eq:phi-kkt}, which is therefore the sum of two Laplace transforms with different arguments. Hence, we continue the proof with another approach. 

\textbf{Step 2:} \textbf{In this step, we shall find the supremum of the support of $\mathbf{\mu^*_{\nu}}$ in terms of $\mathbf{\gamma(\nu)}$.} We first consider any non-zero point $\widetilde{x}\in \supp{\mu^*_{\nu}}$ and any $\Delta\in]0;\widetilde{x}[$. Since $\widetilde{x}\in \supp{\mu^*_{\nu}}$, there exists $\delta > 0$ with $\mu^*_{\nu}\left(]\widetilde{x} - \Delta, \widetilde{x} + \Delta[\right) = \delta$. Thus, for any $y$, by definition of $\modfirst{r}(y)$ and the law of total probability, we lower-bound $\modfirst{r}(y)$ by
\begin{align}
\modfirst{r}(y) 
  &= \E[\mu^*_{\nu}]{\frac{1}{1+\theta_m^2 X}e^{-\frac{y}{1+\theta_m^2 X}}}\\
  &{\geq} \E[\mu^*_{\nu}]{\frac{1}{1+\theta_m^2 X}e^{-\frac{y}{1+\theta_m^2 X}}\bigg|X\in ]\widetilde{x} - \Delta, \widetilde{x} + \Delta[} \mu^*_{\nu}(]\widetilde{x} - \Delta, \widetilde{x} + \Delta[) \\
  &\geq \frac{\delta}{1+\theta_m^2(\widetilde{x} + \Delta)} e^{-\frac{y}{1+\theta_m^2(\widetilde{x} - \Delta)}},
\end{align}
and similarly, lower-bound $f(z)$ by
\begin{align}
f(z) \geq \frac{\delta}{1+\widetilde{x} + \Delta} e^{-\frac{z}{1+\widetilde{x} - \Delta}}.
\end{align}
Substituting these bounds in \eqref{eq:phi-kkt}, we obtain
\begin{align}
  \displaybreak[0]0 &\leq -\log(\theta_m^2x+1) - 1 -\gamma(\nu)(1+x)- A(\nu) + \gamma(\nu)\nu\nonumber \\
  \displaybreak[0]  &\phantom{========}-\int_0^\infty p_x(y) \log \frac{\delta}{1+\theta_m^2(\widetilde{x} + \Delta)} e^{-\frac{y}{1+\theta_m^2(\widetilde{x} - \Delta)}} dy - \gamma(\nu) \int_{0}^\infty q_x(z) \log\frac{\delta}{1+\widetilde{x} + \Delta} e^{-\frac{y}{1+\widetilde{x} - \Delta}}dz\\
    \displaybreak[0]&=  -\log(\theta_m^2x+1) - 1 -\gamma(\nu)(1+x)- A(\nu) + \gamma(\nu)\nu\nonumber \\
  \displaybreak[0]  & \phantom{========}-\log \frac{\delta}{1+\theta_m^2(\widetilde{x} + \Delta)} + \frac{1+ \theta_m^2 x}{1+ \theta_m^2(\widetilde{x} - \Delta)} - \gamma(\nu)\pr{ \log\frac{\delta}{1+\widetilde{x} + \Delta} - \frac{1 + x}{1 + \widetilde{x} - \Delta} }\\
  \displaybreak[0]  &= \kappa -\log(\theta_m^2x+1) - x\pr{\gamma(\nu)\frac{\widetilde{x} - \Delta}{1 + \widetilde{x} - \Delta} - \frac{\theta_m^2}{1 + \theta_m^2(\widetilde{x} - \Delta)}    }, \label{eq:kkt-not-discrete-upper}
\end{align}
where $\kappa$ is a constant not depending on $x$. Since \eqref{eq:kkt-not-discrete-upper} holds for all $x$,  by taking the limit $x\to\infty$, we should have
\begin{align}
\gamma(\nu)\frac{\widetilde{x} - \Delta}{1 + \widetilde{x} - \Delta} - \frac{\theta_m^2}{1 + \theta_m^2(\widetilde{x} - \Delta)} \leq 0.
\end{align}
Moreover, by  letting $\Delta$ tend to zero, we obtain
\begin{align}
\label{eq:supp-bound}
\gamma(\nu)\frac{\widetilde{x}}{1 + \widetilde{x}} - \frac{\theta_m^2}{1 + \theta_m^2\widetilde{x} } \leq 0,
\end{align}
which implies that $x^*\eqdef \sup( \supp{\mu^*_{\nu}}) < \infty$. Furthermore, upon finiteness of $x^*$, we have
\begin{align}
\modfirst{r}(y) \leq e^{-\frac{y}{1+\theta_m^2x^*}},
\end{align}
and
\begin{align}
f(z) \leq  e^{-\frac{z}{1+x^*}}.
\end{align}
Replacing these upper-bounds in \eqref{eq:phi-kkt}, we obtain
\begin{align}
0 &\geq -\log(\theta_m^2x+1) - 1 -\gamma(\nu)(1+x)- A(\nu) + \gamma(\nu)\nu- \int_0^\infty p_x(y) \log e^{-\frac{y}{1+\theta_m^2x^*}}dy - \gamma(\nu) \int_{0}^\infty q_x(z) \log e^{-\frac{z}{1+x^*}}dz\\
&=  -\log(\theta_m^2x+1) - 1 -\gamma(\nu)(1+x)- A(\nu) + \gamma(\nu)\nu + \frac{1 + \theta_m^2 x}{1+\theta_m^2x^*}+ \gamma(\nu) \frac{1 + x}{1+x^*}\\
&= \kappa' -\log(\theta_m^2x+1) - x \pr{\gamma(\nu)\frac{{x}^*}{1 + {x}^*} - \frac{\theta_m^2}{1 + \theta_m^2{x}^* }  },\label{eq:kkt-not-discrete-lower}
\end{align}
where $\kappa'$ is a constant not depending on $x$. Since \eqref{eq:kkt-not-discrete-lower} holds for all $x$, we have
\begin{align}
\gamma(\nu)\frac{{x}^*}{1 + {x}^*} - \frac{\theta_m^2}{1 + \theta_m^2{x}^* } \geq 0.
\end{align}
By definition of the support of a distribution, it should be closed, and therefore, $x^*\in\supp{\mu^*_{\nu}}$. Since \eqref{eq:supp-bound} holds for all points in the support, we can set $\widetilde{x} = x^*$ and obtain
\begin{align}
\label{eq:xstar-gamma}
\gamma(\nu)\frac{{x}^*}{1 + {x}^*} - \frac{\theta_m^2}{1 + \theta_m^2{x}^* } =0.
\end{align}

\textbf{Step 3:} \textbf{Using the equality for ${x^*}$ in \eqref{eq:xstar-gamma}, we derive an upper-bound on ${A(\nu)}$ depending on $\gamma(\nu)$ and $\nu$.} By definition of $\mu^*_{\nu}$, it holds that
\begin{align}
A(\nu) 
\displaybreak[0]&= I(\mu^*_{\nu}, \wy)\\
\displaybreak[0]&= \E[\wy \times \mu^*_{\nu}]{\log \frac{p_X(Y)}{\modfirst{r}(Y)}}\\
\displaybreak[0]&= \E[\wy \times \mu^*_{\nu}]{\log \frac{p_X(Y) p_0(Y)}{\modfirst{r}(Y)p_0(Y)}}\\
\displaybreak[0]&= \E[\wy \times  \mu^*_{\nu}]{\log \frac{p_X(Y)}{p_0(Y)}} - \E[\wy \circ  \mu^*_{\nu}]{\log \frac{\modfirst{r}(Y)}{p_0(Y)}}\\
\displaybreak[0]&= \E[\wy \times  \mu^*_{\nu}]{\log \frac{p_X(Y)}{p_0(Y)}}  - \D{\modfirst{r}}{p_0}\\
\displaybreak[0]&\leq \E[\wy \times  \mu^*_{\nu}]{\log \frac{p_X(Y)}{p_0(Y)}}\\
\displaybreak[0]&= \E[\mu^*_{\nu}]{\theta_m^2X - \log(1 + \theta_m^2 X)}\\
\displaybreak[0]&\stackrel{(a)}{\leq}\E[\mu^*_{\nu}]{\frac{1}{2}\theta_m^\modfirst{4} X^2}\\
\displaybreak[0]&\leq \frac{1}{2}\theta_m^\modfirst{4} x^* \E{X}\\
\displaybreak[0]&\stackrel{(b)}{\leq}\frac{1}{2}\theta_m^\modfirst{4} x^* \pr{2\sqrt{\nu} + \nu},
\end{align}
where $(a)$ follows from $\log(1+x) \geq x - x^2/2$ for $x\geq 0$, and $(b)$ follows from Lemma~\ref{lm:expected-x-bound}. Therefore, we can use \eqref{eq:xstar-gamma} to obtain
\begin{align}
A(\nu) 
&\leq \frac{1}{2}\theta_m^\modfirst{4} \pr{\frac{\theta_m^\modfirst{4}(1+x^*)}{\gamma(\nu)(1+\theta_m^\modfirst{4} x^*)}}\pr{2\sqrt{\nu} + \nu}\\
&\leq \frac{2\sqrt{\nu} + \nu}{\gamma(\nu)}\pr{\frac{1}{2}\theta_m^\modfirst{4}(1+|1-\theta_m^\modfirst{4}|)}.
\end{align}

\textbf{Step 4:} \textbf{We complete the proof by obtaining a contradiction.} \modfirst{Lemma~\ref{lm:a-prop} part 4 implies that there exists $\nu_0 > 0$ and $C>0$ such that $A(\nu) \geq C\sqrt{\nu}$ for all $0<\nu\leq \nu_0$. By Theorem~\ref{th:kkt} part 4, we can choose $\nu_0$ small such that   $\gamma(\nu) > \frac{3}{C}\pr{\frac{1}{2}\theta_m^\modfirst{4}(1+|1-\theta_m^\modfirst{4}|)}$  in addition to $A(\nu) \geq C\sqrt{\nu}$ for all $0<\nu \leq \nu_0$. Since by decreasing $\nu_0$, the statement would be weaker, we can always assume that $\nu_0 < 1$.} Thus,
\begin{align}
C\sqrt{\nu} 
&\leq A(\nu)\\
&\leq \frac{2\sqrt{\nu} + \nu}{\gamma(\nu)}\pr{\frac{1}{2}\theta_m^\modfirst{4}(1+|1-\theta_m^\modfirst{4}|)} \\
&< \frac{2\sqrt{\nu} + \nu}{\frac{3}{C}\pr{\frac{1}{2}\theta_m^\modfirst{4}(1+|1-\theta_m^\modfirst{4}|)}}\pr{\frac{1}{2}\theta_m^\modfirst{4}(1+|1-\theta_m^\modfirst{4}|)}\\
&\leq C \sqrt{\nu}.
\end{align}

\end{proof}

\begin{lemma}
  \label{lm:finite-mass-points}
There exists $\nu_0 > 0$ such that for any $\nu_0>\nu > 0$, the support of $\mu^*_{\nu}$ has a finite number of points. 
\end{lemma}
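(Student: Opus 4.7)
The plan is to combine the previous lemma with a new boundedness argument. The previous lemma establishes that for $\nu \leq \nu_0$ small enough, $\supp{\mu^*_\nu}$ has no accumulation point and therefore contains only finitely many points in every bounded interval. To upgrade this local finiteness to global finiteness, it suffices to show that $x^* \eqdef \sup(\supp{\mu^*_\nu}) < \infty$, since a bounded locally finite subset of $[0,\infty[$ is finite.

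I would prove boundedness by contradiction. Suppose $x^* = \infty$, so that there exists a sequence $(x_n)_{n\geq 1} \subset \supp{\mu^*_\nu}$ with $x_n \to \infty$. Since $A(\nu) > 0$ by Lemma~\ref{lm:a-prop} part 4, the support necessarily contains a positive point; fix any $\tilde{x} \in \supp{\mu^*_\nu} \setminus \{0\}$ and any $\Delta \in ]0,\tilde{x}[$. The lower bounds on $r \eqdef \wy \circ \mu^*_\nu$ and $f \eqdef \wz \circ \mu^*_\nu$ derived in Step~2 of the previous lemma still hold verbatim. Substituting them into the KKT equality $w(x_n, \mu^*_\nu, \nu) = A(\nu)$ guaranteed by \eqref{eq:point-supp-w} — used here only pointwise at the support points $x_n$, rather than at all $x \geq 0$ via analytic continuation — reproduces inequality \eqref{eq:kkt-not-discrete-upper} at $x = x_n$, with a constant $\kappa$ independent of $n$.

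Letting $n \to \infty$ forces the coefficient of $x_n$ in \eqref{eq:kkt-not-discrete-upper} to be non-positive, and then sending $\Delta \to 0^+$ yields
\begin{align}
\gamma(\nu) \frac{\tilde{x}}{1+\tilde{x}} \leq \frac{\theta_m^2}{1+\theta_m^2 \tilde{x}} \quad \text{for every } \tilde{x} \in \supp{\mu^*_\nu} \setminus \{0\}.
\end{align}
The left-hand side is strictly increasing and the right-hand side strictly decreasing in $\tilde{x}$, so this inequality forces $\tilde{x} \leq X^*(\gamma(\nu), \theta_m) < \infty$. Applying it in particular to $\tilde{x} = x_m$ (and running the argument with the diverging subsequence $(x_n)_{n > m}$, which still converges to $\infty$) gives $x_m \leq X^*$ for every $m$, contradicting $x_m \to \infty$. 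Hence $x^* \leq X^*$, and $\supp{\mu^*_\nu}$ is simultaneously bounded and locally finite, hence finite.

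The main technical care required lies in checking that the Step~2 derivation, originally launched from the analytic-continuation identity $\phi_\nu \equiv 0$ on all of $\calD$, remains valid when the identity is available only along the support sequence $(x_n)$. This is indeed the case: the derivation of \eqref{eq:kkt-not-discrete-upper} rests solely on $\phi_\nu$ vanishing at the single evaluation point $x$, and the subsequent limit $x \to \infty$ merely requires an unbounded sequence of such points, which is precisely what the unboundedness hypothesis supplies.
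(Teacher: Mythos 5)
Your proof is correct and follows essentially the same route as the paper's. Both arguments assume unboundedness of the support, evaluate the KKT equality \eqref{eq:point-supp-w} pointwise along the diverging support sequence to force the coefficient of $x$ in \eqref{eq:kkt-not-discrete-upper} to be non-positive, and then observe that this condition applied to arbitrarily large support points is incompatible with $\gamma(\nu)>0$. The paper reaches the contradiction by sending the anchor point $x_j\to\infty$ in the coefficient inequality to conclude $-\gamma(\nu)\geq 0$ directly; you reach the same contradiction by phrasing the coefficient inequality as a uniform bound $X^*(\gamma(\nu),\theta_m)$ on every support point and then noting $x_m\to\infty$ violates it. This is a cosmetic reorganization, not a different proof. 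Your explicit remark that the derivation of \eqref{eq:kkt-not-discrete-upper} does not require the analytic-continuation step (only $\phi_\nu(x_n)=0$ at the support points themselves) is accurate and is in fact also implicit in the paper's proof of this lemma, which invokes \eqref{eq:point-supp-w} pointwise rather than the global identity.
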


\begin{proof}
We proceed by contradiction. Assume that the support of $\mu^*_{\nu}$ has infinitely many points $\{x_i\}_{i=1}^{\infty}$ in increasing order with probabilities $\{\alpha_i\}_{i=1}^{\infty}$. Since we proved that in any bounded interval, we can only have a finite number of points, $\lim_{i\to\infty}x_i=\infty$. Note that for any $j\geq 1$, we have 
\begin{align}
(\wy\circ \mu^*_{\nu})(y)
& =\sum_{i=1}^{\infty}\alpha_i p_{x_i}(y)\\
&\geq  \alpha_j p_{x_j}(y),
\end{align}
and
\begin{align}
(\wz\circ \mu^*_{\nu})\pr{z} \geq \alpha_j q_{x_j}\pr{z}.
\end{align}
Therefore, for all $j\geq 1$, we can upper-bound  $\phi_{\nu}(x)$ defined in \eqref{eq:phi-def} as \modfirst{
\begin{align}
\phi_{\nu}(x) 
&= \int_0^\infty p_x(y) \log \frac{p_x(y)}{(\wy \circ \mu_\nu^*)(y)} dy - \gamma(\nu) \int_{0}^\infty q_x(z) \log \frac{(\wz\circ \mu_\nu^*)(z)}{q_0(z)}dz - A(\nu) + \gamma(\nu)\nu\\
&\leq \int_0^\infty p_x(y) \log \frac{p_x(y)}{ \alpha_j p_{x_j}(y)} dy - \gamma(\nu) \int_{0}^\infty q_x(z) \log \frac{\alpha_j q_{x_j}(z)}{q_0(z)}dz - A(\nu) + \gamma(\nu)\nu\\
&= \log(\theta_m^2x +1) -1 - \log \frac{ \alpha_j}{1+\theta_m^2x_j} + \frac{1+\theta_m^2x}{1+\theta_m^2x_j} - \gamma(\nu)\pr{1+x  - \log \frac{ \alpha_j}{1+x_j} + \frac{1+x}{1+x_j} }-A(\nu) + \gamma(\nu)\nu \\
&= \kappa + \log(\theta_m^2x+1) + \pr{-\gamma(\nu) +\frac{\gamma(\nu)}{1+x_j} +\frac{\theta_m^2}{1+\theta_m^2x_j} }x,\label{eq:kkt-finite}
\end{align}}
where $\kappa$ is a constant not depending on $x$. Furthermore, the KKT condition in \eqref{eq:point-supp-w} implies that \eqref{eq:kkt-finite} is non-negative for all $x_i$, and since $x_i$ can be large enough, we should have
\begin{align}
-\gamma(\nu) +\frac{\gamma(\nu)}{1+x_j} +\frac{\theta_m^2}{1+\theta_m^2x_j} \geq 0.
\end{align}
Because $x_j$ can be large enough, we have $-\gamma(\nu) \geq 0$. This cannot be true for small $\nu$ since $\lim_{\nu\to0^+} \gamma(\nu)= \infty$ by Theorem~\ref{th:kkt}.
\end{proof}

\begin{lemma}
\label{lm:mass-point-zero}
There exists $\nu_0>0$ such that  for all $\nu_0>\nu > 0$, $\mu^*_{\nu}$ has a mass point at 0.
\end{lemma}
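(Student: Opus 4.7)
I argue by contradiction. Suppose that for arbitrarily small $\nu>0$, $\mu^*_\nu$ has no mass point at $0$. By Lemma~\ref{lm:finite-mass-points}, $\supp{\mu^*_\nu}=\{x_1<\cdots<x_k\}$ with $x_1>0$ and probabilities $\alpha_1,\ldots,\alpha_k$. The case $k=1$ is immediate: a single mass point gives $I(\mu^*_\nu,\wy)=0$, contradicting $A(\nu)\geq C\sqrt{\nu}>0$ that follows from the achievability part of Theorem~\ref{th:general-capacity-non-coherent} together with Lemma~\ref{lm:a-prop} part~4. So I may assume $k\geq 2$.

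The plan is to construct an alternative distribution $\tilde\mu$ that has a mass point at $0$, remains feasible for \eqref{eq:optimation-a-nu}, and strictly exceeds $\mu^*_\nu$ in mutual information, contradicting the uniqueness of the optimum in Lemma~\ref{lm:exist-unique}. By complementary slackness (Theorem~\ref{th:kkt} part~4 guarantees $\gamma(\nu)>0$ for small $\nu$), $\D{\wz\circ\mu^*_\nu}{q_0}=\nu$. Combining $D\approx\tfrac{1}{2}\chi_2$ for distributions close to $q_0$ with $\chi_2(\wz\circ\mu\|q_0)=\E[\mu\otimes\mu]{X_1X_2/(1-X_1X_2)}$ yields the small-$\nu$ expansions
\begin{align*}
\D{\wz\circ\mu}{q_0}=\tfrac{1}{2}\bigl(\E[\mu]{X}\bigr)^2+o(\nu),\qquad I(\mu,\wy)=\tfrac{1}{2}\theta_m^4\bigl(\E[\mu]{X^2}-\bigl(\E[\mu]{X}\bigr)^2\bigr)+o(\nu),
\end{align*}
uniformly over $\mu$ supported in $[0,O(\sqrt\nu)]$; in particular $\E[\mu^*_\nu]{X}=O(\sqrt\nu)$ and all $x_i\to 0$ as $\nu\to 0^+$. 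Define $\tilde\mu$ by transferring the mass $\alpha_1$ at $x_1$ to $0$ and simultaneously translating the largest support point $x_k\mapsto\tilde x_k\eqdef x_k+\alpha_1 x_1/\alpha_k$, so that $\E[\tilde\mu]{X}=\E[\mu^*_\nu]{X}$ and hence $\D{\wz\circ\tilde\mu}{q_0}=\nu+o(\nu)$. A direct calculation gives
\begin{align*}
\E[\tilde\mu]{X^2}-\E[\mu^*_\nu]{X^2}=\alpha_k(\tilde x_k^2-x_k^2)-\alpha_1 x_1^2=\alpha_1 x_1(\tilde x_k+x_k-x_1)>0,
\end{align*}
because $x_k>x_1$, so the leading-order mutual information strictly increases, yielding $I(\tilde\mu,\wy)>A(\nu)$ at leading order.

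The main obstacle is upgrading this leading-order strict improvement to an exact contradiction in the presence of the $o(\nu)$ corrections: one must control the remainder terms in the expansions above uniformly over $\mu$ supported on $[0,O(\sqrt\nu)]$, and fine-tune $\tilde\mu$ (for instance, by an additional small rescaling of $\tilde x_k$) to enforce $\D{\wz\circ\tilde\mu}{q_0}\leq\nu$ exactly rather than only up to $o(\nu)$. A parallel route that bypasses the asymptotic calculus is to apply Theorem~\ref{th:kkt} part~3 directly at $x=0$: since $0\notin\supp{\mu^*_\nu}$, the condition becomes
\begin{align*}
\D{p_0}{\wy\circ\mu^*_\nu}+\gamma(\nu)\D{q_0}{\wz\circ\mu^*_\nu}+\gamma(\nu)\nu\leq A(\nu),
\end{align*}
which can be contradicted for small $\nu$ by Jensen-based lower bounds on both divergences in terms of $\E[\mu^*_\nu]{X^2}$, combined with an upper bound on $A(\nu)$ of the same quadratic order (obtained as in step~3 of the proof of Lemma~\ref{lm:finite-mass-points}), while exploiting the strict gap $\bigl(\E[\mu^*_\nu]{X}\bigr)^2<\E[\mu^*_\nu]{X^2}$ (valid because $k\geq 2$ and no mass lies at $0$).
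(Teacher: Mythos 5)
Your proposal does not close; you correctly flag the gap yourself, but the paper sidesteps it entirely with a non-asymptotic local-perturbation argument that is worth contrasting. The paper assumes for contradiction that $\mu^*_\nu$ has support $\{x_1<\cdots<x_k\}$ with $x_1>0$, cites \cite{Abou-Faycal2001} for the fact that decreasing $x_1$ strictly increases $I(\mu,\wy)$, and then shows by a Leibniz-rule computation plus Lemma~\ref{prop:croissance} that $\partial\D{\wz\circ\mu}{q_0}/\partial x_1>0$ as well, the key observation being that $\log\frac{(\wz\circ\mu)(z)}{q_0(z)}$ is strictly increasing in $z$ whenever all $x_i>0$. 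Decreasing $x_1$ therefore both improves the objective and relaxes the constraint, so $x_1>0$ is incompatible with $\mu^*_\nu$ being the unique optimizer. This argument is exact for each fixed small $\nu$; the threshold $\nu_0$ is only needed to invoke Lemma~\ref{lm:finite-mass-points}, and no asymptotic expansion ever enters.

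In your main route the expansion $\D{\wz\circ\mu}{q_0}=\tfrac12(\E[\mu]{X})^2+o(\nu)$ is not safe: since $\chi_2(\wz\circ\mu\|q_0)=\sum_{j\geq1}\bigl(\E[\mu]{X^j}\bigr)^2$, the dropped term $\bigl(\E[\mu]{X^2}\bigr)^2$ is of the same $\Theta(\nu)$ order as the retained one (the identity $A(\nu)=\Theta(\sqrt\nu)$ together with $I\approx\tfrac12\theta_m^4\bigl(\E{X^2}-(\E{X})^2\bigr)$ forces $\E[\mu^*_\nu]{X^2}=\Theta(\sqrt\nu)$, whence $\bigl(\E{X^2}\bigr)^2=\Theta(\nu)$). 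Your modification deliberately raises $\E{X^2}$ while fixing $\E{X}$, so it raises $\D{\wz\circ\tilde\mu}{q_0}$ at a non-negligible order, and the feasibility repair you defer is the actual content of the proof, not a formality. Your second route is closer in spirit to the paper's KKT machinery (it uses Theorem~\ref{th:kkt} part~3 just as the preceding discreteness lemma does), but you give no order-matching estimates showing $w(0,\mu^*_\nu,\nu)>A(\nu)$, and it is not self-evident the inequality goes the right way. The missing idea in both routes is precisely the paper's: pushing the smallest support point toward zero is the single move that simultaneously improves the objective and strictly relaxes the constraint, which turns the contradiction into an exact first-order one.
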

 The proof of Lemma~\ref{lm:mass-point-zero} will require the following technical result which is a modification of \cite[Lemma 1]{Abou-Faycal2001}.
 \begin{lemma}\label{prop:croissance}
 Let $f(z)$ be a \ac{PDF} with mean $m$ and $g(z)$ be a strictly monotonically increasing function, then $\int(z-m)f(z)g(z)dz>0$.
 \end{lemma}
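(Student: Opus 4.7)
The plan is to use the classical trick of subtracting off the mean inside the integral, which converts the statement into an obvious sign claim about a product of like-signed factors.

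First I would write $g(z) = (g(z) - g(m)) + g(m)$ and split
\begin{align}
\int (z-m) f(z) g(z)\, dz = \int (z-m)(g(z) - g(m)) f(z)\, dz + g(m) \int (z-m) f(z)\, dz.
\end{align}
The second integral vanishes because $\int z f(z)\, dz = m$ and $\int f(z)\, dz = 1$ by assumption on $f$, so I only need to show that the first integral is strictly positive.

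The key observation is that since $g$ is strictly monotonically increasing, for every $z \neq m$ the two factors $(z-m)$ and $(g(z) - g(m))$ share the same strict sign: both are positive when $z > m$ and both are negative when $z < m$. Hence the integrand $(z-m)(g(z) - g(m)) f(z)$ is non-negative everywhere, and strictly positive on $\{z : z \neq m\}$ whenever $f(z) > 0$.

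The only step that requires a small amount of care is concluding strict positivity (not merely non-negativity). Since $f$ is a genuine probability density, the set $\{z : f(z) > 0\}$ has positive Lebesgue measure (it must integrate to $1$), and removing the single point $\{m\}$ does not change this. On this set the integrand is strictly positive, so the integral is strictly positive. This gives $\int (z-m) f(z) g(z)\, dz > 0$, as claimed.
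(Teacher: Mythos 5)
Your proof is essentially the same as the paper's: both subtract $g(m)$ to reduce the claim to the non-negativity (with positive mass) of $(z-m)(g(z)-g(m))f(z)$, using that $\int(z-m)f(z)\,dz=0$. You simply spell out the strict-positivity step a bit more carefully than the paper does, but the decomposition and key observation are identical.
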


 \begin{proof}
  $(z-m)(g(z)-g(m))$ is always positive as either the product of two negative terms if $z<m$ or two 
  positive terms if $z>m$. Thus, $(z-m)g(z)>(z-m)g(m)$ and $\int (z-m)g(z)f(z)dz>\int(z-m)g(m)f(z)dz=0.$
 \end{proof}

\begin{proof}[Proof of Lemma~\ref{lm:mass-point-zero}] 
\modfirst{Let $\nu_0$ be as in Lemma~\ref{lm:finite-mass-points} so that $\mu_\nu^*$ has finite number of mass points for all $0< \nu \leq \nu_0$. For the sake of a contradiction, assume that $\mu_\nu^*$ is a  discrete probability measure over $\calX$ with $k$ mass points $0 < x_1 < \cdots < x_k$ with corresponding  probabilities $\alpha_1, \cdots ,\alpha_k$.} In \cite{Abou-Faycal2001}, it is proved that reducing $x_1$ increases the mutual information $I(\mu, \wy)$. Therefore, to complete the proof, it is enough to show that $\frac{\partial  \D{\wz\circ\mu}{q_0}}{\partial x_1}>0$. Defining $f(x_1, z) \eqdef (\wz \circ \mu)(z) \log \frac{(\wz \circ \mu)(z) }{q_0(z)}$, we have 
\begin{align}
\frac{\partial \D{\wz\circ\mu}{q_0}}{\partial x_1} = \frac{\partial}{\partial x_1}\int_{\calZ} f(x_1, z) dz.
\end{align}
By Lemma~\ref{lm:cross-entropy}, $\int_{\calZ}|f(x_1, z)|dz < \infty$, and we have
\begin{align}
\frac{\partial f}{\partial x_1} (x_1, z) = \frac{\alpha_1}{(1+x_1)^2} q_{x_1}(z) \pr{z-\E[q_{x_1}]{Z}} \pr{\log \frac{(\wz\circ\mu)(z)}{q_0(z)} + 1},
\end{align}
which satisfies that
\begin{align}
\left| \frac{\partial f}{\partial x_1} (x_1, z)  \right| \leq e^{-\frac{z}{1+x_1}} (z + x_1 + 1)\pr{2z + \E[\mu]{X} + 1}.\label{eq:bound-f-z-x1}
\end{align}
The right hand side of \eqref{eq:bound-f-z-x1}, is bounded with an integrable function of $z$ independent of $x_1$, if $x_1$ is bounded. Hence, Theorem~\ref{th:leibniz} implies that
 \begin{align}
 \frac{\partial \D{\wz\circ \mu}{q_0}}{\partial x_1}=
 \alpha_1 \frac{1}{(1+ x_1)^2}\int_0^\infty (z-\E[q_{x_1}]{Z})q_{x_1}(z)\pr{\log \frac{(\wz\circ \mu)(z)}{q_0(z)} + 1} dz.\label{eq:eq:derv-d-x}
 \end{align} 
 Note that 
 \begin{align}
  \log \frac{(\wz\circ \mu)(z)}{q_0(z)}
  &=\log \frac{\sum_{i=1}^k \alpha_i \frac{1}{x_i+1}e^{-\frac{z}{x_i+1}}}{e^{-z}}\\
  &=\log \sum_{i=1}^k \alpha_i\frac{1}{x_i+1}e^{z\frac{x_i}{x_i +1}}.
 \end{align}
Since $1>\frac{1}{x_1+1}>\cdots>\frac{1}{x_k+1}$, 
 $\log \frac{(\wz\circ \mu)(z)}{q_0(z)} + 1$ is strictly monotonically increasing in $z$. \modfirst{Using Lemma~\ref{prop:croissance}, $\frac{\partial  \D{\wz\circ\mu}{q_0}}{\partial x_1}>0$, and hence, by decreasing $x_1$, the constraint $ \D{\wz\circ\mu}{q_0} \leq \nu$ still holds and $I(\mu, \wy)$ is increased. This contradicts with the definition of $\mu_\nu^*$, and therefore, there exists a mass point at zero.}
\end{proof}

\subsubsection{Step three: an amplitude constraint}

For a probability measure $\mu$ on $\calX$ and $a>0$, we define $\mathbb{C}_a[\mu]$ as a new probability measure $\widetilde{\mu}$ on $\calX$ such that
\begin{align}
\widetilde{\mu}(]-\infty, x[) =\begin{cases} {\mu}(]-\infty, x[)\quad & x < a,\\ 1\quad & x \geq a.\end{cases}
\end{align}
Intuitively, $\widetilde{\mu}$ is obtained by moving all probability of $]a, \infty[$ in $\mu$ to a mass point at $a$. 
\begin{theorem}
\label{th:bounded-supp}
Let $\{\nu_n\}_{n\geq1}$ be $o(1)$. For all $a > 1$, if $n$ is large enough, we have $\mathbb{C}_a[\mu^*_{\nu_n}] \in \Omega_a\pr{\nu_n}$ and
\begin{align}
\liminf_{n\to\infty}\frac{I(\mu^*_{\nu_n}, \wy)}{\sqrt{\D{\wz\circ\mu^*_{\nu_n}}{q_0}}} \leq \liminf_{n\to\infty}\frac{I(\mathbb{C}_a[\mu^*_{\nu_n}], \wy)}{\sqrt{\D{\wz\circ\mathbb{C}_a[\mu^*_{\nu_n}] }{q_0}}}.
\end{align}
\end{theorem}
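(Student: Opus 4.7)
The plan is to observe that $x_n^* \eqdef \sup(\supp{\mu_{\nu_n}^*})$ itself tends to $0$ as $n\to\infty$, so that for large $n$ the truncation $\mathbb{C}_a$ acts trivially on $\mu_{\nu_n}^*$ and the claimed $\liminf$ inequality becomes an equality. In other words, the stronger statement I aim to prove is that $\mathbb{C}_a[\mu_{\nu_n}^*] = \mu_{\nu_n}^*$ eventually, from which both the membership $\mathbb{C}_a[\mu_{\nu_n}^*] \in \Omega_a(\nu_n)$ and the inequality follow at once.

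To establish $x_n^* \to 0$, I would invoke equation~\eqref{eq:xstar-gamma} from the proof of the discreteness lemma, which guarantees that whenever $\mu_{\nu_n}^*$ has at least one non-zero mass point the supremum $x_n^*$ belongs to the (closed) support and satisfies
\begin{align}
\gamma(\nu_n)\frac{x_n^*}{1 + x_n^*} \;=\; \frac{\theta_m^2}{1 + \theta_m^2 x_n^*} \;\leq\; \theta_m^2.
\end{align}
Rearranging gives $x_n^* \leq \theta_m^2/(\gamma(\nu_n) - \theta_m^2)$ as soon as $\gamma(\nu_n) > \theta_m^2$. Since $\nu_n \to 0$, Theorem~\ref{th:kkt} part 4 forces $\gamma(\nu_n) \to \infty$, and hence $x_n^* \to 0$. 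The degenerate case $\mu_{\nu_n}^* = \delta_0$ gives $x_n^* = 0$ directly. Fixing $a > 1$ and picking $N$ large enough that $x_n^* < a$ for all $n \geq N$, one then has $\supp{\mu_{\nu_n}^*} \subseteq [0, x_n^*] \subset [0, a[$, and the definition of $\mathbb{C}_a$ forces $\mathbb{C}_a[\mu_{\nu_n}^*] = \mu_{\nu_n}^*$ for all such $n$. Both the covertness $\D{\wz \circ \cdot}{q_0}$ and the mutual information $I(\cdot, \wy)$ are therefore preserved exactly, so $\mathbb{C}_a[\mu_{\nu_n}^*] \in \Omega_a(\nu_n)$ and the $\liminf$ inequality holds with equality for all $n\geq N$.

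The only point requiring care is the correct identification of~\eqref{eq:xstar-gamma} and of $\gamma(\nu_n) \to \infty$ as already-available consequences of the KKT analysis; beyond that, no further perturbation or approximation argument is needed, and the theorem really is a soft corollary of the earlier machinery. In fact the same reasoning shows that any fixed $a > 0$ would suffice; the weaker hypothesis $a > 1$ is what is convenient for the subsequent single-letterization step, which optimizes over the set $\widetilde{\Omega}^{>0}$ of discrete measures supported in $\,]0,1[\,$.
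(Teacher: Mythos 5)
The plan to show $x_n^* \eqdef \sup(\supp{\mu_{\nu_n}^*}) \to 0$ rests on a misreading of where equation~\eqref{eq:xstar-gamma} is available. In the paper that identity is derived \emph{inside} a proof by contradiction: Step~1 of the discreteness lemma uses the identity theorem for analytic functions to extend $\phi_\nu(x) = 0$ from the support to all of $\calX$, and this extension is only valid under the contradiction hypothesis that $\supp{\mu_\nu^*}$ has a limit point. Both directions of \eqref{eq:xstar-gamma} are then deduced from ``$\phi_\nu(x) = 0$ for all $x$.'' Once the contradiction is reached and the hypothesis discarded, \eqref{eq:xstar-gamma} is discarded with it and is not a property of the actual $\mu_\nu^*$. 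Outside the contradiction, the KKT conditions give only $\phi_\nu(x) \leq 0$ for all $x$ and $\phi_\nu(x) = 0$ on $\supp{\mu_\nu^*}$. Feeding the upper bounds $r(y) \leq e^{-y/(1+\theta_m^2 x^*)}$, $f(z) \leq e^{-z/(1+x^*)}$ into $\phi_\nu(x) \leq 0$ and letting $x \to \infty$ yields only the one-sided inequality
\begin{align}
\gamma(\nu)\frac{x^*}{1+x^*} \;\geq\; \frac{\theta_m^2}{1+\theta_m^2 x^*},
\end{align}
which is a \emph{lower} bound on $\gamma(\nu)$ in terms of $x^*$ and gives no upper bound on $x^*$ whatsoever; the reverse inequality would require $\phi_\nu(x) \geq 0$ for arbitrarily large $x$, which you do not have.

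As a result the conclusion $x_n^* \to 0$ is unsupported, and in fact is almost certainly false: the achievability construction in Step~2 of Section~\ref{sec:achievability-proof} attains the optimal ratio with $\mu_n = \alpha_n\mu + (1-\alpha_n)\mu_0$, whose support stays fixed while only $\alpha_n \to 0$, suggesting that $\mu_\nu^*$ concentrates mass at $0$ without shrinking its support. The paper's proof of Theorem~\ref{th:bounded-supp} is accordingly not a soft corollary: it explicitly allows $x_n^*$ to grow like $\log(1/\nu_n)$, bounds the tail mass $\mu_{\nu_n}^*(]a,\infty[)$ exponentially in $x_n^*$ via Lemma~\ref{lm:a-prob-bound-exp} (and, in the complementary regime, polynomially in $\nu_n$ via Lemma~\ref{lm:d-two-points}), and then uses Lemma~\ref{lm:ca-i-d} to show the truncation perturbs $I$ and the relative entropy by only $o(\sqrt{\nu_n})$. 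The hypothesis $a>1$ is load-bearing in that argument (Lemma~\ref{lm:d-two-points} needs $a/(a+1) > 1/2$), so your remark that any $a>0$ would do is another symptom of having replaced the actual estimates with the unavailable identity \eqref{eq:xstar-gamma}.
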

To prove this result, we need the following lemmas.
\begin{lemma}
\label{lm:ca-i-d}
If $\mu$ is a discrete probability measure on $\calX$ with finite number of mass points $x_1<\cdots<x_k$ and corresponding probabilities $\alpha_1, \cdots, \alpha_k$, then
\begin{align}
\D{\wz\circ\mathbb{C}_a[\mu] }{q_0} &\leq \D{W_{Z|X}\circ \mu}{q_0}, \\
I(\mathbb{C}_a[\mu], \wy) &\geq I(\mu, \wy) - \theta_m^2\modfirst{\max}(\supp{\mu}	) \mu(]a, \infty[).
\end{align}
\end{lemma}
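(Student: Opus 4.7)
The plan is to derive both inequalities from a single monotonicity fact, in the same spirit as the derivative computation performed in the proof of Lemma~\ref{lm:mass-point-zero}: for any discrete probability measure $\mu$ on $\calX$ and any mass point $x_j > a$, sliding $x_j$ down to $a$ while keeping all other masses fixed does not increase $\D{\wz\circ\mu}{q_0}$, and the analogous statement holds with $\wy,p_0$ in place of $\wz,q_0$. Iterating this operation across all indices with $x_j > a$, merging the mass accumulated at $a$ as we go, transforms $\mu$ into $\mathbb{C}_a[\mu]$ and proves the first inequality. The second inequality then follows by combining the $\wy$-version of this monotonicity with the ``golden formula'' $I(\mu,\wy) = \E[\mu]{\D{p_X}{p_0}} - \D{\wy\circ\mu}{p_0}$. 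The case $\mu(]a,\infty[) = 0$ is trivial since then $\mathbb{C}_a[\mu] = \mu$.

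To establish the monotonicity, I would fix all mass points except the $j$-th and study $g(t) \eqdef \D{\wz\circ\mu_t}{q_0}$ as a function of $t\in[a, x_j]$, where $\mu_t$ coincides with $\mu$ except that the $j$-th mass is placed at $t$ (still with weight $\alpha_j$). Using $\partial_t q_t(z) = q_t(z)(z-(1+t))/(1+t)^2$ and differentiating under the integral sign, justified exactly as in the derivation of~\eqref{eq:eq:derv-d-x} by invoking Theorem~\ref{th:leibniz} with a uniform integrable dominating bound on $\partial_t[(\wz\circ\mu_t)(z)\log((\wz\circ\mu_t)(z)/q_0(z))]$, one obtains
\begin{align}
g'(t) = \frac{\alpha_j}{(1+t)^2}\int_0^\infty q_t(z)\,(z-(1+t))\,\pr{\log\frac{(\wz\circ\mu_t)(z)}{q_0(z)} + 1}\,dz.
\end{align}
Since $q_x(z)/q_0(z) = \frac{1}{1+x}\,e^{z\cdot x/(1+x)}$ is non-decreasing in $z$ for every $x\geq 0$ and strictly increasing when $x>0$, and $t\geq a>0$, the factor $z\mapsto\log((\wz\circ\mu_t)(z)/q_0(z))+1$ is strictly increasing in $z$. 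Applying Lemma~\ref{prop:croissance} to the PDF $q_t$ (mean $1+t$) yields $g'(t)>0$, so $g(a) < g(x_j)$. The identical argument with $\wy,p_0$ in place of $\wz,q_0$ gives $\D{\wy\circ\mathbb{C}_a[\mu]}{p_0}\leq \D{\wy\circ\mu}{p_0}$.

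For the second inequality, combining the ``golden formula'' with $\D{p_x}{p_0} = \theta_m^2 x - \log(1+\theta_m^2 x)$ and the monotonicity of $\D{\wy\circ\mu}{p_0}$ gives
\begin{align}
I(\mathbb{C}_a[\mu], \wy) - I(\mu, \wy) \geq -\sum_{j:\,x_j > a}\alpha_j\pr{\theta_m^2(x_j-a) - \log\frac{1+\theta_m^2 x_j}{1+\theta_m^2 a}}.
\end{align}
Since the log term is non-negative for $x_j > a$, dropping it and using $x_j-a\leq x_j\leq\max(\supp{\mu})$ together with $\sum_{j:\,x_j>a}\alpha_j = \mu(]a,\infty[)$ yields the claimed bound. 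The main technical obstacle is the careful but routine dominated-convergence justification for the formula of $g'(t)$, which is entirely analogous to the justification used in the derivation of~\eqref{eq:eq:derv-d-x}; everything else is elementary manipulation.
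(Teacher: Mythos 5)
Your proof is correct and follows essentially the same path as the paper's: both establish the first inequality by differentiating $\D{\wz\circ\mu}{q_0}$ with respect to a mass-point location and invoking Lemma~\ref{prop:croissance} (exactly as in the derivation of~\eqref{eq:eq:derv-d-x}), and both obtain the second inequality from the decomposition $I(\mu,\wy)=\sum_i\alpha_i\D{p_{x_i}}{p_0}-\D{\wy\circ\mu}{p_0}$ together with the $\wy$-version of the same monotonicity. The only cosmetic difference is in the final bookkeeping — you drop the log-ratio term, the paper drops the $-\mu(]a,\infty[)\pr{\theta_m^2a-\log(1+\theta_m^2a)}$ term — and since $\sum_{j:x_j>a}\alpha_j=\mu(]a,\infty[)$ these yield the same expression and the identical final bound.
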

\begin{proof}
Similar to~\eqref{eq:eq:derv-d-x}, for all $i\in\intseq{1}{k}$, we have
\begin{align}
\frac{\partial }{\partial x_i}\D{\wz\circ \mu}{q_0} =  \alpha_i \frac{1}{(1+ x_i)^2}\int_0^\infty (z-\E[q_{x_i}]{Z})q_{x_i}(z)\log \frac{(\wz\circ \mu)(z)}{q_0(z)}dz\geq 0.
\end{align}
Hence, by moving all mass points located in $]a, \infty[$ to $a$ to obtain $\mathbb{C}_a[\modfirst{\mu}]$, we decrease the relative entropy. Applying the same argument to the channel $\wy$, we  have $\D{\wy\circ \mathbb{C}_a[\mu]}{p_0} \leq \D{\wy\circ \mu}{p_0}$. Additionally, we have
\begin{align}
I(\mu, \wy) 
&= \sum_{i=1}^k \alpha_i \D{p_{x_i}}{p_0} - \D{\wy\circ \mu}{p_0}\\
&=  \sum_{i=1}^k \alpha_i \pr{\theta_m^2x_i - \log(1+\theta_m^2x_i)}- \D{\wy\circ \mu}{p_0},
\end{align}
which implies that
\begin{align}
&I(\mu, \wy)  - I(\mathbb{C}_a[\mu], \wy)\\
&= \pr{\sum_{i:x_i > a}\alpha_i \pr{\theta_m^2x_i  -  \log(1+\theta_m^2x_i) }  - \mu(]a, \infty[) \pr{\theta_m^2a  -  \log(1+\theta_m^2a) }}  +\nonumber\\
&\phantom{==}\pr{ -  \D{\wy\circ \mu}{p_0} + \D{\wy\circ \mathbb{C}_a[\mu]}{p_0}}\\
&\leq \sum_{i:x_i > a}\alpha_i \pr{\theta_m^2x_i  -  \log(1+\theta_m^2x_i) }  - \mu(]a, \infty[) \pr{\theta_m^2a  -  \log(1+\theta_m^2a) } \\
&\leq \sum_{i:x_i > a}\alpha_i \pr{\theta_m^2x_i  -  \log(1+\theta_m^2x_i) }\\
&\leq \theta_m^2\modfirst{\max}(\supp{\mu}) \mu(]a, \infty[).
\end{align}
\end{proof}
\begin{lemma}
\label{lm:a-prob-bound-exp}
For all $a>0$, there exist $\nu_0 > 0$, $\widetilde{x}\in\calX$, and $\xi>0$ such that for all $0<\nu\leq \nu_0$, if $\modfirst{\max}( \supp{\mu_v^*}) \geq \widetilde{x}$, then $\mu_v^*(]a, \infty[) \leq 2^{-\xi \modfirst{\max}(\supp{\mu_v^*}) }$.
\end{lemma}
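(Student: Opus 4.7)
The plan is to apply the KKT condition of Theorem~\ref{th:kkt} at every mass point of $\mu_\nu^*$ and combine the resulting per-point estimates with the moment bound $\E[\mu_\nu^*]{X} \leq 2\sqrt{\nu}+\nu$ of Lemma~\ref{lm:expected-x-bound}. Write $x^\star \eqdef \max(\supp{\mu_\nu^*})$, $\alpha_j \eqdef \mu_\nu^*(\{x_j\})$ for $x_j \in \supp{\mu_\nu^*}$, $r \eqdef \wy\circ\mu_\nu^*$, and $f \eqdef \wz\circ\mu_\nu^*$.

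First, I would derive a pointwise bound on every $\alpha_j$. The inequalities $r \geq \alpha_j p_{x_j}$ and $f \geq \alpha_j q_{x_j}$ yield $\D{p_{x_j}}{r} \leq -\log\alpha_j$ and $\D{q_{x_j}}{f} \leq -\log\alpha_j$. Rewriting the KKT identity $w(x_j,\mu_\nu^*,\nu) = A(\nu)$ using the explicit expressions $\int p_{x_j}\log p_{x_j} = -\log(1+\theta_m^2 x_j)-1$ and $\int q_{x_j}\log q_0 = -1-x_j$ gives
\begin{align*}
\D{p_{x_j}}{r} + \gamma(\nu)\D{q_{x_j}}{f} = A(\nu) - \gamma(\nu)\nu + \gamma(\nu)\bigl(x_j - \log(1+x_j)\bigr),
\end{align*}
so substituting the divergence bounds produces
\begin{align*}
\log(1/\alpha_j) \geq \frac{A(\nu) + \gamma(\nu)\bigl(x_j - \log(1+x_j) - \nu\bigr)}{1+\gamma(\nu)}.
\end{align*}
Since $\gamma(\nu)\to\infty$, $\gamma(\nu)\nu\to 0$ (Theorem~\ref{th:kkt} part~4) and $A(\nu)$ is bounded, for $\nu$ sufficiently small every mass point obeys the uniform bound $\alpha_j \leq 2(1+x_j)e^{-x_j}$.

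Second, I would assemble this per-point estimate into a bound on $\mu_\nu^*(]a,\infty[)$. The Markov-type inequality $\mu_\nu^*(]a,\infty[) \leq \E[\mu_\nu^*]{X}/a \leq (2\sqrt{\nu}+\nu)/a$ alone yields an $O(\sqrt{\nu})$ bound that is independent of $x^\star$. To upgrade it to an $x^\star$-dependent bound, I would first show $x^\star = O(\log(1/\nu))$ by combining the lower bound $A(\nu) \geq C\sqrt{\nu}$ (Lemma~\ref{lm:a-prop} part~4) with the upper bound $A(\nu) \leq \frac{1}{2}\theta_m^4 x^\star \E[\mu_\nu^*]{X}$ derived in Step~3 of the proof of Lemma~\ref{lm:finite-mass-points} and with the KKT estimate of the first step specialised to $j=\star$, which together force $\sqrt{\nu}$ to be exponentially small in $x^\star$. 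Choosing $\widetilde{x}$ large (depending on $a$) and $\xi < 1/(2\log 2)$ then yields $\mu_\nu^*(]a,\infty[) \leq (2\sqrt{\nu}+\nu)/a \leq 2^{-\xi x^\star}$ whenever $x^\star \geq \widetilde{x}$ and $\nu \leq \nu_0$.

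The main obstacle is the second step: while the per-point bound $\alpha_j \leq 2(1+x_j)e^{-x_j}$ controls each mass individually and the inequality $\alpha^\star \leq 2(1+x^\star)e^{-x^\star}$ is very sharp at $x^\star$, extracting $x^\star = O(\log(1/\nu))$ requires carefully separating the contribution $\alpha^\star x^\star$ from the rest of $\E[\mu_\nu^*]{X}$, which demands a joint treatment of the intermediate mass points in $(a,x^\star)$ using the finiteness guarantee of Lemma~\ref{lm:finite-mass-points} together with the fact that the KKT equality $w(x,\mu_\nu^*,\nu) = A(\nu)$ holds only at points of the support, so that consecutive mass points cannot accumulate in a manner that would ruin the bound.
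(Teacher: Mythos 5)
Your Step~1 is sound and aligns with the paper's general method: substituting $r\geq\alpha_j p_{x_j}$ and $f\geq\alpha_j q_{x_j}$ into the KKT equality at $x_j$ does give $(1+\gamma(\nu))\log(1/\alpha_j)\geq A(\nu)-\gamma(\nu)\nu+\gamma(\nu)(x_j-\log(1+x_j))$, hence $\alpha_j\lesssim(1+x_j)e^{-x_j}$ for $\nu$ small. The problem is Step~2. You need $\mu_\nu^*(]a,\infty[)\leq 3\sqrt{\nu}/a\leq 2^{-\xi x^\star}$, which forces you to show $x^\star=O(\log(1/\nu))$, and the three facts you cite do not deliver this. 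Combining $A(\nu)\geq C\sqrt{\nu}$ with $A(\nu)\leq\tfrac{1}{2}\theta_m^4 x^\star\E[\mu_\nu^*]{X}\leq\tfrac{1}{2}\theta_m^4 x^\star(2\sqrt{\nu}+\nu)$ yields, after dividing by $\sqrt{\nu}$, only the \emph{lower} bound $x^\star\gtrsim 1$. The per-point estimate $\alpha^\star\leq 2(1+x^\star)e^{-x^\star}$ controls the mass at $x^\star$ but imposes no upper bound on $x^\star$ itself; a tiny mass can sit arbitrarily far out without violating either $\E[\mu_\nu^*]{X}\leq 2\sqrt{\nu}+\nu$ or the per-point estimate. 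So "$\sqrt{\nu}$ is exponentially small in $x^\star$" is not a consequence of what you have, and you correctly flag this as the main obstacle — but it is not a bookkeeping obstacle, it is the missing idea.

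The paper avoids the issue entirely by never reducing to the Markov bound. Instead of lower-bounding $r$ and $f$ by the contribution of a \emph{single} mass point, it lower-bounds them by the contribution of the \emph{entire tail}: for every $y$, $r(y)\geq\frac{\mu_\nu^*(]a,\infty[)}{1+\theta_m^2 x_k}\,e^{-y/(1+\theta_m^2 a)}$ and $f(z)\geq\frac{\mu_\nu^*(]a,\infty[)}{1+x_k}\,e^{-z/(1+a)}$, because every $p_{x_j}$ with $a<x_j\leq x_k$ dominates $\frac{1}{1+\theta_m^2 x_k}e^{-y/(1+\theta_m^2 a)}$. Plugging these into the KKT equality at the single point $x=x_k$ makes $\mu_\nu^*(]a,\infty[)$ appear directly, and after cancelling the logs and using $\gamma(\nu)\nu\to 0$, $\gamma(\nu)\to\infty$, one reads off $\log\mu_\nu^*(]a,\infty[)\leq-\tfrac{1}{2}\tfrac{\gamma(\nu)}{1+\gamma(\nu)}(1+x_k)\tfrac{a}{1+a}$, and then $\mu_\nu^*(]a,\infty[)\leq 2^{-\xi x_k}$ with $\xi=a/(4(1+a))$. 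No bound on $x^\star$ in terms of $\nu$ is ever needed, and no summation over mass points is required. To repair your proof along these lines, replace the per-point lower bounds on $r,f$ by the aggregate tail lower bounds above, apply the KKT equality only at $x^\star$, and discard the Markov step entirely.
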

\begin{proof}
Fix $\nu > 0$ small enough and suppose that $\mu \eqdef \mu^*_{\nu}$ has mass points $x_1 < \cdots < x_k$ with corresponding probabilities  $\alpha_1, \cdots, \alpha_k$. Let $\modfirst{r}(y) \eqdef (\wy\circ \mu)(z)$ and $f(z) \eqdef (\wz\circ\mu)(z)$. Substituting the lower-bounds
\begin{align}
\modfirst{r}(y) \geq \frac{\mu(]a, \infty[)}{1 + \theta_m^2x_k}e^{-\frac{y}{1 + \theta_m^2a}}, \text{ and } f(z) \geq \frac{\mu(]a, \infty[)}{1 + x_k}e^{-\frac{z}{1 + a}},
\end{align}
in the KKT condition \eqref{eq:point-supp-w} for the point $x = x_k$, we obtain
\begin{multline}
\label{eq:kkt-bound-last}
0\leq -\log(\theta_m^2x_k+1) - 1 -\gamma(\nu)(1+x_k)- A(\nu) + \gamma(\nu)\nu- \\
\log  \frac{\mu(]a, \infty[)}{1 + \theta_m^2x_k} + \frac{1+\theta_m^2 x_k}{1 + \theta_m^2 a}+ \gamma(\nu) \pr{- \log  \frac{\mu(]a, \infty[)}{1 + x_k}  + \frac{1+ x_k}{1 +  a}  } .
\end{multline}
Since $\lim_{\nu\to0^+}\gamma(\nu)\nu = 0$, for small $\nu$, $-1 - A(\nu) + \gamma(\nu)\nu \leq 0$, and therefore, \eqref{eq:kkt-bound-last}  implies that
\begin{align}
0 
&\leq-\gamma(\nu)(1+x_k) \frac{a}{1+a} + \gamma(\nu) \log(1+x_k) + \frac{1+\theta_m^2 x_k}{1 + \theta_m^2 a} -  (1+\gamma(\nu)) \log ( \mu(]a, \infty[)).
\end{align}
Furthermore, if $x_k$  is large enough, we have $\log(1+x_k) \leq\frac{ (1+x_k) a}{4(1+a)}$, and if $\nu$ is small enough and $x_k$ is large enough, by Theorem~\ref{th:kkt} part 4,  we have $ \frac{1+\theta_m^2 x_k}{1 + \theta_m^2 a} \leq \gamma(\nu)(1+x_k) \frac{a}{4(1+a)} $. Hence, there exist $\nu_0>0$ and $\widetilde{x}>0$ such that if $\nu\leq \nu_0$ and $x_k\geq \widetilde{x}$, we have 
\begin{align}
0\leq -\frac{1}{2}\gamma(\nu)(1+x_k) \frac{a}{1+a} - (1+\gamma(\nu)) \log ( \mu(]a, \infty[)),
\end{align}
which yields that 
\begin{align}
\mu(]a, \infty[) \leq \exp\pr{-\frac{1}{2}\frac{\gamma(\nu)}{1+\gamma(\nu)} (1+x_k)\frac{a}{1+a}}.
\end{align}
\modfirst{Since $\lim_{\nu\to 0^+} \gamma(\nu) = \infty$, there exists $\nu_0>0$ such that $\inf_{\nu\in ]0, \nu_0]} \frac{\gamma(\nu)}{1+\gamma(\nu)} \geq \frac{1}{2}$. Hence, for $\xi \eqdef \frac{a}{4(1+a)}$ and all $0<\nu<\nu_0$, we have $\mu(]a, \infty[) \leq 2^{-\xi x_k}$.}
\end{proof}

We are now ready to establish the upper bound in~\eqref{eq:bounds} of Theorem~\ref{th:bounded-supp}.
\begin{proof}[Proof of Theorem~\ref{th:bounded-supp}]
Let $x^*_n \eqdef \modfirst{\max}( \supp{\mu^*_{\nu_n}})$. By Lemma~\ref{lm:finite-mass-points}, if $n$ is large enough $\mu^*_{\nu_n}$ is a discrete probability measure with finite number of mass points, and so is $\mathbb{C}_a[\mu^*_{\nu_n}]$. By Lemma~\ref{lm:ca-i-d}, we have $\D{\wz\circ\mathbb{C}_a[\mu^*_{\nu_n}]}{q_0}\leq \D{\mu^*_{\nu_n}}{q_0}= \nu_n$, and 
\begin{align}
\frac{I(\mathbb{C}_a[\mu^*_{\nu_n}], \wy)}{\sqrt{\D{\wz\circ\mathbb{C}_a[\mu^*_{\nu_n}] }{q_0}}}
&\geq \frac{I(\mu^*_{\nu_n}, \wy) -\theta_m^2x^*_n\mu^*_{\nu_n}(]a, \infty[) }{\sqrt{\D{\wz\circ\mu^*_{\nu_n}}{q_0}}}.
\end{align}
Therefore, it is enough to show that
\begin{align}
\label{eq:o-x-p}
x^*_n \mu^*_{\nu_n}(]a, \infty[)  = o\pr{\sqrt{\D{\wz\circ\mu^*_{\nu_n}}{q_0}}} = o\pr{\sqrt{\nu_n}}.
\end{align}
To do so, we consider $\nu_0$, $\widetilde{x}$, and $\xi$  from Lemma~\ref{lm:a-prob-bound-exp}. For $n$ large enough such that $ \frac{2}{\xi} \log\frac{1}{\nu_n}> \widetilde{x}$, if $x^*_n\geq \frac{2}{\xi} \log\frac{1}{\nu_n}$, then
\begin{align}
x^*_n\mu^*_{\nu_n}(]a, \infty[)  \leq x^*_n2^{- \xi x^*_n },
\end{align}
which is less than $2^{-\frac{1}{2} \xi x^*_n }$ for large enough $n$. Thus, $x^*_n\geq \frac{2}{\xi} \log\frac{1}{\nu_n}$ implies that $x^*_n\mu^*_{\nu_n}(]a, \infty[) \leq \frac{1}{\nu_n}$. For  the other case when $x^*_n < \frac{2}{\xi} \log\frac{1}{\nu_n}$, let $\widetilde{\mu}$ be a probability distribution on $\calX$ with two mass points at $0$ and $a$ with probabilities $1-\mu^*_{\nu_n}(]a, \infty[)$ and $\mu^*_{\nu_n}(]a, \infty[)$, respectively. Then, we have
\begin{align}
\nu_n = \D{\wz\circ\mu^*_{\nu_n}}{q_0} \stackrel{(a)}{\geq} \D{\wz\circ \widetilde{\mu}}{q_0} \stackrel{(b)}{\geq} K \pr{\mu^*_{\nu_n}(]a, \infty[)}^{\frac{a+1}{a}},
\end{align}
where $(a)$ follows from the same argument as in the proof of Lemma~\ref{lm:ca-i-d}, and $(b)$ follows from Lemma~\ref{lm:d-two-points} for a constant $K$ depending on $a$. Therefore, we have
\begin{align}
x_n^* \mu^*_{\nu_n}(]a, \infty[) \leq \frac{2}{\xi} \log\frac{1}{\nu_n} \pr{\frac{\nu_n}{K}}^{\frac{a}{a+1}}.
\end{align}
Since both $\frac{1}{\nu_n}$ and $\frac{2}{\xi} \log\frac{1}{\nu_n} \pr{\frac{\nu_n}{K}}^{\frac{a}{a+1}}$ are $o(\sqrt{\nu_n})$, we have \eqref{eq:o-x-p}.
\end{proof}

\subsubsection{Step four: obtaining the bound in Theorem~\ref{th:general-capacity-non-coherent}}\modfirst{
We first prove  a lemma that relates the constraint on the relative entropy to $\chi_2$ divergence. Let $\td{\Omega}^{\geq 0}$  be the set of discrete probability measures over $[0, 1[$ with finite number of mass points.
\begin{lemma}
Let $\epsilon>0$ be small enough and $\{\nu_n\}_{n \geq }$ be a sequence of real numbers such that $\lim_{n\to \infty} \nu_n = 0$ and $2\sqrt{\nu_n} + \nu_n \leq 0.5$ for all $n$.  There exists a sequence of probability measures $\{\lambda_n\}_{n\geq 1}$ such that $\lambda_n \in \td{\Omega}^{\geq 0}$ and
\begin{align}
\label{eq:A-bound}
\limsup_{n\to \infty} \frac{A(\nu_n)}{\sqrt{\nu_n}} \leq \limsup_{n\to \infty} \frac{I(\lambda_n, \wy)}{\sqrt{\frac{1}{2}\chi_2(\wz\circ \lambda_n\|q_0)}} + \epsilon.
\end{align}
\end{lemma}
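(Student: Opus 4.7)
The idea is to take $\lambda_n$ as a truncation of $\mu^*_{\nu_n}$ into $[0, 1[$ and to exploit the asymptotic equivalence between $\D{\wz\circ\lambda_n}{q_0}$ and $\tfrac{1}{2}\chi_2(\wz\circ\lambda_n\|q_0)$ as $\nu_n \to 0$, paying an additive slack $\epsilon$ determined by the truncation point.

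Fix $\eta = \eta(\epsilon) > 0$ small and set $\lambda_n \eqdef \mathbb{C}_{1-\eta}[\mu^*_{\nu_n}]$. By Lemmas~\ref{lm:finite-mass-points} and~\ref{lm:mass-point-zero} together with Theorem~\ref{th:bounded-supp} applied with some $a_0 > 1$, the measure $\mu^*_{\nu_n}$ is discrete with finitely many mass points, contains a mass point at $0$, and has support contained in $[0, a_0]$ for $n$ large. Hence $\lambda_n$ is supported on $[0, 1-\eta] \subset [0, 1[$ and belongs to $\td{\Omega}^{\geq 0}$. Moreover, Theorem~\ref{th:kkt} part~4 gives $\gamma(\nu_n) > 0$ for $n$ large, so by complementary slackness $\D{\wz\circ\mu^*_{\nu_n}}{q_0} = \nu_n$.

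Several errors must then be controlled. By Lemma~\ref{lm:ca-i-d} and an argument analogous to the proof of Theorem~\ref{th:bounded-supp}, using that Lemma~\ref{lm:a-prob-bound-exp} holds for any $a > 0$, the mass $\mu^*_{\nu_n}(]1-\eta, \infty[)$ is $o(\sqrt{\nu_n})$, so $I(\lambda_n, \wy) \geq A(\nu_n) - o(\sqrt{\nu_n})$ and $\D{\wz\circ\lambda_n}{q_0} = \nu_n(1 - o(1))$. Next, applying Lemma~\ref{lm:d-upper} to $\lambda_n$ with $M_n = \Theta(\log(1/\nu_n))$ and an auxiliary parameter compatible with $a = 1-\eta$ yields $\D{\wz\circ\lambda_n}{q_0} \leq \tfrac{1}{2}\chi_2(\wz\circ\lambda_n\|q_0) + o(\nu_n)$. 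The matching direction $\tfrac{1}{2}\chi_2(\wz\circ\lambda_n\|q_0) \leq \D{\wz\circ\lambda_n}{q_0} + o(\nu_n)$ is obtained by a higher-order Taylor expansion of $r\log r - \tfrac{1}{2}(r^2-1)$ around $r=1$, where $r = (\wz\circ\lambda_n)/q_0$: the bulk contribution is $O(\E[\lambda_n]{X}^3) = O(\nu_n^{3/2}) = o(\nu_n)$ by Lemma~\ref{lm:expected-x-bound}, while the tails are handled by a split at $M_n$ exactly as in Lemma~\ref{lm:d-upper}.

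Combining these with $A(\nu_n) = \Theta(\sqrt{\nu_n})$ from Lemma~\ref{lm:a-prop} yields
\begin{align}
\frac{A(\nu_n)}{\sqrt{\nu_n}} \leq \frac{I(\lambda_n, \wy)}{\sqrt{\tfrac{1}{2}\chi_2(\wz\circ\lambda_n\|q_0)}}(1 + o(1)) + o(1),
\end{align}
from which the claim follows after taking $\limsup$, provided $\eta$ (equivalently $\epsilon$) is small enough. The main obstacle is the matching upper bound $\tfrac{1}{2}\chi_2 \leq D + o(\nu_n)$, which is not a direct consequence of Lemma~\ref{lm:d-upper}; obtaining it requires the higher-order expansion together with careful tail control, and the trade-off between $\eta$, $M_n$, and the auxiliary parameter in Lemma~\ref{lm:d-upper} forces $\eta$ to scale with $\epsilon$, explaining the additive loss in the claim.
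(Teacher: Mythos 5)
The approach is on the right track and correctly identifies the key obstacle (showing $\tfrac{1}{2}\chi_2(\wz\circ\lambda_n\|q_0) \leq \D{\wz\circ\lambda_n}{q_0} + o(\nu_n)$, i.e., the direction \emph{not} given by Lemma~\ref{lm:d-upper}), but the single fixed truncation at $1-\eta$ is not sufficient to close that gap, and this is precisely where the paper's construction diverges.

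The paper does not take $\lambda_n = \mathbb{C}_{1-\eta}[\mu^*_{\nu_n}]$. It takes a \emph{two-stage truncation}: $\mu_n' = \mathbb{C}_{1+\zeta}[\mu^*_{\nu_n}]$ and then $\mu_n'' = \mathbb{C}_{a_n}[\mu_n']$ where $a_n = \min(1-\zeta, a_n')$ and $a_n' = \inf\{a : \mu^*_{\nu_n}(]a,\infty[) \leq \nu_n^{1/2+\xi}\}$. The adaptive choice of $a_n'$ guarantees that the truncated measure places mass at least $\nu_n^{1/2+\xi}$ on $[a_n, \infty[$, i.e., on the point mass at the truncation edge. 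This lower bound on the edge mass is exactly what is needed to ensure $(\wz\circ\mu_n'')(M_n)/q_0(M_n) \geq e$ for a split point $M_n = \Theta(\log(1/\nu_n))$, which is the hypothesis of Lemma~\ref{lm:chi-2-bound}. Your fixed truncation at $1-\eta$ gives you no control over the mass at the rightmost support point: in general $\mu^*_{\nu_n}(]1-\eta, \infty[) = o(\sqrt{\nu_n})$ could be arbitrarily small relative to $\nu_n$, in which case the smallest $M$ for which the density ratio at $M$ exceeds $e$ can be much larger than $\Theta(\log(1/\nu_n))$, and the bulk term $(\E[\lambda_n]{X})^3\int_0^{M}e^{z(2a-1)/(1+a)}dz$ (or its analogue in your Taylor-expansion argument) is then no longer $o(\nu_n)$.

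Your sketch of "higher-order Taylor expansion plus tail split at $M_n$ exactly as in Lemma~\ref{lm:d-upper}" runs into the same problem: the tail term $\int_{M}^{\infty} q_0\,\phi$ (where $\phi = f/q_0 - 1$) must be $o(\nu_n)$, but this again requires a delicate trade-off between how much mass sits far from $0$ and the choice of $M$, and that trade-off is only controllable if you enforce it by construction via the adaptive $a_n'$. It is not automatic from the fixed truncation. Also note that the "trade-off between $\eta$, $M_n$, and the auxiliary parameter in Lemma~\ref{lm:d-upper}" you gesture at cannot help here, since Lemma~\ref{lm:d-upper} gives $\D{\cdot}{q_0} \leq \tfrac{1}{2}\chi_2 + \ldots$, which is the opposite direction. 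The substance of what is missing is precisely Lemma~\ref{lm:chi-2-bound} together with the adaptive truncation that supplies its hypothesis; without it, the argument has a genuine gap.
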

\begin{proof}

Let $\xi > 0$ and $\zeta \eqdef \frac{6\xi}{1-6\xi}$. Define
\begin{align}
\mu_n &\eqdef \mu_{\nu_n}^*\\
\mu_n' &\eqdef C_{1+\zeta}[\mu_n]\\
\mu_n'' &\eqdef C_{a_n}[\mu_n'],
\end{align}
where $a_n' \eqdef \inf_{a: \mu_n(]a, \infty[) \leq \nu_n^{\frac{1}{2} + \xi} } a$ and $a_n \eqdef \min(1-\zeta, a_n')$. Let $\{\nu_n\}_{n \geq }$ be a sequence of real numbers such that $\lim_{n\to \infty} \nu_n = 0$ and $2\sqrt{\nu_n} + \nu_n \leq 0.5$ for all $n$.  By construction, we have $\mu_n''([a_n', \infty[) \geq \nu_n^{\frac{1}{2} + \xi}$ and $\mu_n''(]a_n', \infty[) \leq \nu_n^{\frac{1}{2} + \xi}$. We next use the following lemma to upper-bound $\chi_2(\wz\circ \mu_n''\|q_0)$.
\begin{lemma}
\label{lm:chi-2-bound}
Let $\mu \in \td{\Omega}^{\geq 0}$ such that $\D{ \wz \circ \mu}{q_0} \leq \nu$ and $\max\pr{\supp{\mu}} \leq a < 1$. If  $2\sqrt{\nu} + \nu < 1/2$  and for some $M>0$, we have $(\wz\circ \mu)(M)/q_0(M) \geq e$, then
\begin{align}
 \frac{1}{2}\chi_2(\wz \circ \mu \|q_0) \leq \D{\wz \circ \mu}{q_0}  + \frac{1}{2}  (\E[\mu]{X})^3 \int_0^M e^{z\pr{-1+\frac{3a}{1+a}}}dz + \frac{1}{2} (\E[\mu]{X})^2 \int_{M}^\infty e^{z\pr{-1 + \frac{2a}{1+a}}}dz + 2 (\E[\mu]{X})^3.
\end{align}
\end{lemma}
\begin{proof}
See Appendix~\ref{sec:chi-2-bound}.
\end{proof}
We first establish a lower-bound on $(\wz \circ \mu_n''(z))/q_0(z)$ to use Lemma~\ref{lm:chi-2-bound}. Since $a_n \leq a_n'$, we have
\begin{align}
\mu_n''([a_n, \infty[)   \geq \mu_n''([a_n', \infty[)  \geq \nu_n^{\frac{1}{2} + \xi},
\end{align}
which yields that 
\begin{align}
\frac{(\wz \circ \mu_n'')(z)}{q_0(z)}\geq \nu_n^{\frac{1}{2} + \xi} \frac{ e^{\frac{a_n}{1+a_n} z} }{1+a_n}.
\end{align}
Choosing $M_n = \frac{1+a_n}{a_n} \pr{2 + \pr{\frac{1}{2} + \zeta} \log \frac{1}{\nu_n} }$, we have $(\wz \circ \mu_n'')(M_n)/q_0(M_n)\geq e$. Therefore, Lemma~\ref{lm:chi-2-bound} implies that
\begin{align}
&\frac{1}{2}\chi_2(\wz\circ \mu_n''\|q_0)\\ 
&\leq \D{\wz \circ \mu_n''}{q_0} + \frac{1}{2}  (\E[\mu_n'']{X})^3 \int_0^{M_n}e^{z\pr{-1+\frac{3a_n}{1+a_n}}}dz  +  \frac{1}{2} (\E[\mu_n'']{X})^2\int_{M_n}^\infty e^{z\pr{-1+\frac{2a_n}{1+a_n}}}dz  + 2 (\E[\mu_n'']{X})^3 \\
&\stackrel{(a)}{\leq} \nu_n + \frac{1}{2}  (\E[\mu_n'']{X})^3 \int_0^{M_n}e^{z\pr{-1+\frac{3a_n}{1+a_n}}}dz  +  \frac{1}{2} (\E[\mu_n'']{X})^2\int_{M_n}^\infty e^{z\pr{-1+\frac{2a_n}{1+a_n}}}dz  + 2 (\E[\mu_n'']{X})^3 \\
&\stackrel{(b)}{\leq} \nu_n\pr{1 + \frac{27}{2} \nu_n^{\frac{1}{2}} \int_0^{M_n}e^{z\pr{-1+\frac{3a_n}{1+a_n}}}dz  +  \frac{9}{2}\int_{M_n}^\infty e^{z\pr{-1+\frac{2a_n}{1+a_n}}}dz + 27\nu_n^{\frac{1}{2}}   },\label{eq:chi2-bound}
\end{align}
where $(a)$ follows since by Lemma~\ref{lm:ca-i-d}
\begin{align}
\D{\wz \circ \mu_n''}{q_0} \leq \D{\wz \circ \mu_n'}{q_0} \leq  \D{\wz \circ \mu_n}{q_0}  \leq \nu_n,
\end{align}
and $(b)$ follows since by Lemma~\ref{lm:expected-x-bound}, we have $\E[\mu_n'']{X}\leq 2\sqrt{\nu_n}+ \nu_n \leq 3\sqrt{\nu_n}$. 
We now show that
\begin{align}
\label{eq:limits}
\lim_{n\to \infty } \frac{27}{2} \nu_n^{\frac{1}{2}} \int_0^{M_n}e^{z\pr{-1+\frac{3a_n}{1+a_n}}}dz = \lim_{n\to \infty}  \frac{9}{2} \int_{M_n}^\infty e^{z\pr{-1+\frac{2a_n}{1+a_n}}}dz = \lim_{n\to \infty} 27M_0\nu_n^{\frac{1}{2}} = 0.
\end{align}
For  the first limit, we consider two cases. 

If $a_n \leq 1/4$, then $-1 + \frac{3a_n}{1+a_n} \leq -2/5$ and
\begin{align}
 \frac{27}{2} \nu_n^{\frac{1}{2}}\int_0^{M_n}e^{z\pr{-1 + \frac{3a_n}{1+a_n}}}dz
 & \leq  \frac{27}{2} \nu_n^{\frac{1}{2}}\int_0^{\infty}e^{-\frac{2z}{5}}dz\\
 &= \frac{135}{4} \nu_n^{\frac{1}{2}}.
\end{align}
If $a_n\geq 1/4$, then
\begin{align}
\int_0^{M_n}e^{z\pr{-1 + \frac{3a_n}{1+a_n}}}dz 
&\leq  M_n \max\pr{1, e^{M_n \pr{-1 + \frac{3a_n}{1+a_n}} }}\\
&=  \frac{1+a_n}{a_n} \pr{2 + \pr{\frac{1}{2} + \xi} \log \frac{1}{\nu_n} } \max\pr{1, e^{ \frac{1+a_n}{a_n} \pr{2 + \pr{\frac{1}{2} + \xi} \log \frac{1}{\nu_n} } \pr{-1 + \frac{3a_n}{1+a_n}} }}\\
&\leq  5 \pr{2 + \pr{\frac{1}{2} + \xi} \log \frac{1}{\nu_n} } \max\pr{1, e^{ \frac{2a_n - 1}{a_n}\pr{2 + \pr{\frac{1}{2} + \xi} \log \frac{1}{\nu_n} }  }}.
\end{align}
Note that 
\begin{align}
5 \pr{2 + \pr{\frac{1}{2} + \xi} \log \frac{1}{\nu_n} }  = O\pr{\log \frac{1}{\nu_n}}
\end{align}
and
\begin{align}
\max\pr{1, e^{ \frac{2a_n - 1}{a_n}\pr{2 + \pr{\frac{1}{2} + \xi} \log \frac{1}{\nu_n} }  }} = O\pr{1 + \nu_n^{-\frac{2a_n - 1}{a_n}\pr{\frac{1}{2} + \xi}}}
\end{align}
For $a_n \leq 1 - \frac{6\xi}{1-6\xi}$, we have $\frac{2a_n - 1}{a_n}\pr{\frac{1}{2} + \xi} \leq \frac{1}{2} - \xi$.
For the second limit in \eqref{eq:limits}, note that
\begin{align}
 \int_{M_n}^\infty e^{z\pr{-1+\frac{2a_n}{1+a_n}}}dz\leq 
  \int_{M_n}^\infty e^{-z\frac{\zeta}{2-\zeta}}dz.
\end{align}
Because $\int_0^\infty e^{-z\frac{\xi}{2-\xi}} < \infty$ and  $M_n =  \frac{1+a_n}{a_n} \pr{2 + \pr{\frac{1}{2} + \zeta} \log \frac{1}{\nu_n} }$ goes to infinity as $n$ goes to infinity, $\lim_{n\to \infty}  \int_{M_n}^\infty e^{z\pr{-1+\frac{2a_n}{1+a_n}}}dz = 0$. The third limit in \eqref{eq:limits} follows since $\lim_{n\to \infty} \nu_n = 0$. We thus obtain \eqref{eq:limits}, which together with  \eqref{eq:chi2-bound}  results in
\begin{align}
\label{eq:chi-limit}
\limsup_{n\to \infty} \frac{\frac{1}{2}\chi_2(\wz\circ \mu_n''\|q_0) }{\nu_n} \leq 1.
\end{align}
We now consider $I(\mu_n'', \wy)$ and show that it is close to $I(\mu_n, \wy) = A(\nu_n)$.

If $a_n = 1 - \zeta$, then by a modification of Lemma~\ref{lm:ca-i-d}
\begin{align}
I(\mu_n '', \wz) 
&\geq I(\mu_n', \wz) - 2 \zeta \mu_n'(]1-\zeta, \infty [)\\
&=I(\mu_n', \wz) - 2 \zeta \mu_n(]1-\zeta, \infty [)\\
&\geq I(\mu_n', \wz) - 2 \zeta \frac{\E[\mu_n]{X}}{1-\zeta}\\
&\geq I(\mu_n', \wz) - 6 \zeta \frac{\sqrt{\nu_n}}{1-\zeta}.
\end{align}
If $a_n = a_n'$, by Lemma~\ref{lm:ca-i-d}
\begin{align}
I(\mu_n '', \wz) 
&\geq I(\mu_n', \wz) - 2 (1+\zeta) \mu_n'(]a_n', \infty [)\\
&=I(\mu_n', \wz) - 2(1+\zeta) \nu_n^{\frac{1}{2} + \zeta}.
\end{align}
Therefore, 
\begin{align}
I(\mu_n '', \wz) &\geq I(\mu_n', \wz) - \max\pr{6 \zeta \frac{\sqrt{\nu_n}}{1-\zeta},2(1+\zeta) \nu_n^{\frac{1}{2} + \zeta}}\\
&\stackrel{(a)}{\geq}  I(\mu_n, \wz) - \max\pr{6 \zeta \frac{\sqrt{\nu_n}}{1-\zeta},2(1+\zeta) \nu_n^{\frac{1}{2} + \zeta}} - o(\nu_n^{\frac{1}{2}}), \label{eq:I-bound}
\end{align}
where $(a)$ follows from the argument of Theorem~\ref{th:bounded-supp}. Taking $\lambda_n = \mu_n'' \in \td{\Omega}^{\geq 0}$, by \eqref{eq:I-bound} and \eqref{eq:chi-limit}, we have \eqref{eq:A-bound} for $\epsilon = \frac{6\zeta}{1-\zeta}$. 
\end{proof}
Let $\mu\in \td{\Omega}^{\geq 0}$. We claim that
\begin{align}
\frac{I(\mu, \wy)}{\sqrt{\chi_2(\wz\circ \mu\| q_0)}} \leq \sup_{\td{\mu} \in \td{\Omega}^{> 0}}\frac{\E[\td{\mu}]{\theta_m^2 X - \log\pr{1+\theta_m^2 X}}}{\sqrt{\E[\td{\mu}\otimes \td{\mu}]{\frac{X_1X_2}{1-X_1X_2}}}}.
\end{align}

Let us define $\td{\mu}$ as
\begin{align}
\td{\mu}(A) \eqdef \frac{\mu(A\cap]0, 1[)}{\mu(]0, 1[)}.
\end{align}
In other words, $\td{\mu}$ is the probability measure $\mu$ conditioned to the event $]0, 1[$. We have
\begin{align}
I(\mu, \wy) \leq \E[\mu]{\theta_m^2 X - \log\pr{1+\theta_m^2 X}} \stackrel{(a)}{=} \mu(]0, 1[)  \E[\td{\mu}]{\theta_m^2 X - \log\pr{1+\theta_m^2 X}} ,
\end{align}
where $(a)$ follows since $\theta_m^2 x - \log(1+\theta_m^2x) = 0$ for $x = 0$.
Moreover,
\begin{align}
\chi_2(\wz \circ \mu \| q_0) = \E[\mu \circ \mu]{\frac{X_1 X_2}{1- X_1 X_2}} = \mu(]0, 1[)^2  \E[\td{\mu} \circ \td{\mu}]{\frac{X_1 X_2}{1- X_1 X_2}} 
\end{align}
Therefore,
\begin{align}
\frac{I(\mu, \wy)}{\sqrt{\frac{1}{2}\chi_2(\wz\circ \mu\| q_0)}} \leq \sqrt{2}\frac{\E[\td{\mu}]{\theta_m^2 X - \log\pr{1+\theta_m^2 X}}}{\sqrt{\E[\td{\mu}\otimes \td{\mu}]{\frac{X_1X_2}{1-X_1X_2}}}}.
\end{align}

Furthermore, with the help of Lemma~\ref{lm:a-prop}, Eq.~\eqref{eq:upper-a}, we have that 
\begin{align}
\limsup_{\nu \to 0^+}\frac{A(\nu)}{\sqrt{\nu}} \leq \sqrt{2}\theta_m^2.
\end{align}
Therefore, we obtain the upper-bound in \eqref{eq:bounds}.}
\section{Conclusion}
For covert communications over non-coherent wireless channels, we showed that discrete constellations with \modfirst{an} amplitude constraint are optimal. This differs from the results for coherent Gaussian channels in which using the phase is required to achieve the covert capacity. Supported by numerical results, we also conjectured that the optimal number of points is two and that their positions are fixed.

\appendices

\section{Leibniz Integral Rule}
\label{sec:leibn-integr-rule}
\modfirst{For a reader's convenience, we recall Leibniz integral rule here as it is used extensively throughout the paper.}
\begin{theorem}
\label{th:leibniz}
Let $\calO$ be an open subset of $\mathbb{R}$ and $\Omega$ be a  measure space. Suppose $f:\calO\times \Omega \to \mathbb{R}$ satisfies the following conditions
\begin{enumerate}
\item $f(x, w)$ is a Lebesgue-integrable function of $\omega$ for each $x\in \calO$
\item For almost all $\omega\in\Omega$, the derivative $\frac{\partial f}{\partial x}$ exists for all $x\in \calO$
\item There is an integrable function $\theta:\Omega\to \mathbb{R}$ such that $\left|\frac{\partial f}{\partial x}(x, \omega)\right| \leq \theta(\omega)$ for all $x\in \calO$ and almost every $\omega\in\Omega$.
\end{enumerate}
Then, for all $x\in\calO$, we have
\begin{align}
\frac{d}{dx} \int f(x, \omega)d\omega = \int \frac{\partial f}{\partial x}(x, \omega) d\omega
\end{align}
\end{theorem}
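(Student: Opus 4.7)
The plan is to reduce this to the Dominated Convergence Theorem applied to the sequence of difference quotients. Fix $x_0 \in \calO$ and let $\{h_n\}_{n\geq 1}$ be any sequence of nonzero reals with $h_n \to 0$; since $\calO$ is open, $x_0 + h_n \in \calO$ for $n$ large enough, so without loss of generality assume this holds for every $n$. Define
\begin{align}
g_n(\omega) \eqdef \frac{f(x_0 + h_n, \omega) - f(x_0, \omega)}{h_n}.
\end{align}
By condition 1, each $g_n$ is integrable (as a linear combination of integrable functions divided by a constant), and by the linearity of the Lebesgue integral,
\begin{align}
\frac{\int f(x_0 + h_n, \omega)\,d\omega - \int f(x_0, \omega)\,d\omega}{h_n} = \int g_n(\omega)\,d\omega.
\end{align}

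Next, I would establish the two hypotheses needed to invoke the Dominated Convergence Theorem on $\{g_n\}$. Pointwise convergence: by condition 2, for almost every $\omega \in \Omega$ the partial derivative $\frac{\partial f}{\partial x}(x_0, \omega)$ exists, and by the very definition of the derivative, $g_n(\omega) \to \frac{\partial f}{\partial x}(x_0, \omega)$ as $n\to\infty$ on this full-measure set. Domination: for almost every $\omega$, the map $x\mapsto f(x,\omega)$ is differentiable on all of $\calO$, which contains the segment joining $x_0$ and $x_0 + h_n$; applying the one-dimensional Mean Value Theorem on this segment gives some $\xi_n(\omega)$ strictly between $x_0$ and $x_0 + h_n$ with
\begin{align}
g_n(\omega) = \frac{\partial f}{\partial x}\bigl(\xi_n(\omega), \omega\bigr),
\end{align}
and hence by condition 3, $|g_n(\omega)| \leq \theta(\omega)$ for almost every $\omega$, uniformly in $n$.

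Since $\theta$ is integrable by hypothesis, the Dominated Convergence Theorem yields
\begin{align}
\lim_{n\to\infty}\int g_n(\omega)\,d\omega = \int \frac{\partial f}{\partial x}(x_0, \omega)\,d\omega.
\end{align}
Combining this with the identity above, the difference quotient of $x\mapsto \int f(x,\omega)\,d\omega$ at $x_0$ along $\{h_n\}$ converges to $\int \frac{\partial f}{\partial x}(x_0, \omega)\,d\omega$. Because $\{h_n\}$ was an arbitrary null sequence, the full limit $h\to 0$ exists and equals the same value, which proves the claim. The only subtlety is the Mean Value Theorem step, which requires differentiability on the full segment between $x_0$ and $x_0+h_n$; this is guaranteed by the phrasing of condition 2 (``for all $x\in\calO$''), and openness of $\calO$ ensures the segment lies in $\calO$ for small $h_n$.
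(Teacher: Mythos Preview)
Your proof is correct and is the standard textbook argument via the Mean Value Theorem and Dominated Convergence. Note, however, that the paper does not actually prove Theorem~\ref{th:leibniz}: it is stated in Appendix~\ref{sec:leibn-integr-rule} purely ``for a reader's convenience'' as a recalled standard result, with no accompanying proof. So there is nothing in the paper to compare your argument against; you have simply supplied the omitted (and entirely classical) justification.
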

\modfirst{
\section{An Analyticity Criterion}
\label{sec:analytic}
\begin{theorem}
\label{th:analytic}
Let $g: \calD \times \mathbb{R} \to \mathbb{C}$ be a function such that $\calD$ is a simple connected subset of $\mathbb{C}$, $g(\cdot, y)$ is analytic for all $y\in \mathbb{R}$, and $\sup_{z\in \calC}\int_{\mathbb{R}} |g(z, y)|dy < \infty$ for all compact $\calC \subset \calD$. The function $f: z \mapsto \int_{\mathbb{R}} g(z, y) dy $ is analytic over the domain $\calD$.
\end{theorem}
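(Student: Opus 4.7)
The plan is to combine Morera's theorem with Fubini's theorem. I would first establish continuity of $f$ on $\calD$, then verify that $\oint_{\partial T} f(z)\, dz = 0$ for every closed triangle $T \subset \calD$, which by Morera's theorem implies analyticity.

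For continuity at an arbitrary point $z_0 \in \calD$, I would choose $r>0$ small enough that the closed disk $\overline{B(z_0, r)}$ is contained in $\calD$. For $z \in B(z_0, r/2)$ and each fixed $y$, Cauchy's integral formula applied to the analytic function $g(\cdot, y)$ on the circle $\Gamma_r \eqdef \{w : |w - z_0| = r\}$ gives
\begin{align}
g(z, y) - g(z_0, y) = \frac{z - z_0}{2\pi i} \oint_{\Gamma_r} \frac{g(w, y)}{(w - z)(w - z_0)}\, dw,
\end{align}
from which $|g(z, y) - g(z_0, y)| \leq \frac{|z - z_0|}{\pi r^2} \oint_{\Gamma_r} |g(w, y)|\, |dw|$. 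Integrating over $y$ and applying Tonelli to swap the two integrals, I obtain $|f(z) - f(z_0)| \leq \frac{2 |z - z_0|}{r} \sup_{w \in \Gamma_r} \int_{\mathbb{R}} |g(w, y)|\, dy$; this supremum is finite because $\Gamma_r$ is compact in $\calD$, yielding continuity (actually local Lipschitz continuity) of $f$.

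Next, for a closed triangle $T \subset \calD$ with boundary $\partial T$, I would apply Fubini's theorem to interchange integration in the identity
\begin{align}
\oint_{\partial T} f(z)\, dz = \oint_{\partial T} \int_{\mathbb{R}} g(z, y)\, dy\, dz = \int_{\mathbb{R}} \oint_{\partial T} g(z, y)\, dz\, dy.
\end{align}
This swap is legitimate because the trace of $\partial T$ is compact in $\calD$, so the hypothesis yields $\oint_{\partial T} \int_{\mathbb{R}} |g(z, y)|\, dy\, |dz| \leq L(\partial T) \cdot \sup_{z \in \partial T} \int_{\mathbb{R}} |g(z, y)|\, dy < \infty$, where $L(\partial T)$ is the length of $\partial T$. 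For each fixed $y$, the inner integral $\oint_{\partial T} g(z, y)\, dz$ vanishes by Cauchy's theorem since $g(\cdot, y)$ is analytic on $\calD \supset T$. Hence $\oint_{\partial T} f(z)\, dz = 0$, and by Morera's theorem $f$ is analytic on $\calD$.

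I do not anticipate a serious obstacle in this proof; the main care point is checking the hypotheses of Fubini's theorem, which reduces immediately to the uniform integrability assumption on compact subsets of $\calD$ applied to the compact traces of $\Gamma_r$ and $\partial T$. Note that the simply connectedness hypothesis is actually unnecessary for the argument since Morera's theorem is a local statement, but I would state the result as given.
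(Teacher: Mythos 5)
Your proposal is correct and follows the same two ingredients as the paper's proof — Fubini's theorem to interchange $\oint$ and $\int_{\mathbb{R}}$, then Cauchy's theorem to annihilate the inner contour integral, feeding the result into Morera's theorem. You do, however, take more care than the paper in two respects, and both are genuine improvements. First, you explicitly verify continuity of $f$ via Cauchy's integral formula and the bound $\sup_{w\in\Gamma_r}\int_{\mathbb{R}}|g(w,y)|\,dy<\infty$ on the compact circle $\Gamma_r$; this hypothesis is required by Morera's theorem but the paper's proof never checks it, simply writing the interchange and the vanishing of $\int_\gamma f$. Second, you are right that simple connectedness of $\calD$ plays no role here, since Morera's criterion is local; the paper carries this assumption but never uses it. Your triangle version of Morera combined with the compactness of $\partial T$ to justify the Fubini swap is exactly the right observation and is precisely what the paper leaves implicit when it invokes "our assumption on $g$."
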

\begin{proof}
The proof is  a straightforward application of Fubini's theorem and Morera's theorem. Fixing any closed piecewise  $C^1$ curve $\gamma$ in $\calD$, we have
\begin{align}
\int_\gamma f(z) dz 
&= \int_\gamma \int_{\mathbb{R}}g(z, y) dy dz\\
&\stackrel{(a)}{=} \int_{\mathbb{R}} \int_\gamma g(z, y) dz dy\\
&\stackrel{(b)}{=} 0,
\end{align}
where $(a)$ follows from Fubini's theorem and our assumption on $g$, and $(b)$ follows since $g(\cdot, z)$ is analytic and from Cauchy's integral theorem. Therefore, $f$ satisfies the condition of Morera's theorem and is analytic.
\end{proof}
}
\section{Auxiliary Results}
\label{sec:preliminary-results}

We gather here essential technical tools to prove the achievability and converse results. To begin with, we bound the \ac{PDF} of the  output distributions of the channels $\wy$ and $\wz$ for an arbitrary input distribution $\mu$.
\begin{proposition}
\label{prop:bound-pdf}
For any probability measure $\mu$ on $\calX$ with $\E[\mu]{X} < \infty$ and all $y\in \calY, z\in\calZ$, we have 
\begin{align}
  -\theta_m^2\E[\mu]{X} -y &\leq \log((w_{Y|X}\circ \mu)(y)) \leq 0,\label{eq:bound-wy}\\
  -\E[\mu]{X}  -z &\leq \log((w_{Z|X}\circ \mu)(z)) \leq 0,\label{eq:bound-wz}\\
  \E[\wy\circ\mu]{Y} &= 1 + \theta_m^2\E[\mu]{X},\label{eq:expected-y}\\
  \E[\wz\circ\mu]{Z} &= 1 + \E[\mu]{X}.\label{eq:expected-z}
\end{align} 
\end{proposition}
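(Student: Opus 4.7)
\medskip

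The plan is to prove each of the four bounds by elementary direct arguments, since the channel transition densities have a simple closed form.

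For the upper bounds, I would observe that for every $x\in\calX$ and $y\in\calY$ we have $\theta_m^2 x+1\geq 1$ and $e^{-y/(\theta_m^2x+1)}\leq 1$, hence $p_x(y)\leq 1$. Integrating against $\mu$ and taking logarithms gives $\log(\wy\circ\mu)(y)\leq 0$. The corresponding bound for $\wz$ follows by setting $\theta_m=1$ in the argument.

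For the lower bounds, I would apply Jensen's inequality to the concave function $\log$:
\begin{align}
\log(\wy\circ\mu)(y)=\log\E[\mu]{p_X(y)}\geq \E[\mu]{\log p_X(y)}=-\E[\mu]{\log(\theta_m^2X+1)}-y\,\E[\mu]{\tfrac{1}{\theta_m^2X+1}}.
\end{align}
Then I would bound each of the two terms using the elementary inequalities $\log(1+u)\leq u$ for $u\geq 0$ and $\frac{1}{\theta_m^2x+1}\leq 1$ for $x\geq 0$, which jointly yield $\log(\wy\circ\mu)(y)\geq -\theta_m^2\E[\mu]{X}-y$. The bound for $\wz$ is the same argument with $\theta_m=1$.

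Finally, for the two expected value identities, since $p_x(\cdot)$ is the density of an exponential law with mean $\theta_m^2x+1$, Fubini's theorem (which applies because $y\,p_x(y)\geq 0$) gives
\begin{align}
\E[\wy\circ\mu]{Y}=\int_0^\infty\!\!\int_\calX y\,p_x(y)\,d\mu(x)\,dy=\int_\calX(\theta_m^2x+1)\,d\mu(x)=1+\theta_m^2\E[\mu]{X},
\end{align}
and the analogous computation with $q_x$ gives $\E[\wz\circ\mu]{Z}=1+\E[\mu]{X}$. There is no substantive obstacle here; the only technical point to state clearly is the use of Fubini, which is justified by nonnegativity of the integrands and the assumption $\E[\mu]{X}<\infty$ ensuring finiteness of the final expressions.
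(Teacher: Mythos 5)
Your proposal is correct and follows essentially the same route as the paper: the upper bound comes from $p_x(y)\leq 1$, the lower bound from Jensen's inequality together with $\log(1+u)\leq u$ and $1/(1+\theta_m^2 x)\leq 1$, and the expectations from Fubini's theorem applied to the nonnegative integrand. No substantive differences to note.
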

\begin{proof}
We only prove \eqref{eq:bound-wy} and \eqref{eq:expected-y}, from which \eqref{eq:bound-wz} and \eqref{eq:expected-z} follow by setting $\theta_m = 1$. To obtain \eqref{eq:bound-wy}, observe that for any $x\in\calX$, we have $p_x(y) \eqdef \frac{1}{1+\theta_m^2x} e^{-\frac{y}{1+\theta_m^2x}} \leq  1$, and
\begin{align}
\log \pr{(\wy\circ \mu)(y)} 
&= \log \pr{\E[\mu]{p_X(y)}}\\
&\stackrel{(a)}{\geq} \E[\mu]{\log\pr{p_X(y)}}\\
&= \E[\mu]{-\log\pr{1+\theta_m^2X} - \frac{y}{1+\theta_m^2X}}\\
&\stackrel{(b)}{\geq} \E[\mu]{ -\theta_m^2X  - \frac{y}{1+\theta_m^2X}}\\
&\stackrel{(c)}{\geq} \E[\mu]{ -\theta_m^2X - y}\\
&= -\theta_m^2\E[\mu]{X} - y,
\end{align}
where $(a)$ follows from Jensen's inequality, $(b)$ follows from $\log(1+x) \leq x$ for $x>-1$, and $(c)$ follows from $\P[\mu]{X\geq 0} = 1$. To obtain \eqref{eq:expected-y}, note that
\begin{align}
\E[\wy\circ\mu]{Y} 
&= \int_0^\infty y (\wy \circ \mu)(y) dy\\
&=  \int_0^\infty y \pr{\int_{\calX} p_x(y)d\mu }dy\\
&\stackrel{(a)}{=}  \int_{\calX}\pr{ \int_0^\infty y p_x(y)dy} d\mu\\
&=   \int_{\calX}(1+\theta_m^2x) d\mu\\
&= 1+\theta_m^2\E[\mu]{X},
\end{align}
where $(a)$ follows from Fubini's theorem and the fact that for all $x,y$, $yp_x(y)\geq 0$.
\end{proof}
\modfirst{
\begin{lemma}
\label{lm:mean-finite}
Let $\mu$ be a probability measure over $\calX$. If $\D{\wz\circ \mu}{q_0}$ exists and is finite, then $\E[\mu]{X} < \infty$.
\end{lemma}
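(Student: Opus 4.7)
The plan is to reduce the assertion $\E[\mu]{X}<\infty$ to a bound on the mean of $\wz\circ\mu$, and then extract that bound from the finiteness of $\D{\wz\circ\mu}{q_0}$ by comparing $\wz\circ\mu$ to a slightly thicker-tailed exponential reference. First, writing $f\eqdef \wz\circ\mu$ and applying Tonelli's theorem to the nonnegative integrand $z\,q_x(z)$ gives the identity
\begin{align}
\E[f]{Z} = \int_\calX \pr{\int_0^\infty z\, q_x(z)\,dz}\, d\mu(x) = 1 + \E[\mu]{X},
\end{align}
which holds in $[0,+\infty]$ with no a-priori integrability assumption on $X$. So it suffices to prove $\E[f]{Z}<\infty$ whenever $\D{f}{q_0}<\infty$.

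To do this, I would fix any $\epsilon\in ]0,1[$ and introduce the auxiliary probability density $\widetilde q(z)\eqdef (1-\epsilon)e^{-(1-\epsilon)z}$ on $[0,\infty[$. The log-ratio $\log(q_0/\widetilde q)=-\log(1-\epsilon)-\epsilon z$ is bounded above, hence its positive part is $f$-integrable; combined with $\D{f}{q_0}<\infty$ (which forces both positive and negative parts of $f\log(f/q_0)$ to be $z$-integrable), the pointwise decomposition $\log(f/\widetilde q)=\log(f/q_0)+\log(q_0/\widetilde q)$ integrates unambiguously to
\begin{align}
\D{f}{\widetilde q} = \D{f}{q_0} - \log(1-\epsilon) - \epsilon\,\E[f]{Z}.
\end{align}
Nonnegativity of the left-hand side then yields $\epsilon\,\E[f]{Z} \leq \D{f}{q_0} - \log(1-\epsilon) < \infty$, and hence $\E[\mu]{X} = \E[f]{Z} - 1 < \infty$.

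The main subtlety will be ensuring that this decomposition is legitimate when $\E[f]{Z}$ is a priori in $[0,+\infty]$, since a naive split of $\D{f}{\widetilde q}$ could otherwise produce an ill-defined $\infty-\infty$. The upper bound on $\log(q_0/\widetilde q)$ is exactly what is needed: the positive part of each summand is $f$-integrable, so the integral of the sum equals the sum of the integrals as an extended real, and the only way to avoid a contradiction with $\D{f}{\widetilde q}\geq 0$ is to have $\E[f]{Z}<\infty$. Equivalently, one can phrase the entire argument as a Donsker--Varadhan variational bound applied to the test function $\phi(z)=\epsilon z$ against $q_0$, whose moment generating function equals $1/(1-\epsilon)$ for $\epsilon<1$; to completely bypass the extended-real bookkeeping one applies the inequality to the truncations $\phi_n=\min(\epsilon z,n)$ first and then passes to the limit by monotone convergence on the left side.
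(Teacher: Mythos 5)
Your proof is correct, and it takes a genuinely different route from the paper's. The paper argues more hands-on: it first shows $\wz\circ\mu$ is bounded below by (a small multiple of) a thicker exponential $q_{\gamma_1}$ using the mass of $\mu$ on an interval $[\gamma_1,\gamma_2]$, then plugs that lower bound into the integrand of $\D{\wz\circ\mu}{q_0}$ on the region $z\geq z_0$ where the integrand is guaranteed nonnegative, and finally unfolds the resulting integral via Tonelli to extract $\E[\mu]{X}$. Your argument instead exploits the variational/change-of-reference structure of relative entropy: compare $f=\wz\circ\mu$ against a slightly subcritical exponential $\widetilde q$, use the elementary identity $\D{f}{\widetilde q}=\D{f}{q_0}-\log(1-\epsilon)-\epsilon\,\E[f]{Z}$ together with $\D{f}{\widetilde q}\geq 0$, or equivalently invoke Donsker--Varadhan with $\phi(z)=\epsilon z$ and the fact that $q_0$ has a finite exponential moment of order $\epsilon<1$. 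Both proofs are valid; yours is cleaner and makes the mechanism transparent (finiteness of $\D{f}{q_0}$ against an exponential reference automatically controls the first moment of $f$), while the paper's is more elementary and self-contained in the sense that it only uses pointwise density bounds and Tonelli rather than the variational characterization. Your handling of the potential $\infty-\infty$ issue is correct: the positive part of $\log(q_0/\widetilde q)$ is bounded, and $\D{f}{q_0}$ has both signed parts integrable by hypothesis, so the decomposition of $\int f\log(f/\widetilde q)$ is legitimate in the extended reals; the truncation-and-monotone-convergence fallback you describe is equally valid and avoids this bookkeeping entirely.
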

\begin{proof}
We proceed by contradiction. Consider a positive real number $\gamma_1$ and let $2\epsilon \eqdef \mu([\gamma_1, \infty)$. We have $\epsilon > 0$, because otherwise $\E[\mu]{X} \leq \gamma_1 < \infty$.  By the continuity of a probability, we have
\begin{align}
\lim_{\gamma \to \infty} \mu([\gamma_1, \gamma]) = 2\epsilon.
\end{align}
Therefore, there exists $\gamma_2 \geq \gamma_1$ such that $\mu([\gamma_1, \gamma_2]) \geq \epsilon$.
We then have 
\begin{align}
(\wz\circ \mu)(z) \geq \frac{\epsilon e^{-\frac{z}{1+\gamma_1}}}{1+\gamma_2}.
\end{align}
This implies that $(\wz\circ \mu)(z) \geq q_0(z) = e^{-z}$ for all $z \geq z_0 \eqdef \frac{1+\gamma_1}{\gamma_1} \log \frac{1+\gamma_2}{\epsilon} > 0$.
Since $\D{\wz\circ \mu}{q_0} < \infty$, we have 
\begin{align}
\infty 
&> \int_{z_0}^\infty (\wz\circ \mu)(z) \log\frac{(\wz\circ \mu)(z)}{q_0(z)} dz\\
&\geq \int_{z_0}^\infty (\wz\circ \mu)(z) \log\frac{ \frac{\epsilon e^{-\frac{z}{1+\gamma_1}}}{1+\gamma_2}}{ e^{-z}} dz\\
&\geq \log \frac{\epsilon}{1+\gamma_2} + \frac{\gamma_1}{1+ \gamma_1} \int_{z_0}^\infty (\wz\circ \mu)(z)zdz\\
&= \log \frac{\epsilon}{1+\gamma_2} + \frac{\gamma_1}{1+ \gamma_1} \int_{z_0}^\infty \int_{\calX}\frac{ e^{-\frac{z}{1+x}}}{1+x} d\mu zdz\\
&=  \log \frac{\epsilon}{1+\gamma_2} + \frac{\gamma_1}{1+ \gamma_1}  \int_{\calX}\int_{z_0}^\infty\frac{ e^{-\frac{z}{1+x}}}{1+x}  zdzd\mu\\
&=   \log \frac{\epsilon}{1+\gamma_2} + \frac{\gamma_1}{1+ \gamma_1}  \int_{\calX}{(1+x)\pr{1 + \frac{z_0}{1+x}} e^{-\frac{z_0}{1+x}}} d\mu\\
&\geq  \log \frac{\epsilon}{1+\gamma_2} + \frac{\gamma_1}{1+ \gamma_1}  \int_{\calX}{(1+x) e^{-z_0}} d\mu\\
& \geq \log \frac{\epsilon}{1+\gamma_2} + \frac{\gamma_1}{1+ \gamma_1} \pr{\E[\mu]{X} + 1}e^{-z_0},
\end{align}
which implies that $\E[\mu]{X} < \infty$.
\end{proof}}
The next result shows that an upper-bound on $\D{\wz\circ \mu}{q_0} $ leads to an upper-bound on  $\E[\mu]{X}$.
\begin{lemma}
\label{lm:expected-x-bound}
For any $\nu>0$ and for any probability measure $\mu$ on $\calX$, $\D{\wz\circ \mu}{q_0} \leq \nu$ implies that $\E[\mu]{X}\leq2\sqrt{\nu} + \nu$.
\end{lemma}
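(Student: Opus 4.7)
The plan is to first translate the hypothesis into a constraint on the mean of $\wz\circ\mu$, and then exploit the exponential form of $q_0$ to obtain the claimed algebraic bound with a single-variable comparison.

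I would begin by invoking Lemma~\ref{lm:mean-finite} to guarantee that $\alpha \eqdef \E[\mu]{X}$ is finite, so that subsequent expressions are well-defined. Equation~\eqref{eq:expected-z} of Proposition~\ref{prop:bound-pdf} then gives that the mean of $\wz\circ\mu$ equals $1+\alpha$. The key step will be to show the intermediate inequality
\begin{align}
\alpha - \log(1+\alpha) \leq \D{\wz\circ\mu}{q_0}.
\end{align}
The cleanest way is the Donsker-Varadhan variational formula applied to the linear test function $f(z)=\lambda z$ with $\lambda\in[0,1)$: since $\E[q_0]{e^{\lambda Z}} = 1/(1-\lambda)$, we have $\D{\wz\circ\mu}{q_0} \geq \lambda(1+\alpha) + \log(1-\lambda)$, and the supremum over $\lambda$, attained at $\lambda^\star = \alpha/(1+\alpha)$, evaluates exactly to $\alpha - \log(1+\alpha)$. (An equivalent route is to observe that among all probability distributions on $[0,\infty)$ with prescribed mean $1+\alpha$, the exponential distribution $q_\alpha$ uniquely minimizes the relative entropy against $q_0$; a direct computation then gives $\D{q_\alpha}{q_0} = \alpha - \log(1+\alpha)$.)

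Combined with the hypothesis $\D{\wz\circ\mu}{q_0}\leq \nu$, this yields $\alpha - \log(1+\alpha) \leq \nu$. It remains to invert this to $\alpha \leq 2\sqrt{\nu}+\nu$. The function $g(\alpha)\eqdef \alpha - \log(1+\alpha)$ is strictly increasing on $[0,\infty)$, so it is enough to verify that $g(2\sqrt{\nu}+\nu) > \nu$. The useful observation is the factorization $1+2\sqrt{\nu}+\nu = (1+\sqrt{\nu})^2$; setting $s\eqdef \sqrt{\nu}$,
\begin{align}
g(2\sqrt{\nu}+\nu) = 2s + s^2 - 2\log(1+s) > 2s + s^2 - 2s = s^2 = \nu,
\end{align}
where the strict inequality uses $\log(1+s)<s$ for $s>0$. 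Monotonicity of $g$ then forces $\alpha\leq 2\sqrt{\nu}+\nu$, completing the argument.

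There is no serious obstacle in this proof; the only non-automatic point is recognizing the algebraic identity that makes the logarithm in $g(2\sqrt{\nu}+\nu)$ simplify, so that the desired inequality $g(2\sqrt{\nu}+\nu)>\nu$ follows without an additional regime analysis. Note in particular that a naive quadratic inversion of $\alpha - \log(1+\alpha) \leq \nu$ via $\alpha - \log(1+\alpha) \geq \alpha^2/(2(1+\alpha))$ yields only $\alpha\leq \nu + \sqrt{\nu^2+2\nu}$, which is strictly weaker than the target bound for large $\nu$; the factorization trick above is what delivers the claimed form uniformly in $\nu>0$.
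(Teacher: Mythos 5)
Your proof is correct, and at its core it rests on the same mechanism as the paper's: lower-bound $\D{\wz\circ\mu}{q_0}$ by a linear (exponential-family) test function and translate the resulting constraint on the mean $\alpha\eqdef\E[\mu]{X}$ into the claimed numerical bound. In the paper this is phrased without Donsker--Varadhan: one adds the trivial inequality $\D{\wz\circ\mu}{q_x}\geq 0$ to the hypothesis $\D{\wz\circ\mu}{q_0}\leq\nu$ to obtain $\nu\geq -\log(1+x)+\frac{x}{1+x}\pr{1+\alpha}$ for every $x>0$, and your Donsker--Varadhan bound with $f(z)=\lambda z$ is exactly the envelope of that family (with $\lambda = x/(1+x)$; the supremum, at $x=\alpha$, recovers your $\alpha-\log(1+\alpha)\leq\nu$). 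Where the two proofs genuinely diverge is the inversion step: the paper relaxes $\log(1+x)\leq x$, which lets $(\nu+x)\frac{1+x}{x}-1$ collapse algebraically when $x=\sqrt{\nu}$, whereas you keep the tight inequality $g(\alpha)\eqdef\alpha-\log(1+\alpha)\leq\nu$ and invert it monotonically by exhibiting $g(2\sqrt{\nu}+\nu)>\nu$ via $1+2\sqrt{\nu}+\nu=(1+\sqrt{\nu})^2$. Both inversions are correct and yield the same bound; your version makes the monotone-inversion structure explicit, and your remark that the quadratic Pinsker-style relaxation $\alpha-\log(1+\alpha)\geq\alpha^2/(2(1+\alpha))$ is strictly weaker for large $\nu$ correctly explains why the factorization trick (or, equivalently, the paper's choice $x=\sqrt{\nu}$ after the linear relaxation $\log(1+x)\leq x$) is needed to hit the stated form uniformly in $\nu$.
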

\begin{proof}
\modfirst{For any $x\in\mathbb{R}^+$, we first consider the relative entropy $\D{w_{Z|X}\circ \mu}{q_x}$ and show that it exists. By~\eqref{eq:bound-wz} in Proposition~\ref{prop:bound-pdf} applied to a distribution with a single mass point at $x$, $|\log q_x(z)| \leq x + z$. We thus have $\int_0^{\infty} (w_{Z|X}\circ \mu)(z) \abs{\log q_x(z)} dz\leq x + \E[\wz\circ \mu]{Z} = x + 1 + \E[\mu]{X}$, which is finite by Lemma~\ref{lm:mean-finite}. Consequently, $\int_0^{\infty} (w_{Z|X}\circ \mu)(z) \log q_x(z) dz$ is finite, and therefore by \cite[Lemma 8.3.1]{InformationTheory}, the relative entropy $\D{\wz\circ\mu}{q_x}$ exists and is finite.} Accordingly, we have
\begin{align}
0
&\geq -\int_0^{\infty}  (w_{Z|X}\circ \mu)(z) \log \frac{(w_{Z|X} \circ \mu)(z)}{q_x(z)}dz \\
&=\int_0^{\infty}  (w_{Z|X}\circ \mu)(z) \pr{ -\log ((w_{Z|X} \circ \mu)(z)) -\log(1+x) -\frac{z}{1+x}} dz. \label{eq:d-x}
\end{align}
Furthermore, by our assumption that $\D{w_{Z|X}\circ \mu}{q_0} \leq \nu$, we have
\begin{align}
\nu 
&\geq \int_0^{\infty} (w_{Z|X}\circ \mu)(z)  \log \frac{(w_{Z|X} \circ \mu)(z)}{q_0(z)}dz \\
&=\int_0^{\infty}  (w_{Z|X}\circ \mu)(z)  \pr{ \log ((w_{Z|X} \circ \mu)(z)) + z}dz. \label{eq:d-nu} 
\end{align}
Adding the inequalities in \eqref{eq:d-x} and \eqref{eq:d-nu}, we obtain
\begin{align}
\nu 
&\geq \int_0^{\infty}  (w_{Z|X}\circ \mu)(z) \pr{-\log(1+x) + \frac{xz}{1+x}}dz\\
&= -\log(1+x)+ \frac{x}{1+x} \E[\wz\circ \mu]{Z}\\
&\stackrel{(a)}{=} -\log(1+x) + \frac{x}{1+x}\pr{\E[\mu]{X} + 1},
\end{align}
where $(a)$ follows from \eqref{eq:expected-z}. Hence, we have
\begin{align}
\E[\mu]{X} &\leq \pr{\nu+ \log(1+x)} \frac{1+x}{x} - 1\\
&\leq \pr{\nu + x} \frac{1+x}{x} -1.
\end{align}
Choosing $x=\sqrt{\nu}$, we obtain the desired upper-bound.
\end{proof}

\begin{lemma}
For any probability measure $\mu$ on $\calX$ with $\E[\mu]{X} < \infty$, $I(\mu, \wy)$  is well-defined and finite, and
\begin{align}
  I(\mu, \wy) = -\int_0^\infty (\wy \circ \mu)(y) \log  ((\wy \circ \mu)(y)) dy - \E[\mu]{ \log\pr{1 + \theta_m^2 X}} - 1.\label{eq:mutual-exp}
\end{align}
\end{lemma}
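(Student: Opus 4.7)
The plan is to compute $I(\mu, \wy)$ directly from its definition as a Kullback--Leibler divergence by splitting the log-ratio into a ``numerator'' part and a ``denominator'' part, exchange the order of integration in each piece via Fubini, and recognize the result as the stated formula. Finiteness will then follow as an immediate byproduct of the bounds used to justify the interchange.

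Concretely, I would start from
\begin{align}
I(\mu, \wy) = \int_{\calX}\int_0^\infty p_x(y) \log \frac{p_x(y)}{(\wy\circ\mu)(y)} \, dy \, d\mu
\end{align}
and split the log into $\log p_x(y) - \log(\wy\circ\mu)(y)$. For the first piece, a direct computation using $p_x(y) = (1+\theta_m^2 x)^{-1} e^{-y/(1+\theta_m^2 x)}$ gives
\begin{align}
\int_0^\infty p_x(y) \log p_x(y) \, dy = -\log(1+\theta_m^2 x) - 1,
\end{align}
and since $\log(1+\theta_m^2 x) \leq \theta_m^2 x$ and $\E[\mu]{X}<\infty$, this is $\mu$-integrable and integrates to $-\E[\mu]{\log(1+\theta_m^2 X)} - 1$. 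For the second piece, the key input is Proposition~\ref{prop:bound-pdf}, which gives $|\log(\wy\circ\mu)(y)| \leq \theta_m^2 \E[\mu]{X} + y$. Combined with $\int_0^\infty y\, p_x(y)\, dy = 1+\theta_m^2 x$, this yields
\begin{align}
\int_{\calX} \int_0^\infty p_x(y) \abs{\log(\wy\circ\mu)(y)} \, dy \, d\mu \leq 2\theta_m^2 \E[\mu]{X} + 1 < \infty,
\end{align}
which licenses Fubini's theorem and lets me interchange the order of integration:
\begin{align}
\int_{\calX}\int_0^\infty p_x(y) \log(\wy\circ\mu)(y)\, dy\, d\mu = \int_0^\infty (\wy\circ\mu)(y) \log(\wy\circ\mu)(y)\, dy.
\end{align}
Combining the two pieces yields the claimed identity.

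Finiteness of $I(\mu, \wy)$ is then automatic: the first piece is bounded in absolute value by $\theta_m^2 \E[\mu]{X} + 1$, and the Fubini bound above also bounds the second piece; alternatively, one can note that the output entropy $-\int (\wy\circ\mu)\log(\wy\circ\mu)$ is non-negative (the support is $[0,\infty[$ but this term is bounded above by $1+\log(1+\theta_m^2 \E[\mu]{X})$ via the maximum entropy property of the exponential distribution with fixed mean~\eqref{eq:expected-y}), and the conditional differential entropy piece is finite.

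The only subtlety is the Fubini step for the denominator piece, because $\log(\wy\circ\mu)(y)$ is unbounded as $y\to\infty$; however, Proposition~\ref{prop:bound-pdf} provides a linear envelope in $y$ that is integrable against $p_x(y)$, so this does not present a genuine obstacle.
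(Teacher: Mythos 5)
Your proposal is correct and follows essentially the same route as the paper: both split the log-likelihood ratio, use the bounds of Proposition~\ref{prop:bound-pdf} to establish absolute integrability and license the interchange of integrals, and identify the resulting pieces. The only cosmetic difference is that you integrate $\int_0^\infty p_x(y)\log p_x(y)\,dy = -\log(1+\theta_m^2 x)-1$ directly, whereas the paper splits $\log p_x(y)$ further into $-\log(1+\theta_m^2 X) - \tfrac{Y}{1+\theta_m^2 X}$ and computes $\E{\tfrac{Y}{1+\theta_m^2 X}}=1$ via a conditional expectation; the two computations are equivalent.
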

\begin{proof}
To check that $I(\mu, \wy)$ is well-defined and finite, it is enough to show that ${\displaystyle \int }\abs{\log \frac{p_x(y)}{(w_{Y|X}\circ\mu)(y)}}d(w_{Y|X}\times \mu) < \infty$, which holds since
\begin{align}
\int \left|\log \frac{p_x(y)}{(w_{Y|X}\circ\mu)(y)}\right|d(w_{Y|X}\times \mu) 
&\leq \int \pr{|\log p_x(y)| + |\log((w_{Y|X}\circ\mu)(y)) |}d(w_{Y|X}\times \mu) \\
&\stackrel{(a)}{\leq}  \int \pr{ \theta_m^2(x + \E[\mu]{X}) +2y}d(w_{Y|X}\times \mu) \\
&= 2\theta_m^2 \E[\mu]{X} + 2\E[\wy\circ\mu ]{Y}\\
&\stackrel{(b)}{=} 4\theta_m^2 \E[\mu]{X}  + 2 <\infty,
\end{align}
where $(a)$ follows from~\eqref{eq:bound-wy}, and $(b)$ follows from \eqref{eq:expected-y}. Note next that
\begin{align}
I(\mu, \wy)
&= \E[w_{Y|X}\times \mu]{\log \frac{p_X(Y)}{(\wy\circ\mu)(Y)}}\\
&= \E[w_{Y|X}\times \mu]{-\log(1+\theta_m^2 X) - \frac{Y}{1+\theta_m^2X} -\log ((\wy\circ\mu)(Y))}.
\end{align}
Moreover, $\E{\log(1+\theta_m^2 X)} \leq \theta_m^2 \E{X} < \infty$ and $\E{ \frac{Y}{1+\theta_m^2X}} \leq \E{Y} < \infty$, and therefore, we can use the linearity of expectation to write 
\begin{align}
&\E[w_{Y|X}\times \mu]{-\log(1+\theta_m^2 X) - \frac{Y}{1+\theta_m^2X} -\log \pr{(\wy\circ\mu)(Y)}}\\
&= -\E{\log(1+\theta_m^2 X)} - \E{\frac{Y}{1+\theta_m^2X}} -\E{\log\pr{(\wy\circ\mu)(Y)}}\\
&= -\E{\log(1+\theta_m^2 X)} - \E{\E{\frac{Y}{1+\theta_m^2X}\bigg|X}} -\E{\log\pr{(\wy\circ\mu)(Y)}}\\
&= -\E{\log(1+\theta_m^2 X)} - \E{\frac{1+\theta_m^2X}{1+\theta_m^2X}} -\E{\log\pr{(\wy\circ\mu)(Y)}}\\
&= -\E{\log(1+\theta_m^2 X)} - 1 -\E{\log \pr{(\wy\circ\mu)(Y)}},
\end{align}
which completes the proof of \eqref{eq:mutual-exp}.
\end{proof}
\begin{lemma}
\label{lm:cross-entropy}
Suppose that $\D{\wz \circ \mu_1}{q_0}$ and $\D{\wz \circ \mu_2}{q_0}$ exist  and are finite for two probability measures $\mu_1$ and $\mu_2$ on $\calX$. Then, the cross entropy $\int_0^\infty (\wz\circ \mu_1)(z) \log (\wz \circ \mu_2(z)) dz$ exists and is finite.
\end{lemma}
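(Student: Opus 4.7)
The plan is to show the integrand $(\wz\circ\mu_1)(z)\,\log(\wz\circ\mu_2)(z)$ is dominated by an integrable function, so that Lebesgue integrability (and hence existence and finiteness of the cross entropy) follows directly.

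First, I would invoke Proposition~\ref{prop:bound-pdf} applied with $\theta_m=1$ to bound the log of the output density: for every $z\geq 0$,
\begin{align}
-\E[\mu_2]{X}-z \;\leq\; \log(\wz\circ\mu_2)(z) \;\leq\; 0.
\end{align}
The upper bound relies on the fact that each transition density $q_x(z)=\frac{1}{1+x}e^{-z/(1+x)}$ is bounded by $1$, so the mixture $(\wz\circ\mu_2)(z)$ is bounded by $1$ as well. Consequently,
\begin{align}
\bigl|(\wz\circ\mu_1)(z)\log(\wz\circ\mu_2)(z)\bigr| \;\leq\; (\wz\circ\mu_1)(z)\bigl(\E[\mu_2]{X}+z\bigr).
\end{align}

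Second, I would use Lemma~\ref{lm:mean-finite} to turn the hypotheses $\D{\wz\circ\mu_1}{q_0}<\infty$ and $\D{\wz\circ\mu_2}{q_0}<\infty$ into the finite-mean statements $\E[\mu_1]{X}<\infty$ and $\E[\mu_2]{X}<\infty$. In particular, $\E[\mu_2]{X}$ is a finite constant in the dominating function above.

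Finally, integrating the dominating function and using \eqref{eq:expected-z} of Proposition~\ref{prop:bound-pdf},
\begin{align}
\int_0^\infty (\wz\circ\mu_1)(z)\bigl(\E[\mu_2]{X}+z\bigr)dz \;=\; \E[\mu_2]{X}+\E[\wz\circ\mu_1]{Z} \;=\; \E[\mu_2]{X}+1+\E[\mu_1]{X} \;<\;\infty,
\end{align}
which establishes absolute integrability, hence existence and finiteness of the cross entropy. No step is genuinely hard here: the only subtle points are recognising that $(\wz\circ\mu)(z)\leq 1$ (so its log is non-positive, giving a clean one-sided bound) and remembering to first upgrade the finite-divergence hypotheses to finite-mean statements via Lemma~\ref{lm:mean-finite} before taking expectations.
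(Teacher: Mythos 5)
Your proof is correct and follows essentially the same route as the paper's: bound $|\log(\wz\circ\mu_2)(z)|\leq \E[\mu_2]{X}+z$ via Proposition~\ref{prop:bound-pdf}, integrate against $\wz\circ\mu_1$ using \eqref{eq:expected-z}, and conclude from finiteness of the means. The only cosmetic difference is that you invoke Lemma~\ref{lm:mean-finite} directly for $\E[\mu_i]{X}<\infty$, whereas the paper cites Lemma~\ref{lm:expected-x-bound} (which itself rests on Lemma~\ref{lm:mean-finite}); your choice is the more direct fit for a pure finiteness hypothesis.
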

\begin{proof}
We shall show that $\int_0^\infty (\wz\circ \mu_1)(z) |\log ((\wz \circ \mu_2)(z))| dz < \infty$. By Lemma~\ref{lm:expected-x-bound}, we know that $\E[\mu_1]{X}$ and $\E[\mu_2]{X}$ are finite. Therefore, we have
\begin{align}
\int_0^\infty (\wz\circ \mu_1)(z) |\log (\wz \circ \mu_2(z))| dz
&\stackrel{(a)}{\leq} \int_0^\infty (\wz\circ \mu_1)(z) \pr{\E[\mu_2]{X} + z}dz\\
&=\E[\mu_2]{X} + \E[\wz\circ \mu_1]{Z}\\
&\stackrel{(b)}{=}  \E[\mu_2]{X} + 1+ \E[ \mu_1]{X} < \infty
\end{align}
where $(a)$ follows from \eqref{eq:bound-wz}, and $(b)$ follows from \eqref{eq:expected-z}.
\end{proof}
\modfirst{
\begin{lemma}
\label{lm:chi2-value}
Let $\mu$ be a probability measure over $\calX$ such that $\sup(\supp{\mu}) < \infty$. We then have
\begin{align}
I(\mu, \wy) = \E[\mu]{\theta_m^2 X - \log(1+\theta_m^2X)} - \D{\wy\circ \mu}{p_0}.
\end{align}
Furthermore, if we have $\sup(\supp{\mu}) < 1$, then
\begin{align}
\chi_2(\wz \circ \mu\| q_0) = \E[\mu \circ \mu]{\frac{X_1 X_2}{1 - X_1 X_2}}.
\end{align}
\end{lemma}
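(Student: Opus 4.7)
For the first identity, the plan is to apply the standard decomposition
\[
I(\mu, \wy) = \E[\mu]{\D{p_X}{p_0}} - \D{\wy \circ \mu}{p_0},
\]
which is legitimate here because $\sup(\supp{\mu})<\infty$ bounds $\E[\mu]{X}$, so by the preceding lemmas $I(\mu,\wy)$ and $\D{\wy\circ \mu}{p_0}$ are both finite and the rearrangement carries no integrability issues. I would then compute $\D{p_x}{p_0}$ directly from the explicit exponential form
\[
\D{p_x}{p_0} = \int_0^\infty p_x(y)\pr{-\log(1+\theta_m^2 x) - \tfrac{y}{1+\theta_m^2 x} + y}\,dy,
\]
using $\E[p_x]{Y} = 1+\theta_m^2 x$ from~\eqref{eq:expected-y} to obtain $\D{p_x}{p_0} = \theta_m^2 x - \log(1+\theta_m^2 x)$. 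Averaging over $\mu$ yields the claimed expression for $I(\mu,\wy)$.

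For the $\chi_2$ identity, I would begin from the definition
\[
\chi_2(\wz\circ \mu\|q_0) = \int_0^\infty \frac{((\wz\circ \mu)(z))^2}{q_0(z)}\,dz - 1,
\]
expand the square as a double $\mu$-integral, and apply Tonelli's theorem (the integrand is non-negative) to swap the $z$-integral with the $\mu\otimes\mu$-integral. This reduces the interior computation to the one-dimensional integral
\[
\int_0^\infty \exp\!\pr{z\pr{1 - \tfrac{1}{1+x_1} - \tfrac{1}{1+x_2}}}\,dz,
\]
which converges precisely because $\sup(\supp{\mu})<1$ guarantees $x_1 x_2 < 1$ $\mu\otimes\mu$-almost surely, equivalently $1 - \tfrac{1}{1+x_1}-\tfrac{1}{1+x_2} = -\tfrac{1-x_1 x_2}{(1+x_1)(1+x_2)} < 0$. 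Evaluating the integral gives $(1+x_1)(1+x_2)/(1-x_1 x_2)$; multiplying by the prefactor $1/((1+x_1)(1+x_2))$ from the two copies of $\wz(\cdot|x)$ produces $1/(1-x_1 x_2)$, so the integral equals $\E[\mu\otimes\mu]{1/(1-X_1 X_2)}$. Subtracting $1$ and combining under one expectation gives $\E[\mu\otimes\mu]{X_1 X_2/(1-X_1 X_2)}$.

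The main technical point to verify carefully is the interchange of integrations for the $\chi_2$ computation. Since the integrand is non-negative, Tonelli's theorem applies unconditionally, and the only thing to check is that the resulting iterated integral is finite; this follows because, letting $a \eqdef \sup(\supp{\mu})<1$, we have $1-x_1 x_2 \geq 1-a^2 > 0$ uniformly $\mu\otimes\mu$-almost surely, so $\E[\mu\otimes\mu]{1/(1-X_1 X_2)} \leq 1/(1-a^2) < \infty$. No further subtlety arises, and the rest reduces to the elementary algebraic simplifications outlined above.
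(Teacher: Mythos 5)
Your proposal is correct and follows essentially the same route as the paper: decompose $I(\mu,\wy)$ via $p_0$ and compute $\D{p_x}{p_0}$ explicitly for the first identity, then expand $\chi_2$ as a double expectation, swap integrals by Tonelli/Fubini with the non-negativity justification, and evaluate the resulting exponential integral for the second. The extra care you take in noting that $\sup(\supp{\mu})<1$ ensures $1-x_1x_2 \geq 1-a^2 > 0$ uniformly is a welcome clarification but not a departure from the paper's argument.
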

\begin{proof}
We have
\begin{align}
I(\mu, \wy) 
&= \int \log \frac{p_X(Y)}{(\wy \circ \mu)(Y)} d(\wy \times \mu)\\
&=  \int \log \frac{p_X(Y)}{p_0(Y)} d(\wy \times \mu)+  \int \log \frac{p_0(Y)}{(\wy \circ \mu)(Y)} d(\wy \times \mu)\\
&= \E[\mu]{\D{p_X}{p_0}} - \D{\wy \circ \mu}{p_0}\\
&\stackrel{(a)}{=}  \E[\mu]{\theta_m^2 X - \log(1+\theta_m^2X)} - \D{\wy\circ \mu}{p_0},
\end{align}
where $(a)$ follows from the straightforward calculation of the relative entropy between two exponential distribution.
Additionally, we have
\begin{align}
\chi_2(\wz \circ \mu\| q_0) 
&= \int_0^\infty \frac{(\wz\circ\mu)(z) ^2}{q_0(z)} dz -1\\
&=  \int_0^\infty \E[\mu\otimes \mu]{\frac{1}{(1+X_1)(1+X_2)}e^{z\pr{1 - \frac{1}{1+X_1} - \frac{1}{1+X_2}}} }dz -1\\
&\stackrel{(a)}{=}  \E[\mu\otimes \mu]{ \int_0^\infty\frac{1}{(1+X_1)(1+X_2)}e^{z\pr{1 - \frac{1}{1+X_1} - \frac{1}{1+X_2}}}dz } -1\\
&=  \E[\mu\otimes \mu]{\frac{1}{1-X_1X_2}} - 1\\
&=  \E[\mu\otimes \mu]{\frac{X_1X_2}{1-X_1X_2}},
\end{align}
where $(a)$ follows from Fubini theorem and $\frac{1}{(1+X_1)(1+X_2)}e^{z\pr{1 - \frac{1}{1+X_1} - \frac{1}{1+X_2}}} \geq 0$ almost surely.
\end{proof}
}
\begin{lemma} 
\label{lm:d-two-points}
If $a > 1$ and $\beta>0$ is small enough, then
\begin{multline}
\D{\beta q_a + (1-\beta)q_0}{q_0} =\\ \beta^{1+\frac{1}{a}}\pr{{1+a}}^{-1-\frac{1}{a}}\pr{1+\frac{1}{a}}\pr{\frac{ \Gamma\pr{-\frac{1}{a}}\Gamma\pr{2+\frac{1}{a}}}{\pr{1+\frac{1}{a}}^2} + a^2 \Gamma\pr{1-\frac{1}{a}}\Gamma\pr{1+\frac{1}{a}}} + O(\beta^2),
\end{multline}
where $\Gamma(x) \eqdef \int_0^\infty y^{x-1}e^{-y}dy$.

If $a < 1$ and $\beta > 0$ is small enough, then
\begin{align}
\D{\beta q_a + (1-\beta)q_0}{q_0} = \frac{\modfirst{a^2}}{2(1-a^2)}\beta^2 + o(\beta^2).
\end{align}
\end{lemma}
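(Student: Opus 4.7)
The plan is to treat the two regimes separately, since $\chi_2(q_a\|q_0)$ diverges at $a=1$ and dictates qualitatively different asymptotics.

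For $a<1$: Lemma~\ref{lm:chi2-value} applied to the point mass at $a$ gives $\chi_2(q_a\|q_0)=a^2/(1-a^2)<\infty$. Setting $f_\beta\eqdef\beta q_a+(1-\beta)q_0$ and $r\eqdef q_a/q_0$ so that $f_\beta/q_0=1+\beta(r-1)$, I would use $(1+u)\log(1+u)=u+u^2/2+R(u)$, where $|R(u)|\leq Cu^2$ globally and $R(u)/u^2\to 0$ as $u\to 0$, to decompose $\D{f_\beta}{q_0}$ into $\beta\int q_0(r-1)\,dz+\tfrac{\beta^2}{2}\int q_0(r-1)^2\,dz+\int q_0 R(\beta(r-1))\,dz$. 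The first integral vanishes, the second equals $\beta^2 a^2/[2(1-a^2)]$, and dominated convergence with majorant $C(r-1)^2$ --- which is $q_0$-integrable precisely because $a<1$ --- shows the remainder is $o(\beta^2)$.

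For $a>1$: the Taylor argument fails because $\int q_0 r^2\,dz=\infty$, so I would change variables to $t\eqdef \beta q_a(z)/q_0(z)=\tfrac{\beta}{1+a}e^{az/(1+a)}$, a bijection of $[0,\infty)$ onto $[\beta/(1+a),\infty)$ satisfying $q_0(z)\,dz=[\beta^c/(a(1+a)^{c-1})]\,t^{-c-1}\,dt$ with $c\eqdef 1+1/a$. Since this prefactor equals $(1+1/a)(1+a)^{-1-1/a}\beta^{1+1/a}$, one gets $\D{f_\beta}{q_0}=[\beta^c/(a(1+a)^{c-1})]\,K(\beta)$ with $K(\beta)\eqdef\int_{\beta/(1+a)}^{\infty}[(1-\beta)+t]\log[(1-\beta)+t]\,t^{-c-1}\,dt$. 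The task then reduces to showing $K(\beta)=K_0+O(\beta^{2-c})$ (so that the error contributes $O(\beta^2)$ after multiplication by $\beta^c$) and identifying $K_0$ with the bracketed constant of the lemma.

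To extract $K_0$ I would integrate by parts twice. The first IBP uses $t^{-c-1}dt=-d(t^{-c}/c)$; writing $g(s)\eqdef s\log s$ and expanding $g(1-\delta)=-\delta+\delta^2/2+\cdots$ with $\delta=\beta a/(1+a)$, the boundary term at $t=\beta/(1+a)$ contributes at order $\beta^{1-c}$. The residual integrates $g'(1-\beta+t)=1+\log(1-\beta+t)$ against $t^{-c}$; a second IBP reduces this to the shifted beta integral $\int_{\beta/(1+a)}^\infty t^{1-c}/(1-\beta+t)\,dt$, which via $t=(1-\beta)u$ and the identity $\int_0^\infty u^{1-c}/(1+u)\,du=\Gamma(2-c)\Gamma(c-1)$ equals $(1-\beta)^{1-c}\Gamma(2-c)\Gamma(c-1)+O(\beta^{2-c})$ plus a second $\beta^{1-c}$ boundary term. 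The algebraic identity $-a+1/(c-1)=0$, valid because $c-1=1/a$, forces the two $\beta^{1-c}$ pieces to cancel; what remains is $K_0=\Gamma(2-c)\Gamma(c-1)/[c(c-1)]$, and the identities $\Gamma(-1/a)=-a\Gamma(1-1/a)$ and $\Gamma(2+1/a)=c\Gamma(1+1/a)$ then rewrite $\beta^c K_0/[a(1+a)^{c-1}]$ in the Gamma form of the lemma. The main obstacle is the bookkeeping: corrections of orders $\beta^{1-c}$, $\beta^{2-c}$, $\beta$, and $\beta^{3-c}$ all appear below the $\beta^2$ error level, and one must verify both the exact algebraic cancellation of the $\beta^{1-c}$ pieces and that each surviving sub-leading correction, multiplied by the prefactor $\beta^c$, contributes at most $O(\beta^2)$.
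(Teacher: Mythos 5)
Your proposal is correct and takes a genuinely different route from the paper's. For $a<1$, the paper just says "the other case follows from a similar approach" without detail; your argument via $\chi_2(q_a\|q_0)=a^2/(1-a^2)<\infty$, the pointwise Taylor expansion of $(1+u)\log(1+u)$, and dominated convergence with majorant $C(r-1)^2$ (which is $q_0$-integrable precisely because $a<1$) is clean and self-contained. For $a>1$, the paper substitutes $u=e^{az/(1+a)}$ and relies on the closed-form primitive of $u^{\lambda_1}\log(1+\lambda_2 u)$ in terms of ${}_2F_1$, evaluates it at the limits, and Taylor-expands ${}_2F_1$ around the origin; you instead substitute $t=\beta q_a/q_0$, which extracts the $\beta^{1+1/a}$ scaling up front and reduces the problem to the constant $K_0$, recovered by two integrations by parts and the Beta-function identity $\int_0^\infty u^{1-c}/(1+u)\,du=\Gamma(2-c)\Gamma(c-1)$. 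I checked the key steps: with $c=1+1/a$ the two $\beta^{1-c}$ boundary pieces are $-a(1+a)^{c-1}\beta^{1-c}/c$ and $(1+a)^{c-1}\beta^{1-c}/[c(c-1)]$, which cancel exactly because $1/(c-1)=a$; the remaining constant is $K_0=\Gamma(2-c)\Gamma(c-1)/[c(c-1)]$, and $\beta^c K_0/[a(1+a)^{c-1}]=a\Gamma(1-1/a)\Gamma(1/a)\,\beta^{1+1/a}/(1+a)^{1+1/a}$, which agrees with the lemma's bracketed expression after applying $\Gamma(-1/a)=-a\Gamma(1-1/a)$ and $\Gamma(2+1/a)=(1+1/a)(1/a)\Gamma(1/a)$; the surviving sub-leading terms ($\beta^{2-c}$, $\beta$, $\beta^{3-c}$) all give $O(\beta^2)$ after the prefactor $\beta^c$. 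Your approach is more elementary --- it avoids hypergeometric functions entirely and makes the order-$\beta^1$ cancellation explicit rather than burying it in a ${}_2F_1$ expansion --- at the cost of some IBP bookkeeping; the paper's approach is more mechanical once the primitive formula is accepted.
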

\begin{proof}

We only consider the case where $a>1$ and the other case follows from similar approach. By definition, we have
\begin{align}
\D{\beta q_a + (1-\beta)q_0}{q_0} 
&= \int_0^\infty \pr{\beta q_a(z) + (1-\beta)q_0(z)}\log \pr{\frac{\beta q_a(z) + (1-\beta)q_0(z)}{q_0(z)}} dz\\
&= \int_0^\infty  \pr{\beta \frac{e^{-\frac{z}{1+a}}}{1+a} + (1-\beta)e^{-z}}\log \pr{1-\beta + \frac{\beta}{1+a} e^{\frac{az}{1+a}}} dz\\
&= \log( 1 - \beta) + \int_0^\infty  \pr{\beta \frac{e^{-\frac{z}{1+a}}}{1+a} + (1-\beta)e^{-z}}\log \pr{1+ \frac{\beta}{(1-\beta)(1+a)} e^{\frac{az}{1+a}}} dz.\label{eq:d-two-p}
\end{align}
By substitution $u\eqdef e^{\frac{az}{1+a}}$ in the above integral, we obtain
\begin{multline}
 \int_0^\infty  \pr{\beta \frac{e^{-\frac{z}{1+a}}}{1+a} + (1-\beta)e^{-z}}\log \pr{1+ \frac{\beta}{(1-\beta)(1+a)} e^{\frac{az}{1+a}}} dz 
 =\\ \pr{1 + \frac{1}{a}}\int_1^\infty \pr{(1-\beta)u^{-2-\frac{1}{a}} + \frac{\beta }{1+a}u^{-1-\frac{1}{a}}}\log\pr{1+\frac{\beta}{(1-\beta) (1+a)}u}du\label{eq:integral-u}
\end{multline}
Note next that for all real numbers $\lambda_1, \lambda_2$, a primitive function of $u^{\lambda_1}\log\pr{1+\lambda_2 u}$ is
\begin{align}
\int u^{\lambda_1}\log\pr{1+\lambda_2 u} du = \frac{u^{\lambda_1 + 1}\pr{{}_2F_1(1, \lambda_1 + 1; \lambda_1 + 2; -\lambda_2 u) + (\lambda_1 +1)\log(\lambda_2 u + 1) - 1}}{(\lambda_1 + 1)^2} + \text{ constant},
\end{align}
where ${}_2F_1(a, b;c;x)$ is the hypergeometric function. Additionally, for $\lambda_1 < - 1$, the limit of  this primitive function at $u=\infty$ is 
\begin{align}
\frac{\lambda_2^{-\lambda_1 - 1} \Gamma(2+\lambda_1)\Gamma(-\lambda_1)}{(\lambda_1 + 1)^2}.
\end{align}
Therefore, if we define $\lambda\eqdef \frac{\beta}{(1-\beta)(1+a)}$, by linearity of integral, we have
\begin{align}
 \displaybreak[0]&\int_1^\infty \pr{(1-\beta)u^{-2-\frac{1}{a}} + \frac{\beta u^{-\frac{1}{a}}}{1+a}}\log\pr{1+\frac{\beta}{(1-\beta) (1+a)}u}du\\
 \displaybreak[0]&= (1-\beta)\pr{\frac{\lambda^{1+\frac{1}{a}} \Gamma\pr{-\frac{1}{a}}\Gamma\pr{2+\frac{1}{a}}}{\pr{1 + \frac{1}{a}}^2} -  \frac{{}_2F_1\pr{1, -1-\frac{1}{a}; -\frac{1}{a}; -\lambda} - (1+\frac{1}{a})\log(\lambda + 1) - 1}{\pr{1 + \frac{1}{a}}^2}}+ \\
\displaybreak[0]&~~~~~~~~\frac{\beta}{1+a}\pr{\frac{\lambda^{\frac{1}{a}} \Gamma\pr{1-\frac{1}{a}}\Gamma\pr{1+\frac{1}{a}}}{\pr{\frac{1}{a}}^2} -  \frac{{}_2F_1\pr{1, -\frac{1}{a}; 1- \frac{1}{a}; -\lambda} - (\frac{1}{a})\log(\lambda + 1) - 1}{\pr{\frac{1}{a}}^2}}\\
\displaybreak[0]&\stackrel{(a)}{=}  (1-\beta)\pr{\frac{\lambda^{1+\frac{1}{a}} \Gamma\pr{-\frac{1}{a}}\Gamma\pr{2+\frac{1}{a}}}{\pr{1 + \frac{1}{a}}^2} -  \frac{\pr{1 + \frac{\lambda\pr{1+\frac{1}{a}}}{-\frac{1}{a}} } - (1+\frac{1}{a})\lambda - 1 + O(\beta^2)}{\pr{1 + \frac{1}{a}}^2}}+ \\
\displaybreak[0]&~~~~~~~~\frac{\beta}{1+a}\pr{\frac{\lambda^{\frac{1}{a}} \Gamma\pr{1-\frac{1}{a}}\Gamma\pr{1+\frac{1}{a}}}{\pr{\frac{1}{a}}^2} -  \frac{\pr{1+\frac{\lambda \frac{1}{a}}{1-\frac{1}{a}}} - (\frac{1}{a})\lambda + O(\beta^2) - 1}{\pr{\frac{1}{a}}^2}}\\
\displaybreak[0]&=  (1-\beta)\pr{\frac{\lambda^{-1-\frac{1}{a}} \Gamma\pr{-\frac{1}{a}}\Gamma\pr{2+\frac{1}{a}}}{\pr{1 + \frac{1}{a}}^2} -  \frac{\frac{\lambda\pr{1+\frac{1}{a}}}{-\frac{1}{a}}  - (1+\frac{1}{a})\lambda  + O(\beta^2)}{\pr{1 + \frac{1}{a}}^2}}+ \\
\displaybreak[0]&~~~~~~~~\lambda(1-\beta)\pr{\frac{\lambda^{-\frac{1}{a}} \Gamma\pr{1-\frac{1}{a}}\Gamma\pr{1+\frac{1}{a}}}{\pr{\frac{1}{a}}^2} -  \frac{\frac{\lambda \frac{1}{a}}{1-\frac{1}{a}} - (\frac{1}{a})\lambda + O(\beta^2) }{\pr{\frac{1}{a}}^2}},
\end{align}
where $(a)$ follows since for $x$ going to zero ${}_2F_1(a, b;c;x) = 1 +abx/c  + O(x^2)$ and $\log(1+x) = x + O(x^2)$ by Taylor's expansion. By rearranging the terms in above expression and disregarding the higher order terms, we obtain
\begin{align}
 &\lambda^{1+\frac{1}{a}}(1-\beta)\pr{\frac{ \Gamma\pr{-\frac{1}{a}}\Gamma\pr{2+\frac{1}{a}}}{\pr{1+\frac{1}{a}}^2} + \frac{ \Gamma\pr{1-\frac{1}{a}}\Gamma\pr{1+\frac{1}{a}}}{\pr{\frac{1}{a}}^2}} + \lambda \frac{1+a}{1+\frac{1}{a}}(1-\beta) + O(\beta^2)\\
 &=\beta^{1+\frac{1}{a}}\pr{\frac{1}{(1-\beta)(1+a)}}^{1+\frac{1}{a}}(1-\beta)\pr{\frac{ \Gamma\pr{-\frac{1}{a}}\Gamma\pr{2+\frac{1}{a}}}{\pr{1+\frac{1}{a}}^2} + \frac{ \Gamma\pr{1-\frac{1}{a}}\Gamma\pr{1+\frac{1}{a}}}{\pr{\frac{1}{a}}^2}} +\frac{\beta}{1+\frac{1}{a}} + O(\beta^2).\label{eq:integral-d}
\end{align}
Combining \eqref{eq:d-two-p}, \eqref{eq:integral-u}, and \eqref{eq:integral-d}, we have
\begin{align}
&\D{\beta q_a + (1-\beta)q_0}{q_0} \\
&= \beta^{1+\frac{1}{a}}\pr{\frac{1}{(1-\beta)(1+a)}}^{1+\frac{1}{a}}(1-\beta)(1+\frac{1}{a})\pr{\frac{ \Gamma\pr{-\frac{1}{a}}\Gamma\pr{2+\frac{1}{a}}}{\pr{1+\frac{1}{a}}^2} + \frac{ \Gamma\pr{1-\frac{1}{a}}\Gamma\pr{1+\frac{1}{a}}}{\pr{\frac{1}{a}}^2}} + O(\beta^2).
\end{align}\end{proof}

\section{Error Exponents Analysis}
\label{sec:error-expon-analys}
\begin{lemma}
\label{lm:density-2m}
For a probability measure on $\calX$,  $\mu$,  for which we have $\modfirst{\max}( \supp{\mu}) \eqdef x_{\max} < \infty$ and $\D{w_{Z|X}\circ \mu}{q_0} \leq \nu$ and for any $A > 0$, it holds that 
\begin{multline}
\E[w_{Y|X}\times \mu]{\log^2\pr{\frac{p_X(Y)}{(w_{Y|X}\circ\mu)(Y)}}}\\
\leq2(3+x_{\max})(2\sqrt{\nu} + \nu)(1+\theta_m^2x_{\max})^4 \pr{e^{Ax_{\max}}A + \theta_m^2}^2e^{2A} + 20 \pr{(1+\theta_m^2x_{\max})  + A }^2e^{-A}.
\end{multline}
\end{lemma}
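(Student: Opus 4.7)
The target bound on $\E[\wy \times \mu]{\log^2(p_X(Y)/g(Y))}$, where $g \eqdef \wy \circ \mu$, has two terms of distinct character: a bulk term carrying a factor $(2\sqrt{\nu}+\nu)$ matching the bound on $\E[\mu]{X}$ from Lemma~\ref{lm:expected-x-bound}, and a tail term with $e^{-A}$ decay.  This strongly suggests splitting the integration at $Y=A$ and extracting a first-order-in-$X$ (hence $\E[\mu]{X}$) bound on the bulk.

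My plan is to decompose
\[
\log\frac{p_X(Y)}{g(Y)} = \log\frac{p_X(Y)}{p_0(Y)} - \log\frac{g(Y)}{p_0(Y)},
\]
and use $(a-b)^2 \leq 2a^2 + 2b^2$ to treat the two pieces separately.  The first piece admits a direct calculation: $\log(p_X/p_0) = -\log(1+\theta_m^2 X) + Y\theta_m^2 X/(1+\theta_m^2X)$, and taking the conditional expectation over $Y \sim p_X$ (whose second moment is $2(1+\theta_m^2 X)^2$) yields a bound proportional to $\theta_m^4 \E[\mu]{X^2} \leq \theta_m^4 x_{\max}\E[\mu]{X}$, producing the $(2\sqrt{\nu}+\nu)$ factor via Lemma~\ref{lm:expected-x-bound}.

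For the second piece $\log^2(g/p_0)$, I would control $g(Y)/p_0(Y) - 1 = \E[\mu]{p_X(Y)/p_0(Y) - 1}$ by a first-order Taylor expansion in $X$: using the derivative $\frac{\partial p_t(Y)}{\partial t} = \frac{\theta_m^2(Y-1-\theta_m^2 t)}{(1+\theta_m^2 t)^3}e^{-Y/(1+\theta_m^2 t)}$ and integrating $t$ over $[0,X]$, one obtains an estimate of the form $|g(Y)/p_0(Y) - 1| \leq \E[\mu]{X}\cdot \theta_m^2 (Y+1+\theta_m^2 x_{\max}) e^{Y\theta_m^2 x_{\max}/(1+\theta_m^2 x_{\max})}$ that is linear in $\E[\mu]{X}$.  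Restricting to the bulk $\{Y \leq A\}$, the exponential factor is absorbed into an $e^{A x_{\max}}$-type term, and converting the $|g/p_0-1|$ bound into a $|\log(g/p_0)|$ bound via $|\log(1+u)| \leq |u|/(1-|u|)$ (using the crude estimate $|\log(g/p_0)| \leq \theta_m^2 \E[\mu]{X} + 2Y$ from Proposition~\ref{prop:bound-pdf} outside the regime where this inequality is valid, which contributes an extra $e^{A}$) completes the bulk estimate.

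For the tail $\{Y > A\}$, I would use the unconditional bounds from Proposition~\ref{prop:bound-pdf}, namely $|\log p_X(Y)|\leq \log(1+\theta_m^2 x_{\max}) + Y$ and $|\log g(Y)|\leq \theta_m^2 \E[\mu]{X} + Y$, yielding a squared integrand bounded by $O((1+\theta_m^2 x_{\max})^2 + Y^2)$; the tail expectations $\E[\wy \times \mu]{Y^k \indic{Y > A}}$ reduce, via the change of variable $Y' = Y/(1+\theta_m^2 X)$, to tail moments of a unit exponential, producing the second factor $e^{-A}((1+\theta_m^2 x_{\max})+A)^2$.  The main obstacle is the exact bookkeeping of constants---reproducing the prefactor $(1+\theta_m^2 x_{\max})^4 (e^{A x_{\max}}A + \theta_m^2)^2 e^{2A}$ in the bulk contribution requires carefully tracking the contributions from the $\E[\mu]{X^2}$ bound on $\log^2(p_X/p_0)$, from the Taylor-remainder exponential $e^{A x_{\max}}$, and from the conversion of $|\log(1+u)|$ to $|u|$ near $|u|=1$ that costs an extra $e^{A}$ factor; however, moderate slackness in the constants is acceptable since the result is used only asymptotically in Lemma~\ref{lm:exponent-rel}.
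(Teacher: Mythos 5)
Your plan is sound and reaches the same structural conclusion, but it takes a genuinely different decomposition from the paper's.  The paper works directly with the ratio $(\wy\circ\mu)(y)/p_x(y)$: it bounds $\bigl|(\wy\circ\mu)(y)/p_x(y)-1\bigr|$ by a mean-value/triangle-inequality argument (obtaining a factor $\bigl(x+\E[\mu]{X}\bigr)$), then applies the elementary inequality $\log^2 t \leq (1-t)^2(1+t^{-2})$ with $t = p_x(y)/(\wy\circ\mu)(y)$, leaving $\E[\mu]{\bigl(X+\E[\mu]{X}\bigr)^2}$ to be bounded by $(3+x_{\max})(2\sqrt\nu+\nu)$ at the end.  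You instead split $\log(p_X/g)=\log(p_X/p_0)-\log(g/p_0)$ and use $(a-b)^2\leq 2a^2+2b^2$.  What this buys you: the first piece is an exact exponential-moment computation, $\E\bigl[\log^2(p_X(Y)/p_0(Y))\mid X\bigr]\leq 3\theta_m^4 X^2$, so it contributes $\theta_m^4 x_{\max}\E[\mu]{X}$ with no $A$-dependence and no bulk/tail split; and the second piece $\log^2(g/p_0)$ no longer depends on $X$, so it is a deterministic pointwise bound integrated against $g$, which makes the $\{Y\leq A\}$/$\{Y>A\}$ split cleaner.  Your second piece also comes out as $(\E[\mu]{X})^2\leq(2\sqrt\nu+\nu)^2$, which is actually tighter in $\nu$ than the paper's single $(2\sqrt\nu+\nu)$ factor.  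One small simplification you are missing: the worry about the validity of $|\log(1+u)|\leq|u|/(1-|u|)$ and the claimed ``extra $e^A$'' is unnecessary.  Proposition~\ref{prop:bound-pdf} gives $g(y)/p_0(y)\geq e^{-\theta_m^2\E[\mu]{X}}$, so for $u<0$ you already have $|\log(g/p_0)|\leq\theta_m^2\E[\mu]{X}$, while for $u\geq 0$ the inequality $\log(1+u)\leq u$ applies directly; combining these with your Taylor bound on $|u|$ gives a linear-in-$\E[\mu]{X}$ estimate on all of $\{Y\leq A\}$ with no extra exponential.  Apart from this wrinkle and the (acknowledged and acceptable) slackness in constants, the argument is complete: the bulk term carries a $(2\sqrt\nu+\nu)$-type factor with $A^2 e^{cA(x_{\max}+1)}$ growth and the tail decays like $e^{-cA}$, which is exactly the structure needed by Lemma~\ref{lm:exponent-rel}.
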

\begin{proof}
%
We first define $f(x) \eqdef \int_0^{\infty} p_{x}(y) \log^2\pr{\frac{p_{x}(y)}{(\wy \circ \mu)(y)}}dy$ for which we have
\begin{align}
\label{eq:integral-split-f}
 f(x) = \int_0^{A} p_{x}(y) \log^2\pr{\frac{p_{x}(y)}{(\wy \circ \mu)(y)}}dy +  \int_A^{\infty} p_{x}(y) \log^2\pr{\frac{p_{x}(y)}{(\wy \circ \mu)(y)}}dy,
\end{align}
for any $A>0$. To upper-bound the first term, we note that
\begin{align}
\left|\frac{(\wy \circ \mu)(y)}{p_{x}(y)} - 1\right|
&= \left|\frac{\E[\mu]{\frac{1}{1+\theta_m^2 \widetilde{X}}{e^{-\frac{y}{1+\theta_m^2\widetilde{X}}}}}}{p_{x}(y)} - 1\right|\\
&= \left|{\E[\mu]{\frac{1+\theta_m^2x}{1+\theta_m^2 \widetilde{X}}{e^{\frac{y(x - \widetilde{X})}{(1+\theta_m^2\widetilde{X})(1+\theta_m^2x)}}}-1} }\right|\\
&\leq  \E[\mu]{\left|\frac{1+\theta_m^2x}{1+\theta_m^2 \widetilde{X}}{e^{\frac{y(x - \widetilde{X})}{(1+\theta_m^2\widetilde{X})(1+\theta_m^2x)}}}-1 \right|}\\
&\leq  \E[\mu]{\left|\frac{1+\theta_m^2x}{1+\theta_m^2 \widetilde{X}}\pr{e^{\frac{y(x - \widetilde{X})}{(1+\theta_m^2\widetilde{X})(1+\theta_m^2x)}}-1} \right|} +  \E[\mu]{\left|\frac{1+\theta_m^2x}{1+\theta_m^2 \widetilde{X}}-1 \right|}.\label{eq:log2arg}
\end{align}
Considering each term separately in the above expression, we have
\begin{align}
 \E[\mu]{\left|\frac{1+\theta_m^2x}{1+\theta_m^2 \widetilde{X}}\pr{e^{\frac{y(x - \widetilde{X})}{(1+\theta_m^2\widetilde{X})(1+\theta_m^2x)}}-1} \right|} 
 &\leq (1+\theta_m^2x_{\max})  \E[\mu]{\left|e^{\frac{y(x - \widetilde{X})}{(1+\theta_m^2\widetilde{X})(1+\theta_m^2x)}}-1 \right|} \\
 &\stackrel{(a)}{\leq} (1+\theta_m^2x_{\max}) e^{yx_{\max}} \E[\mu]{\left|{\frac{y(x - \widetilde{X})}{(1+\theta_m^2\widetilde{X})(1+\theta_m^2x)}} \right|}\\
&\leq (1+\theta_m^2x_{\max}) e^{yx_{\max}}y\pr{x+\E[\mu]{\widetilde{X}}},
\end{align}
where $(a)$ follows from the mean value theorem and an upper-bound on derivative. For the next term in \eqref{eq:log2arg}, we have
\begin{align}
\E[\mu]{\left|\frac{1+\theta_m^2x}{1+\theta_m^2 \widetilde{X}} -1\right|}
&= \theta_m^2\E[\mu]{\left|\frac{x - \widetilde{X}}{1+\theta_m^2 \widetilde{X}} \right|}\\
&\leq \theta_m^2\E[\mu]{|x - \widetilde{X}|}\\
&\leq \theta_m^2\pr{x + \E[\mu]{\widetilde{X}}}.
\end{align}
Combining these two inequalities, we obtain
\begin{align}
\left|\frac{(\wy \circ \mu)(y)}{p_{x}(y)} - 1\right| \leq \pr{x+\E[\mu]{\widetilde{X}}}\pr{(1+\theta_m^2x_{\max}) e^{yx_{\max}}y + \theta_m^2}.
\end{align}
Hence, using the inequalities $\log^2(x) \leq (1-x)^2(1+x^{-2})$ for $x>-1$ and $\frac{p_{x}(y)}{(\wy \circ \mu)(y)} \leq (1+\theta_m^2x_{\max})e^{y}$, we have
\begin{align}
\log ^2 \pr{\frac{p_{x}(y)}{(\wy \circ \mu)(y)}} &\leq \pr{\pr{x+\E[\mu]{\widetilde{X}}}\pr{(1+\theta_m^2x_{\max}) e^{yx_{\max}}y + \theta_m^2}}^2 \pr{1 + \pr{(1+\theta_m^2x_{\max})e^{y}}^2}.
\end{align}
This yields that 
\begin{align}
&\int_0^A p_x(y) \log ^2 \pr{\frac{p_{x}(y)}{(\wy \circ \mu)(y)}} dy\\
&\leq \sup_{y\in[0, A]}  \pr{\pr{x+\E[\mu]{\widetilde{X}}}\pr{(1+\theta_m^2x_{\max}) e^{yx_{\max}}y + \theta_m^2}}^2 \pr{1 + \pr{(1+\theta_m^2x_{\max})e^{y}}^2}\\
&= \pr{x+\E[\mu]{\widetilde{X}}}^2\pr{(1+\theta_m^2x_{\max}) e^{Ax_{\max}}A + \theta_m^2}^2 \pr{1 + \pr{(1+\theta_m^2x_{\max})e^{A}}^2}\\
&\leq 2 \pr{x+\E[\mu]{\widetilde{X}}}^2(1+\theta_m^2x_{\max})^4 \pr{e^{Ax_{\max}}A + \theta_m^2}^2e^{2A}.
\end{align}
For the second term in \eqref{eq:integral-split-f}, if $x\leq x_{\max}$, then we have
\begin{align}
&\int_A^\infty p_x(y) \log ^2 \pr{\frac{p_{x}(y)}{(\wy \circ \mu)(y)}} dy\\
&\leq 4 \int_A^\infty p_x(y) \pr{\log(1+\theta_m^2x_{\max}) + y}^2 dy\\
&= 4 \pr{-\pr{\log^2(1+\theta_m^2x_{\max}) + 2\log(1+\theta_m^2x_{\max}) \pr{y+ 1+\theta_m^2x}  + \pr{y^2+2(1+\theta_m^2x)y +2(1+\theta_m^2x)^2 }}e^{-\frac{y}{1+\theta_m^2x}}  }\bigg |_A^\infty\\
&= 4 \pr{\log^2(1+\theta_m^2x_{\max}) + 2\log(1+\theta_m^2x_{\max}) \pr{A+ 1+\theta_m^2x}  + \pr{A^2+2(1+\theta_m^2x)A +2(1+\theta_m^2x)^2 }}e^{-\frac{A}{1+\theta_m^2x}}\\
&\leq 20 \pr{(1+\theta_m^2x_{\max})  + A }^2e^{-A}.
\end{align}
Therefore, for all $x\in \calX$, it holds that
\begin{align}
f(x) \leq 2 \pr{x+\E[\mu]{\widetilde{X}}}^2(1+\theta_m^2x_{\max})^4 \pr{e^{Ax_{\max}}A + \theta_m^2}^2e^{2A} + 20 \pr{(1+\theta_m^2x_{\max})  + A }^2e^{-A},
\end{align}
which implies that
\begin{align}
&\E[w_{Y|X}\times \mu]{\log^2\pr{\frac{p_X(Y)}{(w_{Y|X}\circ\mu)(Y)}}}\\
&= \E[\mu]{f(X)}\\
&\leq  2 \E[\mu]{\pr{X+\E[\mu]{\widetilde{X}}}^2}(1+\theta_m^2x_{\max})^4 \pr{e^{Ax_{\max}}A + \theta_m^2}^2e^{2A} + 20 \pr{(1+\theta_m^2x_{\max})  + A }^2e^{-A}.
\end{align}
Finally, by  Lemma~\ref{lm:expected-x-bound}, $ \E[\mu]{\pr{X+\E[\mu]{\widetilde{X}}}^2} = \E[\mu]{X^2} + 3\pr{\E[\mu]{X}}^2 \leq (3+x_{\max}) \pr{\nu + 2\sqrt{\nu}}$ which completes the proof.
\end{proof}
\begin{proof}[Proof of Lemma~\ref{lm:exponent-rel}]
We  fix $\mu$ with $\sup(\supp{\mu})\eqdef \widetilde{x} < \infty$ and use Theorem~\ref{th:leibniz} along with induction to show that for a small neighborhood around zero and all $i\geq 0$, we have
\begin{align}
\label{eq:deriv-g}
\frac{\partial^i g}{\partial s^i}(s, \mu) = \E[\wy \times \mu]{\log^i \pr{ \frac{p_X(Y)}{(\wy\circ \mu)(Y)}} \pr{\frac{p_X(Y)}{(\wy \circ \mu)(Y)}}^s},
\end{align}
where 
\begin{align}
g(s, \mu) \eqdef  \E[\wy \times \mu]{\pr{\frac{p_X(Y)}{(\wy\circ \mu)(Y)}}^s}.
\end{align}
The statement is true for $i=0$ by definition. For $i>0$, we take $\calO =[0, \widetilde{s}]$, $\Omega = (\calX\times \calY, w_{Y|X}\times \mu)$, and $f(s, x, y) = \log^{i-1} \pr{ \frac{p_x(y)}{(\wy\circ \mu)(y)}}\pr{\frac{p_x(y)}{(\wy\circ \mu)(y)}}^s$ and check the three conditions in Theorem~\ref{th:leibniz}:
\begin{enumerate}
\item For $x \leq \widetilde{x}$, we have
\begin{align}
|f(s, x, y)| 
&= \left|\log^{i-1} \pr{ \frac{p_x(y)}{(\wy\circ \mu)(y)}}\pr{\frac{p_x(y)}{(\wy\circ \mu)(y)}}^s\right|\\
&\stackrel{(a)}{\leq} \left| \pr{\theta_m^2\pr{\E[\mu]{X}+x}  + 2y}^{i-1}\pr{\frac{p_x(y)}{(\wy\circ \mu)(y)}}^s  \right|\\
&\leq \left| \pr{2\theta_m^2\widetilde{x} + 2y}^{i-1}\pr{\frac{p_x(y)}{(\wy\circ \mu)(y)}}^s  \right|\\
&\leq \left| \pr{2\theta_m^2\widetilde{x}+ 2y} \right|^{i-1}(1+\widetilde{x})^s e^{\frac{s\widetilde{x}y}{1+\widetilde{x}}},
\end{align}
where $(a)$ follows from Proposition~\ref{prop:bound-pdf}. Because the above upper-bound does not depend on $x$, we can write
\begin{align}
\E[\wy\times \mu]{|f(s, X, Y)|}
&\leq \E[\wy \circ \mu]{\left| \pr{2\theta_m^2\widetilde{x} + 2Y} \right|^{i-1}(1+\widetilde{x})^s e^{\frac{s\widetilde{x}Y}{1+\widetilde{x}}}}.
\end{align}
Moreover, note that the moment generating function of a random variable with exponential distribution and mean $\lambda$ exists in $[0, \lambda)$, which implies that the moment generating function of distribution $\wy \circ \mu$ exists in $[0, 1/(1+\widetilde{x}))$. Hence,  there exists $\widetilde{s}$ depending on $\widetilde{x}$ such that
\begin{align}
\E[\wy \circ \mu]{\left| \pr{2\theta_m^2\widetilde{x} + 2Y} \right|^{i-1}(1+\widetilde{x})^s e^{\frac{s\widetilde{x}Y}{1+\widetilde{x}}}} < \infty.
\end{align}
\item Since for all $(x, y)\in \calX \times \calY$, it holds that $0<\frac{p_x(y)}{(\wy\circ\mu)(y)} <\infty$, $\frac{\partial f}{\partial s}$ exists, and we have
\begin{align}
\frac{\partial f}{\partial s}(s, x, y) = \log^{i} \pr{ \frac{p_x(y)}{(\wy\circ \mu)(y)}}\pr{\frac{p_x(y)}{(\wy\circ \mu)(y)}}^s.
\end{align}
\item Similar to the first part, we can upper-bound the partial derivative as
\begin{align}
\left|\frac{\partial f}{\partial s}(s, x, y)\right|
&= \left|  \log^{i} \pr{ \frac{p_x(y)}{(\wy\circ \mu)(y)}}\pr{\frac{p_x(y)}{(\wy\circ \mu)(y)}}^s\right|\\
&\leq  \left| \pr{2\theta_m^2\widetilde{x} + 2y} \right|^{i}(1+\widetilde{x})^s e^{\frac{s\widetilde{x}y}{1+\widetilde{x}}}
\end{align}
The above bound is increasing in $s$. Thus, by choosing $\widetilde{s}$ small enough such that the expectation is finite for $s = \widetilde{s}$, we can choose
\begin{align}
\theta(x, y) \eqdef  \left| \pr{2\theta_m^2\widetilde{x} + 2y} \right|^{i}(1+\widetilde{x})^{\widetilde{s}} e^{\frac{\widetilde{s}\widetilde{x}y}{1+\widetilde{x}}}.
\end{align}
Then, $\E[\wy\times \mu]{\theta(X, Y)} < \infty$ and for all $s\leq \widetilde{s}$, we have $\left|\frac{\partial f}{\partial s}(s, x, y)\right| \leq \theta(x, y)$.
\end{enumerate}
We can now use Theorem~\ref{th:leibniz} and obtain
\begin{align}
&\frac{\partial }{\partial s}\E[\wy \times \mu]{\log^{i-1} \pr{ \frac{p_X(Y)}{(\wy\circ \mu)(Y)}}\pr{\frac{p_X(Y)}{(\wy\circ \mu)(Y)}}^s} \\
&= \frac{\partial }{\partial s}\E[\wy \times \mu]{f(s, X, Y)}\\
&=\E[\wy \times \mu]{ \frac{\partial }{\partial s}f(s, X, Y)}\\
&= \E[\wy \times \mu]{\log^{i} \pr{ \frac{p_X(Y)}{(\wy\circ \mu)(Y)}}\pr{\frac{p_X(Y)}{(\wy\circ \mu)(Y)}}^s}.
\end{align}
Therefore, the induction hypothesis implies \eqref{eq:deriv-g}. By the chain rule, $\phi_{\text{rel}}(s, \mu)$ is also a smooth function on an interval $[0, \widetilde{s}]$ for all $\mu$ with $\sup(\supp{\mu}) \leq \widetilde{x}$. Hence, we can use Taylor's theorem to obtain
\begin{align}
\phi(s, \mu) = \phi_{\text{rel}}(0, \mu) + \frac{\partial \phi_{\text{rel}}}{\partial s}(0, \mu)s +  \frac{\partial^2 \phi_{\text{rel}}}{\partial s^2}(0, \mu)\frac{s^2}{2} +   \frac{\partial^3 \phi_{\text{rel}}}{\partial s^3}(\eta, \mu)\frac{s^3}{6},
\end{align}
for some $\eta \in [0, \widetilde{s}]$. The derivatives of $\phi_{\text{rel}}$ would be
\begin{align}
\phi_{\text{rel}}(0, \mu)  &= - \log(g(0, \mu)) = 0\\
\frac{\partial \phi_{\text{rel}}}{\partial s}(0, \mu) &= -\frac{\frac{\partial g}{\partial s}(0, \mu)}{g(0 ,\mu)} = -\E[\wy \times \mu]{\log \frac{p_X(Y)}{(\wy \circ \mu)(Y)}} = -I(\mu, \wy)\\
\frac{\partial^2 \phi_{\text{rel}}}{\partial s^2}(0, \mu) &= -\frac{g(0, \mu)\frac{\partial^2 g}{\partial s^2}(0, \mu) -\pr{\frac{\partial g}{\partial s}(0, \mu)}^2}{g(0 ,\mu)} = -\E[\wy \times \mu]{\log^2 \frac{p_X(Y)}{(\wy \circ \mu)(Y)}} + I(\mu, \wy)^2.
\end{align}
Moreover, Lemma~\ref{lm:density-2m} yields that
\begin{align}
\frac{\partial^2 \phi_{\text{rel}}}{\partial s^2}(0, \mu)  
&\geq -2(3+\widetilde{x})(2\sqrt{\nu} + \nu)(1+\theta_m^2\widetilde{x})^4 \pr{e^{A\widetilde{x}}A + \theta_m^2}^2e^{2A} + 20 \pr{(1+\theta_m^2\widetilde{x})  + A }^2e^{-A}\\
&\geq -B_1\pr{(2\sqrt{\nu} + \nu)e^{2A\widetilde{x} + 2A}A^2 + A^2 e^{-A}},
\end{align}
for some $B_1$ depending on $\theta_m^2$ and $\widetilde{x}$. With similar arguments as we had to check the third condition of Theorem~\ref{th:leibniz}, we can prove that there exists $B_2$ depending on $\widetilde{x}$, such that for all $\eta\in[0, \widetilde{s}]$, we have 
\begin{align}
\left|\frac{\partial^3 \phi_{\text{rel}}}{\partial s^3}(\eta, \mu) \right| \leq B_2.
\end{align}
Choosing $B = \max(B_1/2, B_2/6)$ completes the proof.
\end{proof}
\section{Proof of Lemma~\ref{lm:chi-2-bound} and  \ref{lm:d-upper}}
\label{sec:chi-2-bound}
\modfirst{
We first introduce some notation and facts, which will be useful in both proofs. Let $f(z) \eqdef (\wz\circ\mu)(z)$ and $\phi(z) \eqdef f(z) /q_0(z) -1$. Defining $P_X$ as the associated \ac{PMF} of $\mu$, we can write $\phi(z) = \sum_x P_X(x) \frac{e^{\frac{xz}{1+x}}}{1+x} - 1$, which is increasing and
\begin{align}
\phi(z) \geq \phi(0) = \E[\mu]{\frac{1}{1+X}} - 1 \geq -\E[\mu]{X} \geq - 2\sqrt{\nu} - \nu \geq -0.5.
\end{align}
Furthermore, there exists a unique $M_0$ such that $\phi(z) \leq 0$ if and only if $z\leq M_0$.

\begin{proof}[Proof of Lemma~\ref{lm:d-upper}]
Using the bound $\log(1+x) \leq x  -x^2/2+x^3/3$ for $x>-1$, we obtain
\begin{align}
\D{\wz \circ \mu}{q_0}
&= \int_0^{\infty} q_0(1+\phi)\log(1+\phi)\\
&\leq \int_0^M q_0 \phi + \frac{1}{2} \int_0^M q_0 \phi^2 - \frac{1}{6} \int_0^M q_0\phi^3 + \frac{1}{3} \int_0^M q_0 \phi^4 + \int_M^\infty f \log( f/q_0).
\end{align}
We consider each term separately.
\begin{enumerate}
\item We have

\begin{align}
\int_0^M q_0 \phi
&= \int_0^M f - \int_0^Mq_0\\
&= e^{-M} - \sum_x P_X(x) e^{-\frac{M}{1+x}}\\
&\leq 0.
\end{align}
\item We have

\begin{align}
 \frac{1}{2} \int_0^M q_0 \phi^2 
 &\leq  \frac{1}{2} \int_0^\infty q_0 \phi^2\\
 &= \frac{1}{2}\chi_2(\wz\circ \mu \|q_0).
\end{align}
\item We have
\begin{align}
 -\frac{1}{6} \int_0^M q_0 \phi^3
 &\leq  -\frac{1}{6} \int_0^{M_0} q_0 \phi^3\\
  &\leq  \frac{1}{6}M_0(\E[\mu]{X})^3 \\
  &\stackrel{(a)}{\leq} \frac{1}{3}(\E[\mu]{X})^3,
\end{align}
where $(a)$ follows since $M_0 \leq 2$ by the argument in the proof of Lemma~\ref{lm:chi-2-bound}.
\item We have
\begin{align}
\int_0^M q_0\phi^4
&= \int_0^{M_0} q_0\phi^4 + \int_{M_0}^M q_0\phi^4\\
&\leq M_0 (\E[\mu]{X})^4 + \int_{M_0}^M e^{-z} \pr{\E[\mu]{X} e^{\frac{az}{1+a}}}^4dz\\
&= (\E[\mu]{X})^4 \pr{2 + \int_0^M e^{z\pr{-1 + \frac{4a}{1+a}}}dz}.
\end{align}
\item We have
\begin{align}
\int_M^\infty f\log(f/q_0)
&\leq \frac{a}{1+a} \int_M^\infty f(z)z dz\\
&\leq \frac{a}{1+a}\pr{\int_M^\infty e^{-\frac{z}{1+\epsilon}}z dz + \frac{1}{\E[\mu]{X}+\epsilon} \int_M^\infty e^{-\frac{z}{1+a}} dz}.
\end{align}
\end{enumerate}
\end{proof}
\begin{proof}[Lemma~\ref{lm:chi-2-bound}]
We use the notations introduced in the beginning of Appendix~\ref{sec:chi-2-bound}. 
Since  we have $\log(1+x) \geq x - x^2/2 +\indic{x \leq 0}2x^3/3$ for  $x \geq -0.5$, we have for all $z\in \calZ$,
\begin{align}
f(z) \log(\phi(z) + 1)  \geq f(z)\pr{\phi(z) -\phi^2(z)/2 \indic{\phi(z) \leq 0}2\phi(z)^3/3}.
\end{align}
We therefore obtain
\begin{align}
\D{f}{q_0} 
&= \int_0^\infty f \log(\phi+1)\\
&\geq  \int_0^M f\pr{\phi -\phi^2/2} + \int_0^{M_0} 2f \phi^3/3 + \int_M^\infty f\log\pr{f/ q_0}.
\end{align}
We consider each term separately in the following.
\begin{enumerate}
\item We have 
\begin{align}
 \int_0^M f\pr{\phi - \phi^2/2}
 &=  \int_0^M q_0(\phi+1)\pr{\phi - \phi^2/2} \\
 &= \int_0^M q_0 \phi^2/2+ \int_0^M q_0 \phi - \int_0^Mq_0\phi^3/2\\
 &=  \int_0^\infty q_0 \phi^2/2+ \int_0^M q_0 \phi - \int_0^Mq_0\phi^3/2 - \int_M^\infty q_0 \phi^2/2.
\end{align}
We again separately lower-bound each term in the above expression.
\begin{enumerate}
\item We have by definition, 

\begin{align}
 \int_0^\infty q_0 \phi^2/2 = \frac{1}{2}\chi_2(f\|q_0).
\end{align}
\item We have
\begin{align}
  \int_0^M q_0 \phi  
  &= \int_0^M (f-q_0)\\
  &\geq -\int_M^\infty f.
\end{align}
\item To lower-bound $- \int_0^Mq_0\phi^3/2$, we first upper-bound $\phi$ as follows.
\begin{align}
\phi(z) 
&= \sum_x P_X(x) \frac{ e^{\frac{xz}{1+x}}}{1+x} -1\\
&\leq \sum_x P_X(x) { e^{\frac{xz}{1+a}}} -1\\
&\stackrel{(a)}{\leq} \sum_x P_X(x) \pr{ 1 + \pr{e^{\frac{az}{1+a}} - 1} x} -1\\
&=  \pr{e^{\frac{az}{1+a}} - 1} \E[\mu]{X}\\
&= e^{\frac{az}{1+a}} \E[\mu]{X},
\end{align}
where $(a)$ follows since $e^{\frac{xz}{1+a}} \leq 1 + \pr{e^{\frac{az}{1+a}} - 1} x$ for $x\in [0, a]$. Since $q_0 > 0$ and $x\mapsto x^3$ is increasing, we have
\begin{align}
\int_0^{M} q_0\phi^3 
&\leq \int_0^M e^{-z} \pr{e^{\frac{az}{1+a}} \E[\mu]{X}}^3 dz\\
&= (\E[\mu]{X})^3 \int_0^M e^{z\pr{-1 + \frac{3a}{1+a}}}dz.
\end{align}
\item Since $\phi(z) \geq 0$ for $z\geq M \geq M_0$ and $x\mapsto x^2$ is increasing for $x\geq 0$, we have
\begin{align}
\int_M^\infty q_0\phi^2
& \leq \int_M^\infty e^{-z} \pr{e^{\frac{az}{1+a}} \E[\mu]{X}}^2 dz\\
& = (\E[\mu]{X})^2  \int_0^M e^{z\pr{-1+\frac{2a}{1+a}}}dz.
\end{align}
\end{enumerate}
As a conclusion, we obtain that
\begin{align}
 \int_0^M f\pr{\phi - \phi^2/2} 
&\geq \frac{1}{2}\chi_2(f\|q_0) - \frac{1}{2} (\E[\mu]{X})^2   \int_0^M e^{z\pr{-1+\frac{2a}{1+a}}}dz - \frac{1}{2} (\E[\mu]{X})^3 \int_0^M e^{z\pr{-1 + \frac{3a}{1+a}}}dz - \int_M^\infty f.
\end{align}

\item Using $|f(z)| \leq 1$ for all $z\in \calZ$ and $0\geq \phi(z) \geq -\E[\mu]{X}$ for all $0\leq z\leq M_0$, we have

\begin{align}
\int_0^{M_0} f \phi^3 \geq - M_0 \pr{\E[\mu]{X}}^3.
\end{align}
We now show that $M_0 \leq 2$, for which it is enough to show that $ f(2) \geq q_0(2)$. Note that
\begin{align}
\log f(2) 
&\stackrel{(a)}{\geq} \sum_x P_X(x) \log q_x(2) \\
&= - \E[\mu]{\log\pr{1+X}} - 2 \E[\mu]{\frac{1}{1+X}}\\
&\geq \E[\mu]{X} - 2 + 2\E[\mu]{\frac{X}{1+X}}\\
&\geq \E[\mu]{X} - 2 + 2\E[\mu]{\frac{X}{1+a}}\\
&\geq -2 = \log q_0(2).
\end{align}
\item By our assumption that $f(M)/q_0(M) \geq e$, we have
\begin{align}
\int_M^\infty f\log f/q_0 \geq \int_M^\infty f.
\end{align}

\end{enumerate}
Combining the bounds in the above three parts, we obtain the desired result.
\end{proof}}
\section{Optimization Problem in \eqref{eq:optimation-a-nu}}
\label{sec:optim-probl-eqref}
\subsection{Prokhorov's Theorem}
\begin{theorem}
\label{th:prokhorov}
Let $\{\mu_n\}$ be a sequence of tight probability measures on $\mathbb{R}$, i.e., for all $\epsilon > 0$, there exists a compact set $K\subset \mathbb{R}$ such that for all $n\geq 1$, $\mu_n(\mathbb{R}\setminus K) \leq \epsilon$. Then, there exists a sub-sequence $\{\mu_{n_k}\}_{k\geq 1}$ and another probability measure $\mu$ on $\mathbb{R}$ such that $\{\mu_{n_k}\}_{k\geq 1}$ converges weakly to $\mu$.
\end{theorem}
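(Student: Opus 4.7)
The plan is to reduce the weak convergence statement to a compactness result about cumulative distribution functions, namely the Helly selection theorem, and then use tightness specifically to ensure that the subsequential limit is a bona fide probability measure rather than a sub-probability measure with mass escaping to infinity.

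First I would associate to each $\mu_n$ its cumulative distribution function $F_n(x) \eqdef \mu_n(]-\infty, x])$, which is nondecreasing, right-continuous, and valued in $[0,1]$. The next step is a diagonal extraction: enumerate the rationals $\mathbb{Q} = \{q_1, q_2, \ldots\}$, and by the Bolzano--Weierstrass theorem applied repeatedly, extract a subsequence $\{n_k\}$ such that $F_{n_k}(q_j)$ converges to some limit $G(q_j) \in [0,1]$ for every $j$. The function $G: \mathbb{Q} \to [0,1]$ inherits monotonicity from the $F_n$. Define $F: \mathbb{R} \to [0,1]$ by the right-continuous envelope
\begin{align}
F(x) \eqdef \inf\{G(q) : q \in \mathbb{Q},\ q > x\}.
\end{align}
A standard argument shows that $F$ is nondecreasing and right-continuous, and that $F_{n_k}(x) \to F(x)$ at every continuity point $x$ of $F$.

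Next I would use tightness to promote $F$ to a genuine distribution function. Given $\epsilon > 0$, tightness yields a compact set $K \subset \mathbb{R}$, which we may enlarge to an interval $[-M, M]$, such that $\mu_n(]-\infty, -M]) \leq \epsilon$ and $\mu_n(]M, \infty[) \leq \epsilon$ uniformly in $n$. Choosing continuity points $a < -M$ and $b > M$ of $F$, passing to the limit along the subsequence gives $F(a) \leq \epsilon$ and $1 - F(b) \leq \epsilon$. Since $\epsilon$ is arbitrary, $\lim_{x \to -\infty} F(x) = 0$ and $\lim_{x \to \infty} F(x) = 1$, so $F$ is the distribution function of a unique probability measure $\mu$ on $\mathbb{R}$.

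Finally, weak convergence $\mu_{n_k} \to \mu$ follows from the pointwise convergence $F_{n_k}(x) \to F(x)$ at every continuity point of $F$, via the standard equivalence between this form of convergence and weak convergence of probability measures on $\mathbb{R}$ (the Portmanteau theorem): for any bounded continuous $f$, approximate $f$ by a step function based on a finite partition of $\mathbb{R}$ using only continuity points of $F$, use uniform integrability granted by tightness to control the tails, and pass to the limit. The main obstacle is the tightness step: without it, the subsequential limit obtained by Helly selection could be a sub-probability measure (mass could leak to $\pm \infty$), so the uniform control on the tails provided by the hypothesis is exactly what guarantees that $F(\pm\infty)$ take the correct values and that $\mu$ is a probability measure.
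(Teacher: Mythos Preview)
Your proof is correct and follows the standard Helly selection argument for Prokhorov's theorem on $\mathbb{R}$. Note, however, that the paper does not actually prove this statement: it is listed in the appendix as a background result (Prokhorov's theorem) and invoked without proof, so there is no ``paper's proof'' to compare against. Your write-up supplies precisely the classical argument one would expect to see if the result were proved in full.
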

\subsection{Convex Optimization for General Vector Spaces}
\begin{theorem}\modfirst{(\cite[Theorem 1, Page 217]{luenberger1997optimization}).}
Let $\calV$ be a vector space, $\Omega\subset \calV$ a convex set,  $\calU$ be a normed vector space, and $\calP\subset\calU$ be a positive cone, i.e., for all $u_1, u_2\in\calU$ and all $\alpha, \beta \geq 0$, we have $\alpha u_1+\beta u_2 \in \calP$. Suppose the interior of $\calP$ is non-empty, and $\phi:\Omega \to \mathbb{R}$ and $G:\Omega \to \calU$ are convex functions such that there exists $\omega_1 \in \Omega$ for which $G(\omega_1) \prec_{\calP} 0$ and $A \eqdef \inf_{\omega \in \Omega: G(\omega) \preceq_{\calP} 0} \modfirst{\phi}(\omega) > -\infty$.  Then, there exists $u_0^*\succeq_{\calP^*} 0$ in  $\calU^*$ such that $A = \inf_{\omega \in \Omega} \phi(\omega) + \langle G(\omega), u_0^* \rangle$. Moreover, if $\omega_0$ is a solution to  the first optimization problem, the infimum of the second optimization problem is also achieved by $\omega_0$ and $\langle G(\omega_0), u_0^*\rangle = 0$.
\label{th:lagrange1}
\end{theorem}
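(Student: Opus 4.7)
The plan is to prove this by a geometric Hahn--Banach separation argument in the product space $\mathbb{R}\times\calU$. I would introduce the two sets
\begin{align}
\calA &\eqdef \{(r,u)\in\mathbb{R}\times\calU : \exists\,\omega\in\Omega,\ \phi(\omega)-A\leq r,\ G(\omega)\preceq_{\calP} u\},\\
\calB &\eqdef \{(r,u)\in\mathbb{R}\times\calU : r<0,\ -u\in\mathrm{int}(\calP)\},
\end{align}
and verify that both are convex, with $\calA$ convex because $\Omega,\phi,G$ are convex and $\calP$ is a convex cone, and $\calB$ nonempty and open since $\mathrm{int}(\calP)\neq\emptyset$. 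The two sets are disjoint: a common point would yield an $\omega\in\Omega$ with $\phi(\omega)<A$ and $G(\omega)\prec_{\calP}0$, contradicting the definition of $A$ as the infimum over feasible points.

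The geometric Hahn--Banach theorem applied to $\calA$ and the open convex set $\calB$ then produces a nonzero continuous linear functional on $\mathbb{R}\times\calU$. Identifying continuous linear functionals on this product with pairs $(\alpha,u_0^*)\in\mathbb{R}\times\calU^*$, I obtain $\alpha\in\mathbb{R}$, $u_0^*\in\calU^*$, and a real constant $c$ with $\alpha r+\langle u,u_0^*\rangle\geq c$ on $\calA$ and $\alpha r+\langle u,u_0^*\rangle\leq c$ on $\calB$. Taking $(r,u)\to(0,0)$ inside $\calB$ gives $c\geq 0$. Because $(r,u)\in\calA$ implies $(r+t,u+p)\in\calA$ for all $t\geq 0$ and $p\in\calP$, letting $t\to\infty$ forces $\alpha\geq 0$, and letting $p$ range through $\calP$ forces $\langle p,u_0^*\rangle\geq 0$ for every $p\in\calP$, i.e.\ $u_0^*\succeq_{\calP^*}0$.

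Next I would upgrade $\alpha\geq 0$ to $\alpha>0$ using the Slater-type condition $G(\omega_1)\prec_{\calP}0$. Assume $\alpha=0$ for contradiction; then $\langle u,u_0^*\rangle\geq c\geq 0$ for all $u\succeq_{\calP}G(\omega_1)$ while $\langle u,u_0^*\rangle\leq c$ for all $u$ with $-u\in\mathrm{int}(\calP)$. Because $-G(\omega_1)\in\mathrm{int}(\calP)$, a small neighborhood of $G(\omega_1)$ lies in both $\{u:u\succeq_{\calP}G(\omega_1)\}$ and $\{u:-u\in\mathrm{int}(\calP)\}$, so $\langle\cdot,u_0^*\rangle$ equals $c$ on an open set, forcing $u_0^*=0$; combined with $\alpha=0$ this contradicts nontriviality. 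Rescale so that $\alpha=1$. Then for every $\omega\in\Omega$ the pair $(\phi(\omega)-A,G(\omega))\in\calA$, so $\phi(\omega)+\langle G(\omega),u_0^*\rangle\geq A+c\geq A$; for feasible $\omega$ the positivity $u_0^*\succeq_{\calP^*}0$ and $-G(\omega)\in\calP$ give $\langle G(\omega),u_0^*\rangle\leq 0$, and the infimum of $\phi+\langle G,u_0^*\rangle$ over feasible $\omega$ is therefore at most $A$. Both chains of inequalities collapse to the equality claimed. Finally, if $\omega_0$ is primal-optimal, the inequalities $\phi(\omega_0)+\langle G(\omega_0),u_0^*\rangle\geq A=\phi(\omega_0)$ and $\langle G(\omega_0),u_0^*\rangle\leq 0$ imply $\langle G(\omega_0),u_0^*\rangle=0$, whence $\omega_0$ also attains the Lagrangian infimum.

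The principal obstacle is the infinite-dimensional topological groundwork: verifying that $\calB$ has nonempty interior in the product topology (which requires $\mathrm{int}(\calP)\neq\emptyset$ and handles the first coordinate by an open half-line), identifying continuous linear functionals on $\mathbb{R}\times\calU$ with $(\alpha,u_0^*)\in\mathbb{R}\times\calU^*$, and carefully extracting $\alpha>0$ from the Slater condition so the separating functional can be normalized into an honest Lagrange multiplier rather than a purely dual-feasible direction.
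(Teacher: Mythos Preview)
The paper does not provide its own proof of this statement; it is quoted verbatim as \cite[Theorem~1, Page~217]{luenberger1997optimization} and used as a black box. Your proposal reproduces precisely the standard geometric Hahn--Banach separation argument that appears in Luenberger's text, and the outline is correct, so there is nothing to compare against here.
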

 We next recall a result from \cite{Smith1969} to find an expression for the KKT conditions of an abstract convex optimization. To this end, we introduce the notation of weak differentiablity for a function $f:\Omega \to \mathbb{R}$ where $\Omega$ is convex. We say that $f'_{\omega_0}:\Omega \to \mathbb{R}$ is the weak derivative of $f$ at $\omega_0$, if
\begin{align}
\label{eq:weak-dev}
f'_{\omega_0}(\omega) = \lim_{\theta \to 0^+} \frac{f(\theta \omega + (1-\theta) \omega_0)}{\theta}.
\end{align} 
\begin{theorem}[\cite{Smith1969}]
\label{th:kkt-general}
Let $\calV$ be a linear space, $\Omega\subset \calV$ be convex, and $f:\Omega \to \mathbb{R}$ be convex and have weak derivative for all $\omega\in\Omega$. $f(\omega^*) = \inf_{\omega\in\Omega} f(\omega)$ if and only if for all $\omega\in\Omega$, we have $f'_{\omega^*}(\omega) \geq 0$.
\end{theorem}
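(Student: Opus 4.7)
The plan is to prove both directions via the convexity of $f$ coupled with the standard trick of taking one-sided limits of difference quotients along line segments inside $\Omega$. The key observation is that by convexity, for every $\omega \in \Omega$ and every $\theta \in (0,1]$,
\begin{align}
f(\theta \omega + (1-\theta)\omega^*) \leq \theta f(\omega) + (1-\theta) f(\omega^*),
\end{align}
so that the difference quotient $Q(\theta) \eqdef \frac{f(\theta \omega + (1-\theta)\omega^*) - f(\omega^*)}{\theta}$ satisfies $Q(\theta) \leq f(\omega) - f(\omega^*)$ for every $\theta \in (0,1]$.

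For the direction ($\Leftarrow$), suppose $f'_{\omega^*}(\omega) \geq 0$ for all $\omega \in \Omega$. Fixing an arbitrary $\omega \in \Omega$, I would take $\theta \to 0^+$ in the inequality $Q(\theta) \leq f(\omega) - f(\omega^*)$: the left-hand side converges to $f'_{\omega^*}(\omega)$ by the assumed existence of the weak derivative, so
\begin{align}
0 \leq f'_{\omega^*}(\omega) \leq f(\omega) - f(\omega^*),
\end{align}
which gives $f(\omega^*) \leq f(\omega)$ and hence $f(\omega^*) = \inf_{\omega\in\Omega} f(\omega)$.

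For the direction ($\Rightarrow$), suppose $f(\omega^*) = \inf_{\omega \in \Omega} f(\omega)$. Since $\Omega$ is convex, for every $\omega \in \Omega$ and every $\theta \in (0,1]$ the point $\theta \omega + (1-\theta)\omega^*$ lies in $\Omega$, so $f(\theta \omega + (1-\theta)\omega^*) \geq f(\omega^*)$ and therefore $Q(\theta) \geq 0$. Passing to the limit $\theta \to 0^+$ and again invoking the hypothesis that the weak derivative exists yields $f'_{\omega^*}(\omega) \geq 0$, as desired. The argument is essentially routine; the only subtle point is that the hypothesis guaranteeing existence of the weak derivative lets us freely interchange ``$\lim_{\theta \to 0^+}$'' with the inequalities, so no obstacle of a technical nature arises. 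I would conclude by remarking that the monotonicity of $Q(\theta)$ in $\theta$ (a standard consequence of convexity) implies the limit always exists in $[-\infty, f(\omega)-f(\omega^*)]$, so the hypothesis on the weak derivative can equivalently be read as the requirement that this limit be finite.
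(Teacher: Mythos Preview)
Your proof is correct and is exactly the standard argument for this result. Note, however, that the paper does not actually prove Theorem~\ref{th:kkt-general}: it is merely recalled from~\cite{Smith1969} without proof, so there is no ``paper's own proof'' to compare against. Your argument is the classical one and would be the expected proof in that reference as well.
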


\subsection{Technical Results}
\label{sec:technical-results}

\begin{lemma}
\label{lm:a-prop}
$A(\nu)$ defined in \eqref{eq:optimation-a-nu} satisfies the following properties.
\begin{enumerate}
\item It is concave and non-decreasing on $[0, \infty)$.
\item It is continuous on $[0, \infty)$.
\item The one-sided derivatives,
\begin{align}
A'(\nu^+) \eqdef \lim_{h\to 0^+} \frac{A(\nu+h) - A(\nu)}{h} \text{ and } A'(\nu^-) \eqdef \lim_{h\to 0^+} \frac{A(\nu) - A(\nu - h)}{h},
\end{align}
exist for all $\nu > 0$, and for all $0 < \nu_1 < \nu_2$, we have $A'(\nu_1^-) \geq A'(\nu_1^+) \geq A'(\nu_2^-) \geq A'(\nu_2^+)$.
\item There exist \modfirst{ constants $\nu_0 > 0$ and $C > 0$} such that for all $0<\nu\leq \nu_0$, we have $A(\nu) \geq C\sqrt{\nu}$.
\item We have $\lim_{\nu\to 0^+} A'(\nu^+) = \lim_{\nu\to 0^+} A'(\nu^-) = \infty$.
\end{enumerate}
\end{lemma}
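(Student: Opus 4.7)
Parts 1--3 follow from standard convex analysis. Monotonicity of $A$ is immediate: enlarging $\nu$ only enlarges the feasible set in~\eqref{eq:optimation-a-nu}. For concavity, if $\mu_1,\mu_2$ are (near-)optimal for $\nu_1,\nu_2$, then $\mu_\lambda\eqdef\lambda\mu_1+(1-\lambda)\mu_2$ is feasible for $\lambda\nu_1+(1-\lambda)\nu_2$ (linearity of $\mu\mapsto\wz\circ\mu$ together with convexity of $\D{\cdot}{q_0}$ in its first argument), and the standard concavity of mutual information in the input distribution gives $I(\mu_\lambda,\wy)\geq\lambda I(\mu_1,\wy)+(1-\lambda)I(\mu_2,\wy)$. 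Concavity then implies continuity on $]0,\infty[$ and the existence and monotonicity of one-sided derivatives (Part 3). It remains to establish continuity at $\nu=0$: evaluating $\wz\circ\mu=q_0$ at $z=0$ gives $\E[\mu]{1/(1+X)}=1$, which forces $\mu=\delta_0$ and hence $A(0)=0$; for $\nu>0$, Lemma~\ref{lm:expected-x-bound} yields $\E[\mu]{X}\leq 2\sqrt{\nu}+\nu$ for every feasible $\mu$, and the simple bound $I(\mu,\wy)\leq\E[\mu]{\theta_m^2 X-\log(1+\theta_m^2 X)}\leq\theta_m^2\E[\mu]{X}$ then gives $A(\nu)\to 0$.

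For Part 4, I would construct a tailored two-mass-point distribution. Fix any $a\in{]0,1[}$ with $\theta_m^2 a<1$ and set $\mu_\beta\eqdef\beta\delta_a+(1-\beta)\delta_0$. Applied to $\wz$, the $a<1$ case of Lemma~\ref{lm:d-two-points} yields
\begin{align}
\D{\wz\circ\mu_\beta}{q_0}=\frac{a^2}{2(1-a^2)}\beta^2+o(\beta^2),
\end{align}
and the same expansion applied to $\wy$ (noting that $\wy$ has the same exponential form as $\wz$ after rescaling the input by $\theta_m^2$, so the effective amplitude is $\theta_m^2 a<1$) gives $\D{\wy\circ\mu_\beta}{p_0}=O(\beta^2)$. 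Lemma~\ref{lm:chi2-value} then produces
\begin{align}
I(\mu_\beta,\wy)=\beta\bigl(\theta_m^2 a-\log(1+\theta_m^2 a)\bigr)-\D{\wy\circ\mu_\beta}{p_0}=\Theta(\beta).
\end{align}
Tuning $\beta$ to the (unique) value of order $\sqrt{\nu}$ for which $\D{\wz\circ\mu_\beta}{q_0}\leq\nu$ (solvable for all $\nu$ below a threshold, thanks to the $o(\beta^2)$ remainder) yields $A(\nu)\geq C\sqrt{\nu}$ for an explicit $C>0$ depending only on $a$ and $\theta_m$.

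Part 5 then follows immediately from Part 4 and concavity. For any $0<\nu_1<\nu_2$, concavity of $A$ gives
\begin{align}
A'(\nu_1^+)\geq\frac{A(\nu_2)-A(\nu_1)}{\nu_2-\nu_1}.
\end{align}
Fixing $\nu_2$ and letting $\nu_1\to 0^+$ (using continuity at $0$) produces $\liminf_{\nu_1\to 0^+}A'(\nu_1^+)\geq A(\nu_2)/\nu_2\geq C/\sqrt{\nu_2}$; letting $\nu_2\to 0^+$ forces $\lim_{\nu\to 0^+}A'(\nu^+)=\infty$, and the analogous statement for $A'(\nu^-)$ follows from $A'(\nu^-)\geq A'(\nu^+)$. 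I expect the main technical obstacle to be the quantitative handling of the $o(\beta^2)$ remainder in Part 4: one must check that this remainder is small enough, for $\beta$ of order $\sqrt{\nu}$, both to preserve the covertness constraint $\leq\nu$ and to ensure that the $O(\beta^2)$ contribution of $\D{\wy\circ\mu_\beta}{p_0}$ does not spoil the $\Theta(\beta)$ leading behavior of the mutual information.
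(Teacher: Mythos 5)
Your proof is correct and follows essentially the same route as the paper for Parts 1--4: monotonicity and concavity of $A$ from the feasibility-set and input-distribution structure, continuity from concavity plus an explicit $A(\nu)\to 0$ bound via Lemma~\ref{lm:expected-x-bound}, existence and monotonicity of one-sided derivatives from concavity, and the $C\sqrt{\nu}$ lower bound from the same two-mass-point construction with Lemma~\ref{lm:d-two-points} applied to both channels. Your Part 5 is a cleaner rearrangement of the paper's argument: the paper introduces an auxiliary affine comparison function $B(\nu)=A(\nu)-\tfrac{A(\widetilde\nu)}{\widetilde\nu}\nu$ and analyzes its interior extrema, whereas you apply the secant inequality $A'(\nu_1^+)\geq\bigl(A(\nu_2)-A(\nu_1)\bigr)/(\nu_2-\nu_1)$ directly, send $\nu_1\to 0^+$ using continuity at zero to get $\liminf A'(\nu_1^+)\geq A(\nu_2)/\nu_2\geq C/\sqrt{\nu_2}$, then send $\nu_2\to 0^+$; both arguments rest on exactly the two ingredients of concavity and the Part-4 lower bound, but yours is more direct and avoids the case analysis on extrema of $B$. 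One small point worth nailing down in a polished write-up: your bound $I(\mu,\wy)\leq\E[\mu]{\theta_m^2 X-\log(1+\theta_m^2 X)}$ implicitly uses the decomposition $I(\mu,\wy)=\E[\mu]{\D{p_X}{p_0}}-\D{\wy\circ\mu}{p_0}$, which needs the justification that $\E[\mu]{\D{p_X}{p_0}}<\infty$ (it holds since $\D{p_X}{p_0}\leq\theta_m^2 X$ and $\E[\mu]{X}$ is finite for $\mu\in\Omega$); the paper instead deduces the same bound via the maximum-entropy property of the exponential distribution given a mean constraint, which sidesteps that issue but lands in the same place.
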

\begin{proof}
\begin{enumerate}

\item By definition of $A(\nu)$, it follows that $A(\nu)$ is non-decreasing. To check concavity, we take any $\nu_1, \nu_2 > 0$, $\mu_1, \mu_2\in\Omega$ with $\D{\wz\circ\mu_1}{q_0} \leq \nu_1$ and $\D{\wz\circ\mu_2}{q_0} \leq \nu_2$, and $\lambda\in[0, 1]$. By convexity of the relative entropy, we have
\begin{align}
\D{\wz\circ\pr{\lambda \mu_1 + (1-\lambda) \mu_2}}{q_0} \leq \lambda\nu_1 + (1-\lambda)\nu_2.
\end{align}
Therefore, by concavity of the mutual information,
\begin{align}
A(\lambda\nu_1 + (1-\lambda)\nu_2)
& \geq I(\lambda \mu^*_{\nu_1} + (1-\lambda) \mu^*_{\nu_2}, \wy) \\
&\geq \lambda I(\mu_1, \wy) + (1-\lambda) I( \mu_2, \wy).
\end{align}
Hence, by definition of supremum, we have
\begin{align}
\displaybreak[0]A(\lambda\nu_1 + (1-\lambda)\nu_2) 
\displaybreak[0]&\geq \sup_{\mu_1, \mu_2 \in \Omega:\D{\wz\circ\mu_1}{q_0} \leq \nu_1, \D{\wz\circ\mu_2}{q_0} \leq \nu_2 } \lambda I(\mu_1, \wy) + (1-\lambda) I( \mu_2, \wy)\\
\displaybreak[0]&= \lambda \sup_{\mu_1\in \Omega:\D{\wz\circ\mu_1}{q_0} \leq \nu_1} I(\mu_1, \wy) + (1-\lambda)\sup_{\mu_2 \in \Omega: \D{\wz\circ\mu_2}{q_0} \leq \nu_2 } I( \mu_2, \wy)\\
\displaybreak[0]&= \lambda A(\nu_1) + (1-\lambda) A(\nu_2).
\end{align}
\item Since $A(\nu)$ is concave on $[0, \infty)$, it is continuous on $(0, \infty)$ \cite[Page 153, Problem 4]{stein2009real}. To check the continuity at $0$, we consider $\nu > 0$ and $\mu\in \Omega$ with $\D{\wz\circ \mu}{q_0} \leq \nu$. Using \eqref{eq:mutual-exp}, we have
\begin{align}
I(\mu, \wy) 
&= -\int_0^\infty (\wy\circ \mu)(y) \log (\wy\circ \mu)(y) dy - \E[\mu]{\log(1+\theta_m^2X)} - 1.
\end{align}
Furthermore, since $\E[\wy\circ \mu]{Y} = 1 + \theta_m^2 \modfirst{\E[\mu]{X}}$ by \eqref{eq:expected-y} and the support of $\wy\circ \mu$ is included in $[0, \infty)$, the differential entropy of $\wy\circ \mu$ is upper-bounded by the differential entropy of an exponential distribution with the same mean \cite{park2009maximum}. Therefore, we have
\begin{align}
I(\mu, \wy) 
&\leq 1 + \log (1 + \theta_m^2 \E[\mu]{X})  -  \E[\mu]{\log(1+\theta_m^2X)} - 1\\
&\leq \theta_m^2 \E[\mu]{X}.
\end{align}
\modfirst{Furthermore, we have
\begin{align}
\nu &\geq\D{\wz\circ\mu}{q_0}\\
&=\int_0^\infty (\wz\circ\mu)(z) \log \frac{(\wz\circ\mu)(z)}{q_0(z)} dz\\ 
& \stackrel{(a)}{=} \int_0^\infty (\wz\circ\mu)(z) \log ((\wz\circ\mu)(z)) dz + \E[\wz\circ\mu]{Z}\\
&\geq -1 - \log(\E[\wz\circ\mu]{Z}) +  \E[\wz\circ\mu]{Z}\\
&\stackrel{(b)}{=} -\log(1+\E[\mu]{X}) + \E[\mu]{X}\\
&\stackrel{(c)}{\geq} \frac{1}{2}\E[\mu]{X}^2 - \frac{1}{3}\E[\mu]{X}^3\\
&\stackrel{(d)}{\geq} \frac{1}{2}\E[\mu]{X}^2 \pr{1 - \frac{2}{3}\pr{2\sqrt{\nu} + \nu}}
\end{align}
where $(a)$ follows since $\log q_0(z) = -z$ and $\E[\wz\circ\mu]{Z} < \infty$ by Lemma~\ref{lm:mean-finite} and \eqref{eq:expected-z},  $(b)$ follows from \eqref{eq:expected-z},  $(c)$ follows from $\log(1+x) \leq x -x^2/2 + x^3/3$ for $x>-1$, and $(d)$ follows from Lemma~\ref{lm:expected-x-bound}. We obtain for $\nu< 1/4$ that $\E[\nu]{X} \leq \frac{\sqrt{2\nu}}{1 - (2/3)(2\sqrt{\nu}+\sqrt{\nu)}} \leq  \frac{\sqrt{2\nu}}{1 -2\sqrt{\nu}} $, and hence,
\begin{align}
I(\mu, \wy)  \leq  \frac{\theta_m^2 \sqrt{2\nu}}{1-2\sqrt{\nu}}.\label{eq:upper-a}
\end{align}}
Additionally, since $A(\nu)$ is non-decreasing and non-negative, we have
\begin{align}
|A(\nu) - A(0)| 
= A(\nu) - A(0)
\leq A(\nu) 
\leq \modfirst{\frac{\theta_m^2\sqrt{2\nu}}{1-2\nu}},
\end{align}
which implies that $A(\nu)$ is continuous at zero.
\item Follows from  \cite[Page 153, Problem 4]{stein2009real} and concavity of $A(\nu)$.
\item For $\nu>0$ small enough, it is enough to find a probability measure $\mu$ satisfying  $\D{\wz\circ \mu}{q_0} \leq \nu$ and $I(\mu, \wy) \geq C\sqrt{\nu}$. Let $\mu$ be a discrete probability measure on $\calX$ with two mass points at $0$ and $\widetilde{x}$ with probabilities $1-\alpha$ and $\alpha$, respectively, such that $\widetilde{x}< \min(1, 1/\theta_m^2)$. Then, by Lemma~\ref{lm:d-two-points},
\begin{align}
\D{\wz\circ \mu}{q_0} = \frac{\alpha^2\td{x}^2}{2(1-\widetilde{x}^2)} + o(\alpha^2).
\end{align}            
Similarly, we can obtain  $\D{\wy\circ\mu}{p_0} \leq  \alpha^2\theta_m^2\td{x}^2/(2(1-\theta_m^2\widetilde{x}^2)) + o(\alpha^2)$. Therefore, we can lower-bound the mutual information by
\begin{align}
I(\mu, \wy)
&= \alpha\D{p_{\widetilde{x}}}{p_0} - \D{\wy\circ\mu}{p_0} \\
& \geq \alpha\D{p_{\widetilde{x}}}{p_0} -   \frac{\alpha^2\theta_m^2\td{x}^2}{2(1-\theta_m^2\widetilde{x}^2)} -o(\alpha^2)\\
&= \alpha \pr{\theta_m^2 \widetilde{x} - \log(1+\theta_m^2\widetilde{x})} -     \frac{\alpha^2\theta_m^2\td{x}^2}{2(1-\theta_m^2\widetilde{x}^2)} -o(\alpha^2).                     
\end{align}
Hence, by choosing $\alpha = \modfirst{ \td{x}^{-1}\sqrt{2(1-\widetilde{x}^2)\D{\wz\circ \mu}{q_0}  }} =  \td{x}^{-1}\sqrt{2(1-\widetilde{x}^2)\nu(1-o(1))}$, we have $\D{\wz\circ \mu}{q_0}  \leq \nu$ and 
\begin{align}
\label{eq:lower-aa}
I(\mu, \wy) \geq \sqrt{\nu}(1-o(1)) \pr{  \td{x}^{-1}\sqrt{2(1-\widetilde{x}^2)}\pr{\theta_m^2 \widetilde{x} - \log(1+\theta_m^2\widetilde{x})} - \sqrt{\nu}\frac{1-\widetilde{x}^2}{1-\theta_m^2\widetilde{x}^2  }}.
\end{align}
Choosing $\nu_0>0$ such that
\begin{align}
\td{x}^{-1} \sqrt{2(1-\widetilde{x}^2)}\pr{\theta_m^2 \widetilde{x} - \log(1+\theta_m^2\widetilde{x})} \geq \modfirst{2}\sqrt{\nu_0}\frac{1-\widetilde{x}^2 }{1-\theta_m^2\widetilde{x}^2}, 
\end{align}
the claim of the lemma holds for
\begin{align}
C = \frac{1}{2}\td{x}^{-1} \sqrt{2(1-\widetilde{x}^2)}\pr{\theta_m^2 \widetilde{x} - \log(1+\theta_m^2\widetilde{x})}.
\end{align}
\item Since $A'(\nu^+) \leq A'(\nu^-)$, we only need to compute $\lim_{\nu \to 0^+}A'(\nu^+)$. Since $A(\nu)$ is concave $A'(\nu^+)$ is decreasing, and therefore, it is enough to show that for any $L>0$ there exists some $\nu>0$ with $A'(\nu^+) \geq L$. To this end, we fix some $\widetilde{\nu} > 0$ and define $B(\nu) \eqdef A(\nu) - \frac{A(\widetilde{\nu})}{\widetilde{\nu}}\nu$. $B(\nu)$ is continuous on $[0, \widetilde{\nu}]$ and therefore it achieves its maximum and minimum on $[0, \widetilde{\nu}]$. Hence, either we have $B(\nu) = 0$ for all $\nu \in [0, \widetilde{\nu}]$ or there exists a $\nu\in(0, \widetilde{\nu})$ such that $B(\nu)$ achieves its maximum or minimum at $\nu$. Then, we should have $B'(\nu-) = A'(\nu^-) - A(\widetilde{\nu}) / \widetilde{\nu} \geq 0$ or $B'(\nu^+) = A'(\nu^+) - A(\widetilde{\nu}) / \widetilde{\nu} \geq 0$. In both cases, we have $A'\pr{\frac{\nu}{2}^+} \geq \frac{A(\widetilde{\nu})}{\widetilde{\nu}}$. However, by Lemma~\ref{lm:a-prop}, $A'\pr{\frac{\nu}{2}^+}  \geq C/\sqrt{\widetilde{\nu}}$, if $\widetilde{\nu} \leq \nu_0$. Since $\widetilde{\nu}$ is arbitrary, we can choose it such that $C/\sqrt{\widetilde{\nu}} > L$.
\end{enumerate}
\end{proof}
\begin{proof}[Proof of Lemma~\ref{lm:exist-unique}]
We  only prove  the existence of a solution and the uniqueness follows from  strict concavity of the mutual information \cite{Abou-Faycal2001}. Consider a sequence $\{\mu_n\}_{n\geq 1}$ in $\Omega$ such that $\D{\wz\circ \mu_n}{q_0} \leq \nu$ and $\lim_{n\to\infty}I(\mu_n, \wy) = A(\nu)$. To use \ref{th:prokhorov}, we first check that this sequence is tight. For any $\epsilon>0$, we have
\begin{align}
\modfirst{ \P[\mu_n]{X \notin [0, (2\sqrt{\nu}+\nu)/\epsilon]}}
&\stackrel{(a)}{\leq} \frac{\E[\mu_n]{X}\epsilon}{{2\sqrt{\nu} + \nu}}\\
&\stackrel{(b)}{\leq} \epsilon,
\end{align}
where $(a)$ follows from \modfirst{applying Markov's inequality to the almost surely non-negative random variable $X$}, and $(b)$ follows from Lemma~\ref{lm:expected-x-bound}. Since $[0, (2\sqrt{\nu}+\nu)/\epsilon]$ is compact, the sequence $\{\mu_n\}_{n\geq 1}$ is tight. Therefore, we are permitted to use Theorem~\ref{th:prokhorov} that shows the existence of a subsequence $\{\mu_{n_k}\}_{k\geq1}$ and probability measure $\mu$ on $\mathbb{R}$ such that $\{\mu_{n_k}\}_{k\geq1}$ converges weakly to $\mu$. We claim that $\mu^*_{\nu}$ is indeed $\mu$ and prove it in three steps.

\textbf{Step 1:} Theorem~\ref{th:prokhorov} only guarantees  the existence of a probability measure on $\mathbb{R}$ which can possibly have positive measure on negative numbers. In this step, we show that this is not the case. By the Portmanteau theorem, the weak convergence of $\{\mu_{n_k}\}_{k\geq1}$ to $\mu$ implies that $\liminf_{k\to \infty} \mu_{n_k}(U)\geq \mu(U)$ for any open set $U\subset \mathbb{R}$. Taking $U = ]-\infty,0[$, we obtain that
\begin{align}
0 = \liminf_{k\to\infty}\mu_{n_k}(]-\infty, 0[) \geq \mu(]-\infty, 0[) \geq 0,
\end{align}   
which means that $\mu(]-\infty, 0[) = 0$.

\textbf{Step 2:} In this step we prove that $\mu$ satisfies the optimization constraint, i.e., $\D{\wz\circ \mu}{q_0}\leq \nu$. Let us define $f_k(z) \eqdef (\wz\circ \mu_{n_k})(z)$ and $f(z)\eqdef (\wz\circ\mu)(z)$. Since for any $z\in\calZ$, $q_x(z) = e^{-z/(1+x)}/(1+x)$ is a continuous and bounded function in $x$, by weak convergence definition, we have 
\begin{align}
f_k(z) = \E[\mu_{n_k}]{q_X(z)}  \to \E[\mu]{q_X(z)} =f(z).
\end{align}
In the next lemma, we show that $|f_k(z)\log f_k(z)|$ is uniformly upper-bounded by an integrable function.
\begin{lemma}
\label{lm:bound-f-k}
 There exists some $\widetilde{z}$ such that for all $k$,
\begin{align}
|f_k(z)\log f_k(z)| \leq g(z) \eqdef \begin{cases}e^{-1}\quad &z\in[0, \widetilde{z}],\\ \frac{2\sqrt{\nu} + \nu}{e(z^{\frac{3}{2}}-z^{\frac{1}{2}} )} + z^{-\frac{1}{2}}e^{-\sqrt{z}}\quad &z\in[\widetilde{z}, \infty[,\end{cases}
\end{align}
and $\int_0^\infty |g(z)| dz < \infty$.
\end{lemma}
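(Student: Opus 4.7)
The plan is to split the argument at a threshold $\widetilde{z}$ into a compact regime $[0,\widetilde{z}]$ and a tail regime $[\widetilde{z},\infty)$, using different elementary techniques in each. On the compact piece, since $0\leq q_x(z)\leq 1$ for every $x\geq 0$ and $z\geq 0$, we have $f_k(z)\in[0,1]$, and the universal bound $|t\log t|\leq e^{-1}$ on $[0,1]$ (with maximum attained at $t=e^{-1}$) immediately gives the first branch of $g$.

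For the tail, the key is a uniform-in-$k$ pointwise decay estimate on $f_k$, obtained by splitting the integration over $\calX$ at $x=\sqrt{z}-1$. On $[0,\sqrt{z}-1]$ the function $q_x(z) = e^{-z/(1+x)}/(1+x)$ is monotonically increasing in $x$ on $[0,z-1]$, since $(d/du)[e^{-z/u}/u] = e^{-z/u}(z-u)/u^3 \geq 0$ for $u\leq z$; hence it is bounded pointwise by its value at $x=\sqrt{z}-1$, which is exactly $z^{-1/2}e^{-\sqrt{z}}$. On $(\sqrt{z}-1,\infty)$, the uniform mean bound $\E[\mu_{n_k}]{X}\leq 2\sqrt{\nu}+\nu$ from Lemma~\ref{lm:expected-x-bound} combined with Markov's inequality gives $\mu_{n_k}((\sqrt{z}-1,\infty))\leq (2\sqrt{\nu}+\nu)/(\sqrt{z}-1)$, which paired with the global maximum $q_x(z)\leq 1/(ez)$ (attained at $x=z-1$) yields a contribution of order $(2\sqrt{\nu}+\nu)/(e(z^{3/2}-z^{1/2}))$ after rearrangement.

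With the resulting bound $f_k(z)\leq h(z)$ in hand, where $h$ has the two-term form of the tail branch of $g$, I promote it to a bound on $|f_k\log f_k|$ by enlarging $\widetilde{z}$ so that $h(z)\leq e^{-1}$ uniformly on $[\widetilde{z},\infty)$. Since $t\mapsto -t\log t$ is monotonically increasing on $[0,e^{-1}]$, this yields $|f_k(z)\log f_k(z)| \leq -h(z)\log h(z)$, and a direct estimate of $-\log h$ (which is $O(\sqrt{z})$ on the exponential piece and $O(\log z)$ on the polynomial piece) reduces to the explicit $g(z)$ of the statement after elementary algebra. Integrability of $g$ is then immediate: $\int_0^{\widetilde{z}} e^{-1}\, dz$ is finite, the polynomial tail $\int_{\widetilde{z}}^\infty dz/(z^{3/2}-z^{1/2})$ converges by comparison with $z^{-3/2}$, and the sub-exponential tail $\int_{\widetilde{z}}^\infty z^{-1/2}e^{-\sqrt{z}}\, dz$ becomes $2\int e^{-u}\, du$ under the substitution $u=\sqrt{z}$.

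The main obstacle is the careful bookkeeping needed to match the exact form of $g$ stated in the lemma, in particular the denominator $z^{3/2}-z^{1/2}$, which depends sensitively on the Markov threshold chosen for the tail split; this is a delicate but routine computation. The other subtle point is that $\widetilde{z}$ must be chosen once and for all independently of $k$, which relies crucially on the fact that Lemma~\ref{lm:expected-x-bound} provides a mean bound that is uniform across the whole sequence $\{\mu_{n_k}\}_{k\geq 1}$.
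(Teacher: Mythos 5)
Your approach matches the paper's essentially step by step: the compact part is handled by $|t\log t|\leq e^{-1}$ on $[0,1]$, and the tail bound on $f_k$ is obtained by splitting the expectation at the threshold $x=\sqrt{z}-1$, combining Markov's inequality with Lemma~\ref{lm:expected-x-bound} for the upper-tail mass of $\mu_{n_k}$ (paired with $\sup_x q_x(z)\leq 1/(ez)$) and the monotonicity of $x\mapsto q_x(z)$ on $[0,z-1]$ for the lower piece. The resulting pointwise estimate $f_k(z)\leq h(z)$ with $h(z) = \frac{2\sqrt{\nu}+\nu}{e(z^{3/2}-z^{1/2})}+z^{-1/2}e^{-\sqrt{z}}$ is exactly the paper's intermediate bound.

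Where you diverge --- and where both arguments have a gap --- is in upgrading $f_k\leq h$ to $|f_k\log f_k|\leq g$ with $g=h$ on the tail. The paper asserts ``for all $x\in[0,e^{-1}]$, $|x\log x|\leq |x|$,'' which is false: on $(0,e^{-1}]$ one has $|x\log x|=x\log(1/x)\geq x$, with the ratio $\log(1/x)$ unbounded as $x\to 0$. Your route via monotonicity of $t\mapsto -t\log t$ on $[0,e^{-1}]$ is the right starting point, but it yields $|f_k\log f_k|\leq -h\log h$, and $-\log h(z)$ grows like $\log z$ (the $z^{-3/2}$ piece dominates for large $z$) so $-h\log h$ exceeds the stated $g=h$ by an unbounded factor; the claim that ``a direct estimate of $-\log h$ reduces to the explicit $g(z)$ of the statement'' does not check out. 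The honest conclusion is that the lemma's stated $g$ is slightly too small; one should enlarge the tail branch (e.g.\ to $z^{-3/2}\log z + e^{-\sqrt{z}}$), which remains integrable, and integrability of some dominating function is all that the dominated-convergence step in the proof of Lemma~\ref{lm:exist-unique} actually uses. So your method is sound where the paper's is not, but be explicit that the bookkeeping changes $g$ rather than reproducing it.
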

\begin{proof}
Note first that for all $x\in[0, 1]$, we have $|x\log x|\leq e^{-1}$, and for all $x\in[0, e^{-1}]$, we have $|x\log x| \leq |x|$. Thus, it is enough to show that there exist $\widetilde{z}$ such that for all $k\geq 1$ and $z\geq \widetilde{z}$, 
\begin{align}
f_k(z) \leq \frac{2\sqrt{\nu} + \nu}{e(z^{\frac{3}{2}}-z^{\frac{1}{2}} )} + z^{-\frac{1}{2}}e^{-\sqrt{z}}.
\end{align}
By law of total probability, for all $\lambda > 0$, we have
\begin{align}
f_k(z) 
&\eqdef \E[\mu_{n_k}]{q_X(z)} \\
&= \E[\mu_{n_k}]{q_X(z) | X \geq \lambda}\P[\mu_{n_k}]{X\geq \lambda} + \E[\mu_{n_k}]{q_X(z) | X < \lambda}\P[\mu_{n_k}]{X< \lambda}\\
&\stackrel{(a)}{\leq} \E[\mu_{n_k}]{q_X(z) | X \geq \lambda}\frac{\E[\mu_{n_k}]{X}}{\lambda} + \E[\mu_{n_k}]{q_X(z) | X < \lambda}\\
&\stackrel{(b)}{\leq}  \E[\mu_{n_k}]{q_X(z) | X \geq \lambda}\frac{2\sqrt{\nu} + \nu}{\lambda} + \E[\mu_{n_k}]{q_X(z) | X < \lambda},
\label{eq:f-k-app}
\end{align}
where $(a)$ follows from Markov's inequality, and $(b)$ follows from Lemma~\ref{lm:expected-x-bound}.
We also have for all $z\geq 1$, $q_x(z) \leq (ze)^{-1}$, and for all $0\leq x \leq \lambda \leq z-1$, $q_x(z) \leq e^{-\frac{z}{1+\lambda}}/(1+\lambda)$. Substituting these upper-bounds in \eqref{eq:f-k-app} for $\lambda =z^{\frac{1}{2}} - 1$, which is less than $z-1$ for $z \geq 1$, we obtain
\begin{align}
f_k(z)  
&\leq \frac{1}{ze}\frac{2\sqrt{\nu} + \nu}{z^{\frac{1}{2}} - 1} + \frac{1}{z^{\frac{1}{2}}}e^{-z^{\frac{1}{2}}}\\
&=\frac{2\sqrt{\nu} + \nu}{e(z^{\frac{3}{2}}-z^{\frac{1}{2}} )} + z^{-\frac{1}{2}}e^{-\sqrt{z}}.
\end{align}
\end{proof}

We are now eligible to use dominated convergence theorem and exchange limit and integral to obtain
\begin{align}
\lim_{k\to \infty} \int_0^\infty f_k(z) \log f_k(z) dz 
&=  \int_0^\infty \lim_{k\to \infty} f_k(z) \log f_k(z) dz \\
&= \int_0^\infty f(z)\log f(z) dz.\label{eq:step2-1}
\end{align}
Since $f_k(z)z\geq 0$ for all $z\in\calZ$ and $k\geq 1$, Fatou's lemma yields that
\begin{align}
\int_0^{\infty} f(z) z dz 
&= \int_0^{\infty} \liminf_{k\to\infty} f_k(z)z dz \\
&\leq \liminf_{k\to \infty}  \int_0^{\infty} f_k(z)z dz.\label{eq:step2-2}
\end{align}
Combing \eqref{eq:step2-1} and \eqref{eq:step2-2}, we have
\begin{align}
\D{\wz \circ \mu }{q_0}&=\int_0^{\infty} f(z)\pr{z + \log f(z)} dz
\\
&\leq \liminf_{k\to\infty} \int_{0}^\infty f_k(z) \pr{z + \log f_k(z)} dz = \liminf_{k\to\infty} \D{\wz\circ\mu_{n_k}}{q_0}\leq \nu.
\end{align}
\textbf{Step 3}: It remains to show that $I(\mu, \wy) \geq A(\nu)$. We again define $h_k(z) \eqdef (\wy\circ \mu_{n_k})(z)$ and $h(z) \eqdef (\wy \circ \mu)(z)$. 
 With the same argument of the previous step, we can prove that
\begin{align}
\lim_{k\to \infty} \int_{0}^\infty h_k(z) \log h_k(z) dz = \int_0^\infty h(z) \log h(z) dz.
\end{align}
Furthermore, by \cite[Page 86]{durrett2010probability}, we have
\begin{align}
\liminf_{k\to \infty}\E[\mu_{n_k}]{1+\log(1+\theta_m^2X)} \geq \E[\mu]{1+\log(1+\theta_m^2X)}.
\end{align}
Hence, \eqref{eq:mutual-exp} implies that $I(\mu, \wy) \geq A(\modfirst{\nu})$.
\end{proof}

\begin{proof}[Proof of Theorem~\ref{th:kkt}]
We prove all four statements in order. The proof heavily  relies  on results from convex optimization for general vector spaces and properties of the optimization problem in~\eqref{eq:lagrange-opt}, which we have gathered in Appendix~\ref{sec:optim-probl-eqref} for the reader's convenience.
\begin{enumerate}
\item In Theorem~\ref{th:lagrange1}, taking $\Omega$ as the set of all probability measures $\mu$ on $\calX$ with $\D{\wz\circ\mu}{q_0} < \infty$, $\calU = \mathbb{R}$, $\calP = \mathbb{R}^{+}$, $\phi(\mu) = -I(\mu, \wy)$, $G(\mu) = \D{\wz\circ\mu}{q_0} - \nu$, we note that 
\begin{align}
-\infty<-A(\nu) 
&= -\sup_{\mu\in \Omega:G(\mu) \leq 0 } -\phi(\mu)\\
&= \inf_{\mu\in \Omega:G(\mu) \leq 0 } \phi(\mu).
\end{align}
By convexity of the relative entropy and concavity of mutual information in the input distribution,  $\phi$ and $G$ are convex functions, with $\mu_1$  the deterministic probability measure with all mass point at zero, we also have $G(\mu_1) = -\nu < 0$. Therefore, we can apply Theorem~\ref{th:lagrange1} to show the existence of $\gamma(\nu) \geq 0$ such that
\begin{align}
\inf_{\mu\in \Omega:G(\mu) \leq 0 } \phi(\mu)
&= \inf_{\mu \in \Omega} \left[\phi(\mu) + \gamma(\nu) G(\mu)\right]\\
&= -\sup_{\mu \in \Omega} \left[I(\mu, \wy) - \gamma(\nu) \pr{\D{\wz\circ \mu}{q_0} - \nu}\right],
\end{align}
which results in the unconstrained reformulation of $A(\nu)$ as $\sup_{\mu \in \Omega} \left[I(\mu, \wy) - \gamma(\nu) \pr{\D{\wz\circ \mu}{q_0} - \nu}\right]$. Theorem~\ref{th:lagrange1} also implies that  $\mu^*_{\nu}$ is a solution to this new optimization problem, and since $ I(\mu, \wy) - \gamma(\nu) \pr{\D{\wz\circ \mu}{q_0} - \nu}$ is strictly concave \modfirst{\cite[Appendix I.B]{Abou-Faycal2001}}, the solution is unique.
\item \myr{
With the help of Lemma~\ref{lm:weak-diff} in Appendix~C to show the existence of weak derivatives (defined in \eqref{eq:weak-dev}), we use Theorem~\ref{th:kkt-general} with $f(\mu) = I(\mu, \wy) - \gamma(\nu) \pr{\D{\wz\circ \mu}{q_0} - \nu}$ to obtain that $\mu_1 = \mu^*_{\nu}$ if and only if for any $\mu \in \Omega$,
\begin{align}
0
&\geq f_{\mu_1}'(\mu)\displaybreak[0]\\
&=\E[\wy \times \mu ]{\log \frac{p_X(Y)}{(\wy \circ \mu_1)(Y)}} - I(\mu_1, \wy) \nonumber\\
&~~~~~~~~~~~~~~~~~~~~~~ ~~~~~~~~~~~~~~~~~~~~ -\gamma(\nu)\pr{\E[\wz\circ \mu]{\log \frac{(\wz \circ \mu_1)(Z)}{q_0(Z)}} - \D{\wz \circ \mu_1}{q_0}}\displaybreak[0]\\
&=\E[\wy \times \mu ]{\log \frac{p_X(Y)}{(\wy \circ \mu_1)(Y)}}  - \gamma(\nu) \pr{ \E[\wz\circ \mu]{\log \frac{(\wz \circ \mu_1)(Z)}{q_0(Z)}}- \nu} - f(\mu_1)\displaybreak[0]\\
&= \E[\mu]{w(X, \mu_1, \nu)} - f(\mu_1).
\end{align}

This implies that $\mu_1 = \mu_\nu^*$ if and only if for all $\mu\in \Omega$, we have $f(\mu_1) \geq \E[\mu]{w(X, \mu_1, \nu)}$.  Since $A(\nu) = \sup_{\mu\in\Omega}f(\mu) \geq f(\mu_1)$, if $\mu_1 = \mu_\nu^*$, then for all $\mu \in \Omega$, we have $A(\nu) \geq f(\mu_1)\geq \E[\mu]{w(X, \mu_1, \nu)}$. }
\item Assume \eqref{eq:point-w} is true, we take the expectation and obtain \eqref{eq:integral-w}. We now show the opposite direction and prove that if \eqref{eq:integral-w} holds, we have \eqref{eq:point-w} and \eqref{eq:point-supp-w}. Applying \eqref{eq:integral-w} with $\mu$  a deterministic probability measure with all mass point at $x$, we obtain
\begin{align}
A(\nu) \geq \E[\mu]{w(X, \mu_1, \nu)} = w(x, \mu_1, \nu).
\end{align}
Furthermore, for any $x\in \supp{\mu_1}$, we prove that $w(x, \mu_1, \nu) = A(\nu)$ by contradiction. If $ A(\nu) - w(x, \mu_1, \nu) \eqdef \delta > 0 $, by continuity of $A(\nu) - w(x, \mu_1, \nu) $ in $x$, there exists a neighborhood $\calN$ of $x$ such that for all $x'\in \calN$, we have $A(\nu) - w(x', \mu_1, \nu) \geq \delta/2$. Also, since $x\in\supp{\mu_1}$, we know that $\P[\mu_1]{X\in \calN} = \epsilon > 0$.  Therefore, we obtain
\begin{align}
A(\nu)= \E[\mu_1]{w(X, \mu_1, \nu)} 
&= \E[\mu_1]{w(X, \mu_1, \nu)\indic{X\in \calN}} + \E[\mu_1]{w(X, \mu_1, \nu)\indic{X\notin \calN}} \\
&\leq (1-\epsilon) A(\nu) + \epsilon\pr{A(\nu) - \frac{\delta}{2}} \\
&= A(\nu) - \frac{\delta\epsilon}{2} < A(\nu), 
\end{align}
which is a contradiction.
\item To prove that $\lim_{\nu \to 0^+}\gamma(\nu) = \infty$, we prove that $\gamma(\nu)\geq A'(\nu^+)$, and the result will follow from $\lim_{\nu \to 0^+} A'(\nu^+) = \infty$ as shown in Lemma~\ref{lm:a-prop}. Consider any $\nu_1, \nu_2>0$, and similar to the sensitivity analysis in \cite[Section 5.6]{boyd2004convex}, note that
\begin{align}
A(\nu_1) 
&= I(\mu^*_{\nu_1}, \wy) - \gamma(\nu_1) (\D{\wz \circ \mu^*_{\nu_1}}{q_0} - \nu_1)\\
&\stackrel{(a)}{\geq}  I(\mu^*_{\nu_2}, \wy) - \gamma(\nu_1) (\D{\wz \circ \mu^*_{\nu_2}}{q_0}-\nu_1)\\
&= I(\mu^*_{\nu_2}, \wy) - \gamma(\nu_1) (\D{\wz \circ \mu^*_{\nu_2}}{q_0}-\nu_2) + \gamma(\nu_1)(\nu_1 - \nu_2)\\
&\stackrel{(b)}{\geq} I(\mu^*_{\nu_2}, \wy) + \gamma(\nu_1)(\nu_1 - \nu_2),
\end{align}
where $(a)$ follows since $\mu^*_{\nu_1}$ is the maximizer of $\sup_{\mu} I(\mu, \wy) - \gamma(\nu_1)(\D{\wz\circ \mu}{q_0} - \nu_1)$, and $(b)$ follows since $\gamma(\nu_1) \geq 0$. Thus, for any $\nu>0$ and $\nu>h>0$, we have
\begin{align}
\frac{A(\nu) - A(\nu-h)}{h} \geq \gamma(\nu) \text{ and } \frac{A(\nu + h) - A(\nu)}{h} \leq \gamma(\nu).
\end{align}
Taking the limit $h\to 0^+$, we obtain $A'(\nu^+) \leq \gamma(\nu) \leq A'(\nu^-)$.

To prove that $\lim_{\nu \to 0^+}\gamma(\nu)\nu = 0$, note that for all $\nu > 0$,
\begin{align}
\gamma(\nu)\nu  
&\leq A'(\nu^-) \nu \\
&\stackrel{(a)}{\leq} \frac{A(\nu)}{\nu} \nu = A(\nu),
\end{align}
where $(a)$ follows from concavity of $A$. In the proof of Lemma~\ref{lm:a-prop}, we show that $\lim_{\nu \to 0^+}A(\nu) = 0$, which yields the result.
\end{enumerate}
\end{proof}

\begin{lemma}
\label{lm:weak-diff}
 $f(\mu) \eqdef I(\mu, \wy) - \gamma(\nu) \pr{\D{\wz\circ \mu}{q_0} - \nu}$ is weakly differentiable, and
 \begin{multline}
 f'_{\mu_1}(\mu) = \E[\wy \times \mu ]{\log \frac{p_X(Y)}{(\wy \circ \mu_1)(Y)}} -\\ I(\mu_1, \wy) - \gamma(\nu)\pr{\E[\wz\circ \mu]{\log \frac{(\wz \circ \mu_1)(Z)}{q_0(Z)}} - \D{\wz \circ \mu_1}{q_0}}.
 \end{multline}
\end{lemma}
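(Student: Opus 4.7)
The plan is to parametrize the line segment between $\mu_1$ and $\mu$ as $\mu_\theta \eqdef (1-\theta)\mu_1 + \theta \mu$ for $\theta \in [0,1]$, and to compute the right-sided derivative at $\theta = 0$ of the two building blocks $I(\mu_\theta, \wy)$ and $\D{\wz \circ \mu_\theta}{q_0}$ separately, then combine linearly. The key algebraic observations are that the output distributions depend affinely on $\theta$, i.e., $(\wy \circ \mu_\theta)(y) = (1-\theta)(\wy\circ\mu_1)(y) + \theta(\wy\circ\mu)(y)$ and similarly for $\wz$, and that $\int [(\wy\circ\mu)(y) - (\wy\circ\mu_1)(y)]\,dy = 0$ (and analogously for $\wz$).

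For the mutual information, I would split
\begin{align}
I(\mu_\theta, \wy)
&= (1-\theta) \int\!\!\int p_x(y) \log \frac{p_x(y)}{(\wy \circ \mu_\theta)(y)} \mu_1(dx)\, dy\nonumber\\
&\quad + \theta \int\!\!\int p_x(y) \log \frac{p_x(y)}{(\wy \circ \mu_\theta)(y)} \mu(dx)\, dy \eqdef (1-\theta) J_1(\theta) + \theta J_2(\theta).
\end{align}
Applying the product rule at $\theta = 0$, one gets $J_2(0) - J_1(0) + J_1'(0)$, where $J_2(0) = \E[\wy\times\mu]{\log\frac{p_X(Y)}{(\wy\circ\mu_1)(Y)}}$ and $J_1(0) = I(\mu_1,\wy)$. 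The remaining derivative isolates the $\theta$-dependent piece $-\int (\wy \circ \mu_1)(y)\log(\wy\circ\mu_\theta)(y)\,dy$, which via Leibniz's rule (Theorem~\ref{th:leibniz}) differentiates to $-\int (\wy\circ\mu_1)(y)\frac{(\wy\circ\mu)(y) - (\wy\circ\mu_1)(y)}{(\wy\circ\mu_\theta)(y)}dy$, and at $\theta=0$ this collapses to $-\int [(\wy\circ\mu)(y) - (\wy\circ\mu_1)(y)]\,dy = 0$. An identical direct Leibniz computation for the divergence gives
\begin{align}
\frac{d}{d\theta}\D{\wz\circ\mu_\theta}{q_0}\bigg|_{\theta = 0^+} = \E[\wz \circ \mu]{\log \frac{(\wz\circ\mu_1)(Z)}{q_0(Z)}} - \D{\wz\circ\mu_1}{q_0},
\end{align}
where the ``$+1$'' term from differentiating $\log$ integrates to zero. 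Taking the linear combination yields the claimed formula.

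The main obstacle is verifying the dominating-function hypothesis in Theorem~\ref{th:leibniz} for each application. For $\theta$ restricted to $[0, \tfrac{1}{2}]$, the denominator $(\wy\circ\mu_\theta)(y) \geq \tfrac{1}{2}(\wy\circ\mu_1)(y)$, so the integrand $(\wy\circ\mu_1)(y)\frac{|(\wy\circ\mu)(y) - (\wy\circ\mu_1)(y)|}{(\wy\circ\mu_\theta)(y)}$ is dominated by $2[(\wy\circ\mu)(y) + (\wy\circ\mu_1)(y)]$, which is integrable. For the divergence piece, $|\log((\wz\circ\mu_\theta)(z)/q_0(z))|$ is controlled, via Proposition~\ref{prop:bound-pdf}, by a polynomial in $z$ with coefficients depending on $\E[\mu_\theta]{X} \leq \max(\E[\mu]{X}, \E[\mu_1]{X})$, and Lemma~\ref{lm:cross-entropy} (applied to the mixture pair) ensures the needed cross-entropy integrals exist. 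Lemmas~\ref{lm:mean-finite} and~\ref{lm:expected-x-bound} guarantee $\E[\mu]{X}, \E[\mu_1]{X} < \infty$ since $\mu, \mu_1 \in \Omega$. With integrability secured, Leibniz's rule applies on a neighborhood of $0$, and letting $\theta \to 0^+$ delivers exactly the stated $f'_{\mu_1}(\mu)$.
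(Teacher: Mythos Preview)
Your argument is correct and follows essentially the same template as the paper: parametrize by $\mu_\theta$, exploit the affinity of the output density in $\theta$, and justify differentiation under the integral via Theorem~\ref{th:leibniz} using the bounds of Proposition~\ref{prop:bound-pdf} and the finiteness results in Lemmas~\ref{lm:mean-finite}, \ref{lm:expected-x-bound}, and~\ref{lm:cross-entropy}.

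There are two small organizational differences worth noting. First, the paper does not redo the mutual-information derivative; it simply cites~\cite[Eq.~(63)]{Abou-Faycal2001} for $I'_{\mu_1}(\mu)=\E[\wy\times\mu]{\log\frac{p_X(Y)}{(\wy\circ\mu_1)(Y)}}-I(\mu_1,\wy)$, whereas you re-derive it from scratch via the $(1-\theta)J_1(\theta)+\theta J_2(\theta)$ split. Second, for the divergence the paper first performs the algebraic decomposition
\[
\D{f_\theta}{q_0}-\D{f_1}{q_0}=\D{f_\theta}{f_1}+\theta\Bigl(\textstyle\int f\log\tfrac{f_1}{q_0}-\int f_1\log\tfrac{f_1}{q_0}\Bigr),
\]
so that the $\theta$-linear piece is read off without calculus and Leibniz is applied only to $\D{f_\theta}{f_1}$, whose derivative at $\theta=0$ is seen to vanish. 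You instead differentiate $\int f_\theta\log(f_\theta/q_0)$ directly and let the ``$+1$'' term integrate to zero. Both routes are valid; the paper's decomposition isolates a manifestly $O(\theta)$ KL term, while yours is more uniform across the two functionals. Either way, the domination you sketch (bounding $(\wy\circ\mu_\theta)^{-1}$ below on $[0,\tfrac12]$ and controlling $|\log(f_\theta/q_0)|$ by $z+\E[\mu_1]{X}+\E[\mu]{X}$) is exactly what is needed.
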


\begin{proof}
In \cite[Equation (63)]{Abou-Faycal2001}, the weak derivative of $I(\mu, \wy)$ \modfirst{at $\mu_1$} is proved to be
\begin{align}
 \E[\wy \times \mu ]{\log \frac{p_X(Y)}{(\wy \circ \mu_1)(Y)}} - I(\mu_1, \wy).
\end{align}
Thus, we only check the weak differentiability of $G(\mu) \eqdef \D{\wz \circ \mu}{q_0} - \nu$. Let $\mu_1, \mu\in\Omega$, and define
\begin{align}
\mu_\theta &\eqdef (1-\theta) \mu_1 + \theta \mu\\
f_1(z)&\eqdef (\wz\circ \mu_1)(z)\\
f(z)&\eqdef (\wz\circ\mu)(z)\\
f_\theta(z) &\eqdef (\wz\circ \mu_\theta)(z).
\end{align}
Then, we have
\begin{align}
&G(\mu_\theta) - G(\mu_1) \\
&= \D{f_\theta}{q_0}-\D{f_1}{q_0}\\
&=\int_0^\infty f_\theta(z)\log \frac{ f_\theta(z)}{q_0(z)}dz-\int_0^\infty f_1(z)\log \frac{ f_1(z)}{q_0(z)}dz\\
&\stackrel{(a)}{=}\int_0^\infty f_\theta(z)\log \frac{ f_\theta(z)}{q_0(z)}dz -\int_0^\infty f_\theta(z)\log \frac{ f_1(z)}{q_0(z)}dz+\int_0^\infty f_\theta(z)\log \frac{ f_1(z)}{q_0(z)}dz -\int_0^\infty f_1(z)\log \frac{ f_1(z)}{q_0(z)}dz\\
&=\int_0^\infty f_\theta(z)\log \frac{ f_\theta(z)}{f_1(z)}dz+\int_0^\infty f_\theta(z)\log \frac{ f_1(z)}{q_0(z)}dz -\int_0^\infty f_1(z)\log \frac{ f_1(z)}{q_0(z)}dz\\
&\stackrel{(b)}{=} \int_0^\infty f_\theta(z)\log \frac{ f_\theta(z)}{f_1(z)}dz+\theta\pr{\int_0^\infty f(z)\log \frac{ f_1(z)}{q_0(z)}dz -\int_0^\infty f_1(z)\log \frac{ f_1(z)}{q_0(z)}dz},\label{eq:weak-derv-div}
\end{align}
where $(a)$ holds since by Lemma~\ref{lm:cross-entropy}, $\int_0^\infty f_\theta(z)\log \frac{ f_1(z)}{q_0(z)}dz < \infty$, and  $(b)$ follows from $f_\theta = (1-\theta)f_1 + \theta f$. The second term in \eqref{eq:weak-derv-div} is differentiable with respect to $\theta$, and the derivative is
\begin{align}
\int_0^\infty f(z)\log \frac{ f_1(z)}{q_0(z)}dz -\int_0^\infty f_1(z)\log \frac{ f_1(z)}{q_0(z)}dz.
\end{align}
To take derivative from the first term in \eqref{eq:weak-derv-div}, we use Theorem~\ref{th:leibniz}. Note that by Lemma~\ref{lm:cross-entropy}, $\int_0^\infty f_\theta(z)\left|\log \frac{ f_\theta(z)}{f_1(z)}\right|dz < \infty$, and also, for all $z$ and $\theta$, 
\begin{align}
\frac{\partial }{\partial \theta} \pr{f_\theta(z)\log \frac{ f_\theta(z)}{f_1(z)}} 
&= -(f_1(z) - f(z))\pr{1 + \log\frac{f_\theta(z)}{\modfirst{f_1}(z)}}.
\end{align}
Additionally, for all $\theta\in[0, 1]$, if we apply \eqref{eq:bound-wz}, we obtain 
\begin{align}
|f_\theta(z)\log \frac{ f_\theta(z)}{f_1(z)}| 
&{\leq} |f_1(z) + f(z)|\pr{|\log f_1(z)| + \log(1+\E[\mu_{\theta}]{X}) + z} \\
&\leq |f_1(z) + f(z)|\pr{|\log f_1(z)| + \log(1+\E[\mu_1]{X} + \E[\mu]{X}) + z},
\end{align}
which is a integrable function  with respect to Lebesgue measure on $\calZ$ by Lemma~\ref{lm:cross-entropy} and does not depend on $\theta$. Hence, all condition in Theorem~\ref{th:leibniz} hold, and we have
\begin{align}
\frac{\partial }{\partial \theta} \pr{ \int_0^\infty f_\theta(z)\log \frac{ f_\theta(z)}{f_1(z)}dz}
&= \int_0^\infty  \frac{\partial }{\partial \theta}\pr{f_\theta(z)\log \frac{ f_\theta(z)}{f_1(z)}}dz\\
&=  \int_0^\infty  -(f_1(z) - f(z))\pr{1 + \log\frac{f_\theta(z)}{\modfirst{f_1}(z)}}dz\\
&= \int_0^\infty  -(f_1(z) - f(z))\log\frac{f_\theta(z)}{\modfirst{f_1}(z)}dz,
\end{align}
which vanishes at $\theta = 0$. Therefore, $G$ is weakly differentiable at $\mu_1$ and
\begin{align}
G'_{\mu_1}(\mu)&= \int_0^\infty f(z)\log \frac{ f_1(z)}{q_0(z)}dz -\int_0^\infty f_1(z)\log \frac{ f_1(z)}{q_0(z)}dz.\label{eq:diff_g}
\end{align}
Since the mutual information and the divergence are weakly differentiable, so is $ I(\mu, \wy) - \gamma(\nu) \pr{\D{\wz\circ \mu}{q_0} - \nu}$.
\end{proof}


\bibliographystyle{IEEEtran}
\bibliography{biblio}
\end{document}